\author{Kacper Pluta}
\affiliation{%
	\institution{Technion - Israel Institute of Technology}
	\city{Haifa}
	\country{Israel}}
\email{kacper.pluta@gmail.com}
\author{Michal Edelstein}
\affiliation{%
	\institution{Technion - Israel Institute of Technology}
	\city{Haifa}
	\country{Israel}}
\email{smichale@cs.technion.ac.il}
\author{Amir Vaxman}
\affiliation{%
	\institution{Utrecht University}
	\city{Utrecht}
	\country{The Netherlands}}
\email{A.Vaxman@uu.nl}
\author{Mirela Ben-Chen} 
\affiliation{%
	\institution{Technion - Israel Institute of Technology}
	\city{Haifa}
	\country{Israel}}
\email{mirela@cs.technion.ac.il}
\newcommand\blfootnote[1]{%
	\begingroup
	\renewcommand\thefootnote{}\footnote{#1}%
	\addtocounter{footnote}{-1}%
	\endgroup
}
\definecolor{applegreen}{rgb}{0.55, 0.71, 0.0}
\definecolor{ao(english)}{rgb}{0.0, 0.5, 0.0}
\newcommand{\REV}[1]{{ #1}}
\newcommand{\tin}{\!\in\!}
\newcommand{\invf}[1]{\frac{1}{#1}}
\newcommand{\xR}{\mathbb{R}}
\newcommand{\xC}{\mathbb{C}}
\newcommand{\xZ}{\mathbb{Z}}
\newcommand{\xI}{\mathbb{I}}
\newcommand{\sM}{\mathcal{M}}
\newcommand{\sU}{\mathcal{U}}
\newcommand{\sH}{\mathcal{H}}
\newcommand{\s}[1]{\mathtt{#1}}
\newcommand{\mM}{M}
\newcommand{\mV}{\mathcal{V}}
\newcommand{\mE}{\mathcal{E}}
\newcommand{\mEi}{\mathcal{E}^I}
\newcommand{\mF}{\mathcal{F}}
\newcommand{\nov}{n}
\newcommand{\noei}{|\mEi|}
\newcommand{\nof}{m}
\newcommand{\tpm}{T_p\mathcal{M}} 
\newcommand{\h}{t} 
\newcommand{\kmin}{\kappa_1}
\newcommand{\kmax}{\kappa_2}
\newcommand{\kn}{\kappa_n}
\newcommand{\kgauss}{K}
\newcommand{\kmean}{H}
\newcommand{\shapeop}{S}
\newcommand{\dmin}{d_1} 
\newcommand{\dmax}{d_2} 
\newcommand{\damin}{\bar{d}} 
\newcommand{\dmindmax}{\begin{psmallmatrix}\dmin & \dmax\end{psmallmatrix}}
\newcommand{\skminskmax}{\begin{psmallmatrix}\skmin & 0 \\ 0 & \skmax\end{psmallmatrix}}
\newcommand{\maxel}{\eta}     
\newcommand{\skmin}{\tilde{\kappa}_{1,\epsilon}} 
\newcommand{\skmax}{\tilde{\kappa}_{2,\epsilon}} 
\newcommand{\skn}{\tilde{\kappa}_{n,\epsilon}} 
\newcommand{\sshapeop}{\tilde{\shapeop}_\epsilon} 
\newcommand{\ar}{\alpha} 
\newcommand{\shp}{{\sH_\s{p}}}
\newcommand{\pf}{z}  
\newcommand{\uv}[1]{\begin{psmallmatrix}U_{#1} & V_{#1}\end{psmallmatrix}}
\newcommand{\mvu}[1]{\begin{psmallmatrix}-V_{#1} & U_{#1}\end{psmallmatrix}}
\newcommand{\uvinv}[1]{{\uv{#1}}^{-1}}
\newcommand{\id}{\begin{psmallmatrix}1 & 0 \\ 0 & 1\end{psmallmatrix}}
\newcommand{\aln}[1]{{n_{#1}}}
\newcommand{\mFh}{{\mF}_h}      
\newcommand{\mFp}{{\mF}_p}  
\newcommand{\mFe}{{\mF}_e}      
\newcommand{\mFnp}{{\mF}_{npl}}  
\newcommand{\mFu}{{\mF}_{u}}  
\newcommand{\mFs}{{\mF}_2}      
\newcommand{\mFq}{{\mF}_4}      
\newcommand{\rhom}{\rho_m}
\newcommand{\deltae}{\varphi_e}
\newcommand{\deltap}{\varphi_p}
\newcommand{\gf}{y} 
\newcommand{\rM}{\mM_R}  
\newcommand{\rdM}{\mM_D}  
\newcommand{\hM}{\mM_H} 
\newcommand{\hV}{\mV_H} 
\newcommand{\hF}{\mF_H} 
\newcommand{\hE}{\mE_H} 
\newcommand{\qV}{\mV_Q} 
\newcommand{\qF}{\mF_Q}
\newcommand{\planerr}{\varepsilon_p}  
\newcommand{\planerrM}{\epsilon_p}  
\newcommand{\planerravg}{\bar{\epsilon}_p}  
\newcommand{\hdist}{\epsilon_d} 
\begin{document}
\title{PH-CPF: Planar Hexagonal Meshing using Coordinate Power Fields}

\begin{abstract}
We present a new approach for computing planar hexagonal meshes that approximate a given surface, represented as a triangle mesh. Our method is based on two novel technical contributions. First, we introduce \emph{Coordinate Power Fields}, which are a pair of tangent vector fields on the surface that fulfill a certain \emph{continuity} constraint. We prove that the fulfillment of this constraint guarantees the existence of a seamless parameterization with quantized rotational jumps, which we then use to regularly remesh the surface. We additionally propose an optimization framework for finding Coordinate Power Fields, which also fulfill additional constraints, such as alignment, sizing and bijectivity. Second, we build upon this framework to address a challenging meshing problem: planar hexagonal meshing. To this end, we suggest a combination of conjugacy, scaling and alignment constraints, which together lead to planarizable hexagons. We demonstrate our approach on a variety of surfaces, automatically generating planar hexagonal meshes on complicated meshes, which were not achievable with existing methods.

\end{abstract}

\begin{CCSXML}
<ccs2012>
   <concept>
       <concept_id>10010147.10010371.10010396.10010398</concept_id>
       <concept_desc>Computing methodologies~Mesh geometry models</concept_desc>
       <concept_significance>500</concept_significance>
       </concept>
 </ccs2012>
\end{CCSXML}

\ccsdesc[500]{Computing methodologies~Mesh geometry models}

\keywords{geoemtry processing, planar hexagonal meshing, parameterization, tangent vector fields}

\begin{teaserfigure}
\centering
\includegraphics[height=2.5in]{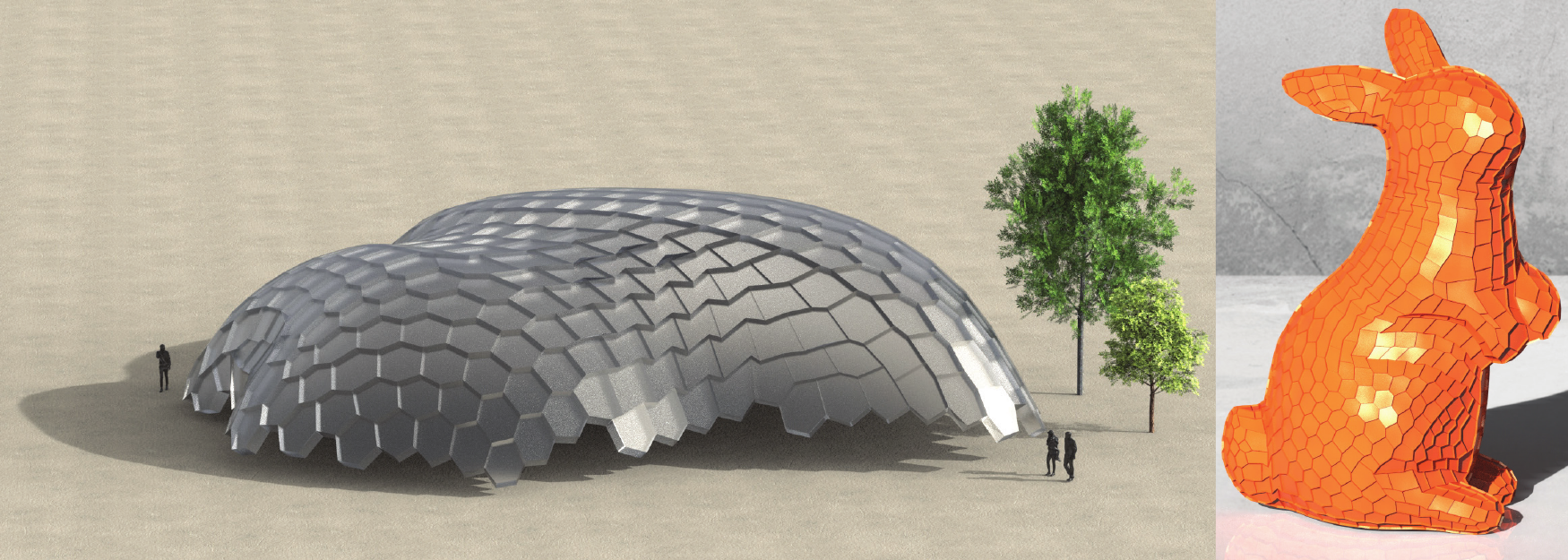}
\caption{Planar hexagonal meshes of computed with our approach.}
\label{fig:teaser}
\end{teaserfigure}

\maketitle

\section{Introduction}
Trianglular meshes are the most prevalent surface representation in geometry processing. However, for many applications, such as architectural design, animation and numerical simulation, a more structured representation is required, such as a quadrangular or hexagonal mesh. Specifically in architectural design, meshes with planar faces are amenable to realization using wood or glass panels, and are thus highly sought after~\cite{Li2015segmental,Sonntag2019lightweight}. While planar quadrangular meshes have been intensively researched in recent years~\cite{Liu2006geometric,pottmann2015architectural}, 
planar hexagonal meshes remain an open challenge. Due to considerable technical difficulties, such as a strict restriction on allowable element orientations, and the necessary existence of non-convex faces, existing approaches are either limited to specific types of surfaces, or are not fully automatic, or both~\cite{pottmann2015architectural}.

From the perspective of the geometry processing framework available for meshing, much progress has been made in recent years~\cite{bommes2009mixed,bommes2013integer,campen2015quantized}. 
Still, the ``standard'' approach relies on computing vector fields on the surface, which are the \emph{candidate gradients} of the parameterization functions~\cite{bommes2013quad}. However, with the exception of a few recent works, there is no guarantee that these candidates are in fact \emph{integrable}, i.e. that there exist parameterization functions whose gradients are these vector fields. Furthermore, 
while using the candidate gradients as the primary variables is convenient in some applications, in many cases, such as the planar hexagonal meshing application, and especially when the gradients are not expected to be orthogonal, this is not the most natural choice. 

We propose a novel optimization framework for parameterization-based meshing, whose primary variables are a pair of tangent vector fields, which are candidate \emph{Coordinate Power Fields} (CPF). CPFs are a generalization of \emph{coordinate vector fields}, which are the pushforward of the $2D$ coordinate grid under a given local patch parameterization. Unlike coordinate vector fields, coordinate power fields allow for rotational jumps in the parameterization, and thus (up to global holonomy constraints) lead to a seamless global paramterization. Specifically, at the heart of our framework lies a new \emph{continuity} constraint, which two tangent vector fields should fulfill in order to be CPFs, and which provably guarantees the local existence of a seamless parameterization with quantized rotational jumps. 

We leverage our framework to compute planar hexagonal meshes. Specifically, we define additional constraints on the CPFs, using a novel \emph{Dupin metric} approach, such that a pushed forward hexagonal grid would be planarizable. We demonstrate that this holds if the CPFs are \emph{conjugate}, and are additionally aligned and sized according to the \emph{Dupin Indicatrix}~\cite{do2016differential}. By combining these constraints with our optimization framework we automatically generate planar hexagonal meshes that approximate a wide variety of shapes, including complex shapes which, to the best of our knowledge, are not achievable by existing automatic methods. Furthermore, on meshes where existing approaches are applicable, we achieve a comparable or better planarity error as existing approaches, while using a considerably smaller number of elements. Finally, we demonstrate that our framework is additionally applicable to planar quad meshing.

\subsection{Related Work}

\paragraph{Global seamless quantized parameterization.}
Global parameterization, meshing and directional field design are vast topics in geometry processing, with ample amount of existing work. For a general overview of these topics we refer the reader to the excellent surveys on parameterization~\cite{sheffer2006mesh,hormann2007mesh}, quad meshing~\cite{bommes2013quad} and directional field design~\cite{vaxman2016directional}. 

Our meshing is parameterization-guided, namely, we compute an \emph{integer grid-map}~\cite{bommes2013integer,campen2015quantized} and push forward a regular grid through this map to the surface to generate the new mesh elements. Within this framework, our main contributions is the computation of the gradients of a \emph{seamless} global parameterization with quantized rotational continuity across the cuts. The remaining parts of the algorithm, namely integrating the parameterization, quantizing the translations, generating the mesh~\cite{vaxman2019directional}, and planarization~\cite{tang2014form} are based on existing work.

One challenge in this framework is achieving smooth gradient vector fields while allowing for discontinuities which lead to quantized singularities (or \emph{cone points}). The most prevalent approach to designing such fields is working with rotationally symmetric vector fields  (RoSy-fields), which are computed by solving a mixed-integer~\cite{bommes2009mixed} or integer-valued~\cite{farchi2018integer} optimization problem, or with a reformulation to a complex-valued problem~\cite{knoppel2013globally,viertel2019approach,azencot2017consistent} (which led to the notion of \emph{power fields}). More versatility is achieved with complex polynomials (PolyVectors)~\cite{diamanti2014designing}, as they relax the requirement of rotational symmetry. Additionally, generalization beyond cross fields has been suggested by using a non-Euclidean metric, either extrinsically~\cite{Panozzo2014Frame} or intrinsically~\cite{Jian2015Frame}.

These rotational fields are then integrated into an \emph{integer grid-map}~\cite{bommes2013integer}, where care is taken to guarantee that some translational components are also quantizable~\cite{lyon2019parametrization}. However, with the exception of a few methods, most approaches that compute direction fields do not guarantee their \emph{integrability}. Thus, the resulting parameterization gradients often deviate from the computed direction fields. Notable exceptions are PGP~\cite{ray2006periodic}, that attempt to reduce the curl from the computed field before integrating, and the work of Myles et al.~\cite{Myles2014robust}, that explicitly optimizes for a parameterization that agrees with an input cross field, yet it may introduce additional cone points. Furthermore, both approaches use a cross-field, which cannot, by definition, define the scale of a non-uniform parameterization, or the orientation of a non-conformal one (since cross-fields are unit-length and orthogonal). Harmonic cone-based seamless parameterization methods (e.g., HGP~\cite{Bright2017harmonic}, and the orbifold methods~\cite{Aigerman2016Hyperbolic}) guarantee local injectivity for given cone values, yet do not consider the field, hence they also cannot generate non-uniform parameterizations. The simplex assembly method~\cite{Fu2016computing} uses affine transformations as the main variables and guarantees injectivity, yet it does not allow for quantized rotational discontinuities, nor does it consider the alignment. The only method that allows for non-orthogonal fields, considers alignment, and also optimizes integrability is the work by~\citet{Diamanti2015Integrable} on integrable PolyVector fields. However, PolyVector fields allow for \emph{any} set of consistently oriented vector fields, and thus form a much bigger space than the space of coordinate vector fields of parameterizations. 

Our method, on the other hand, uses the vectors tangent to the meshing elements as the primary variables, thus considerably simplifying the formulation of constraints. In addition, we allow for non-orthogonal gradients, by posing the rotationally-symmetric constraint in the \emph{parametric domain}. Thus, our variables are exactly as general as required to allow for any seamless  parameterization with quantized rotations, without spurious degrees of freedom. The methods closest to our approach, are the recent algorithms for computing Chebyshev nets~\cite{sageman2019chebyshev,liu2020practical}. The second also used the tangent vector fields as primary variables and introduced an integrability constraint that is formulated using the \emph{Lie Bracket operator}. Notably, this operator is \emph{discretized}, and not \emph{discrete}, hence it will not be exactly zero even if a seamless parameterization exists. We, on the other hand, present a novel \emph{discrete} integrability constraint, which is provably exactly zero if and only if a seamless parameterization exists locally (without considering global homology obstructions). Moreover, our formulation is a generalization of the classic integrability of curl-free piecewise constant vector fields~\cite{Polthier2003identifying}, which is obtained for the special case $N=1$. The second method used both the tangent vector fields and the candidate gradients as variables, defining the smoothness on the first representation with PolyVectors, and the integrability on the second, and used alternating descent for the optimization. Our approach simplifies this formulation by removing the need for PolyVectors, while keeping the discrete integrability guarantee. Finally, both approaches were tailored for Chebyshev quad meshes, and a generalization to planar hexagonal meshes (or any other general formulation) was not presented. 

We further elaborate on the relations between our formulation and the most similar approaches in section~\ref{sec:relation}.

\paragraph{Planar hexagonal meshing.}
Planar hexagonal (PH) and planar quad (PQ) meshes are of paramount importance in architecture~\cite{pottmann2007architectural,pottmann2015architectural}. PH meshes in particular are highly sought after for grid-shell constructions with wood panels~\cite{Li2015segmental,Sonntag2019lightweight,wagner2020towards}. This is due to their valence-$3$ vertices which lead to (conical) meshes with a natural face offset that can be realized with panels of constant thickness. They are further known to have beneficial mechanical properties~\cite{la2012nature,li2017timber}.
A few attempts have been made in the architecture literature to generate such meshes given an input smooth surface~\cite{li2017timber,Contestabile2019digital,Mesnil2017Generalised,henriksson2015rationalizing}, yet the resulting algorithms often require manual work and are mostly limited to simple surfaces. It is worth noting that finding the PH mesh, or the flat segments that will form the panels, is only the first step in the realization of the surface. Further structural analysis is required to validate mechanical stability (see e.g. the work done on the Landesgartenschau exhibition hall~\cite{Li2015segmental}), and thus a fully automatic approach that can generate a variety of meshes based on a small number of parameters (such as element size) is critical. 

In the graphics and geometry processing literature there has also been a limited amount of work on this topic, unlike PQ meshes which were extensively researched~\cite{Liu2006geometric,Zadravec2010designing,liu2011general}. PH meshes are in general more constrained than PQ meshes, as they, for example, require non-convex faces to create a watertight cover of a surface with negative Gaussian curvature~\cite{pottmann2015architectural}. While purely convex hexagonal faces have been used for architectural structures~\cite{deleon2012rationalisation,romero2013bridging}, the resulting facade is not watertight. One approach for hexagonal meshing is using the duality with triangular meshes, either by triangulating the Gauss map~\cite{Almegaard2007surfaces}, or by tangent plane intersections~\cite{troche2008planar,troche2009planar,zimmer2012variational}. These approaches often do not lead to regular planar hexagonal shapes. 

Better hexagonal shapes can be achieved if the primal triangular mesh is aligned with the principal directions, and sized according to the \emph{Dupin indicatrix}~\cite{wang2008hexagonal,li2014planar}. To achieve such an ``ideal triangulation'' the authors first construct an anisotropic quadrangular mesh, which is then decomposed into a triangular mesh, whose dual is the sought after planar hexagonal mesh. These multiple transitions require special care for singularities regions. Furthermore, to obtain aesthetic hexagon shapes in transition regions between positive and negative Gaussian curvature, these transition regions must be specified manually. A related approach~\cite{wang2010note} also starts from a quad mesh obtained from a conjugate curve network, and then shifts the quads to obtain a ``brick-wall'' pattern which is later optimized for planarization, which will again require special care for singularities. Our method, on the other hand, directly computes a triangulation that is aligned and sized according to the Dupin indicatrix, whose barycetric dual is a hexagonal mesh which is planarizable. Our approach does not require any special treatment of singularities, or manual adjustments in curvature transition regions, and generates planar hexagonal meshes on surfaces more complex than previously possible.

Other approaches aim at generating different variations of hexagonal meshes. For example, Nieser et al.~\shortcite{nieser2011hexagonal} generate curvature-aligned triangular meshes (without considering the planarity of the dual meshes), and we follow a similar approach for choosing alignment directions. Other variants include conformal planar hexagonal meshes~\cite{muller2011conformal}, which are only applicable to a restricted set of surfaces, honeycomb structures~\cite{jiang2014freeform} and caravel meshes~\cite{Tellier2020caravel,tellier2020caravelb}, and polyhedral patterns~\cite{Jiang2015polyhedral}. The latter work focuses on the post-processing of a polygonal mesh (PH meshes as a special case), to generate an aesthetic planar polygonal pattern. We note that our approach generates the initial polygonal mesh which this approach receives as input. We use a similar method for post-processing, yet with some modifications as we do not have the regular stripe structure used there.

\subsection{Contributions:}
Our main contributions are:
\begin{itemize}
\item Coordinate Power Fields (CPFs) - a novel discrete formulation of integrability for seamless global parameterizations with quantized rotations.
\item A corresponding general optimization framework that is applicable to different meshing applications.
\item A new parameterization-based method for generating planar hexagonal meshes, using CPFs and a Dupin-indicatrix based metric, that exceeds the state of the art in both the quality of the resulting meshes, and the complexity of the surfaces that can be approximated.
\end{itemize}

\section{Background}
Some of this material, e.g. on patch parameterization~\cite{sageman2019chebyshev} and on the Dupin indicatrix~\cite{wang2009note}, has appeared previously. We repeat it here for completeness and for a quick reference using our notations, and limit ourselves to the concepts required for our exposition. For a full treatment of patch parameterization see the book by do Carmo~\shortcite[Chap. 2]{do2016differential}.

\subsection{Meshing by pushforward through a patch parameterization}
\subsubsection{Patch parameterization}
Given an oriented embedded surface $\sM$, let $r: \sU \subset \xR^2 \to \Omega \subset \sM$ be a regular parameterization of a planar domain to a patch on the surface (see Figure~\ref{fig:notations})~\footnote{Note that we use the classical plane-to-surface definition, rather than the common computer graphics one which is surface-to-plane.}. Let $\hat{u},\hat{v}$ be  the Cartesian unit orthogonal axes in the parameterization domain, and $\s{u},\s{v}$ be the coordinate functions on $\sU$. 
Thus, $r$ is defined on points $(\s{u},\s{v}) \tin \sU$ and $$p = r(\s{u},\s{v}) \in \sM, \quad r^{-1}(p) = (\s{u},\s{v}) \in \sU.$$ 
We define two scalar functions $u, v: \Omega \to \xR$ as the pullback of the functions $\s{u},\s{v} : \sU \to \xR$ through $r^{-1}$:
$$u(p) = \s{u}(r^{-1}(p)), \quad v(p) = \s{v}(r^{-1}(p)).$$
The gradients $\nabla u, \nabla v: \Omega \to T\sM$ are tangent vector fields on $\Omega$, whereas the gradients $\nabla \s{u}, \nabla \s{v} : \sU \to \xR^2$ are planar vector fields. Specifically, $\nabla \s{u} = \hat{u}, \nabla \s{v} = \hat{v}$.

The partial derivatives $U = \frac{\partial r}{\partial \s{u}}, V = \frac{\partial r}{\partial \s{v}}$ are the corresponding \emph{coordinate vector fields}~\cite[pp. 176]{Lee2013introduction} on $\Omega$. Equivalently, we write $U = dr(\hat{u}), V = dr(\hat{v})$, where $dr$ is the \emph{differential} of $r$. The vector fields $U,V$ are tangent to the isolines $v=v_0, u = u_0$, respectively.
\begin{figure}[t]
\centering
\includegraphics[width=\linewidth]{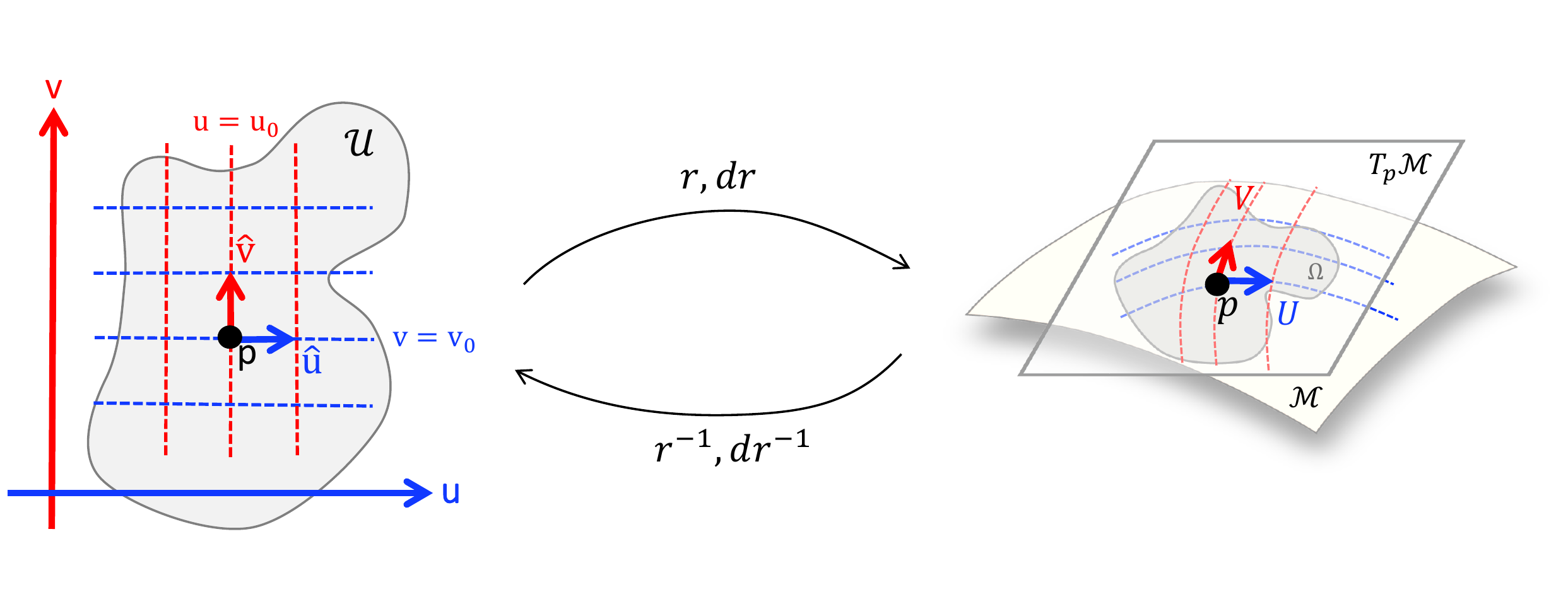}
\caption{Notations, patch parameterization and coordinate vector fields.}
\label{fig:notations}
\end{figure}

\subsubsection{Meshing by grid-pushforward}
A common approach for regular meshing~\cite{bommes2013integer}, is pushing forward a regular grid through a parameterization. We first consider a single patch and quadrangular remeshing. Then, the locations of the new mesh vertices are given by $r(\h\xZ^2 \cap \sU)$, where $\h\tin\xR$ is a constant scalar coefficient that controls the size of the new mesh elements. Let $\s{p} \!=\! (\s{u}_0, \s{v}_0) \tin \h\xZ^2 \cap \sU$, and consider the quad $\{r(\s{p}), r(\s{p} + \h\hat{u}), r(\s{p} + \h\hat{v} + \h\hat{u}),  r(\s{p} + \h\hat{v})\}$. In the limit of increasing refinement, we have, by the definition of the differential, $\lim_{\h\to0} \frac{r(\s{p}+\h{\hat{u}})-r(\s{p})}{h} = dr_\s{p}(\hat{u}) = U(r(\s{p}))$, and similarly for $V$. Hence, the vectors $U(p),V(p), p = r(\s{p})$ are closely related to the edges of the generated quad~\footnote{Note that we abuse notations slightly and treat $r(\s{p})\tin\sM$ as its corresponding embedded point in $\xR^3$ here.}.

Consequentially, many constraints on the mesh elements, whether planarity, sizing, orientation or angles, are easily specified using constraints on the coordinate vector fields $U,V$.
Furthermore, the pullback map $dr^{-1}$, as well as the gradient vector fields $\nabla u, \nabla v$ all have simple expressions in terms of $U,V$.
Specifically, let $B_p$ be a local orthonormal basis of $\tpm$ at $p \tin \Omega$, and $\uv{p}$ be the $2\times2$ matrix whose columns are the coefficients of $U(p),V(p)$ in the basis $B_p$. Then, it is straightforward to show (see \REV{supplemental material}) that: 
\begin{equation}
\label{eq:pullback}
dr^{-1}(X) = \uvinv{p} X, \quad \forall X\tin \tpm, 
\end{equation}
and
\begin{equation}
\label{eq:gradients}
(\nabla u)_p = \spmqty{1 & 0} \uvinv{p}, \quad (\nabla v)_p = \spmqty{0 & 1} \uvinv{p},
\end{equation}
where all the coefficients of vectors in $\tpm$ are with respect to the local basis $B_p$. A straightforward calculation then gives:
\begin{equation}
\label{eq:gradients_noinv}
(\nabla u)_p = -\frac{1}{s_p}(J V_p)^T , \quad (\nabla v)_p = \frac{1}{s_p} (J U_p)^T,
\end{equation}
where $s_p = \langle U_p, -J V_p \rangle$, and $J = \spmqty{0 & -1 \\ 1 & 0}$.
Finally, by combining the previous three equations, we have an explicit expression of $dr^{-1}$:
\begin{equation}
\label{eq:pullback_from_uv}
dr^{-1}_p = \frac{1}{s_p} \mvu{p}^T J^T.
\end{equation}
We note that this discussion appears in~\cite{sageman2019chebyshev}, albeit using differential one-forms instead of gradient vector fields. 
\begin{figure}[t!]
\centering
\includegraphics[width=\linewidth]{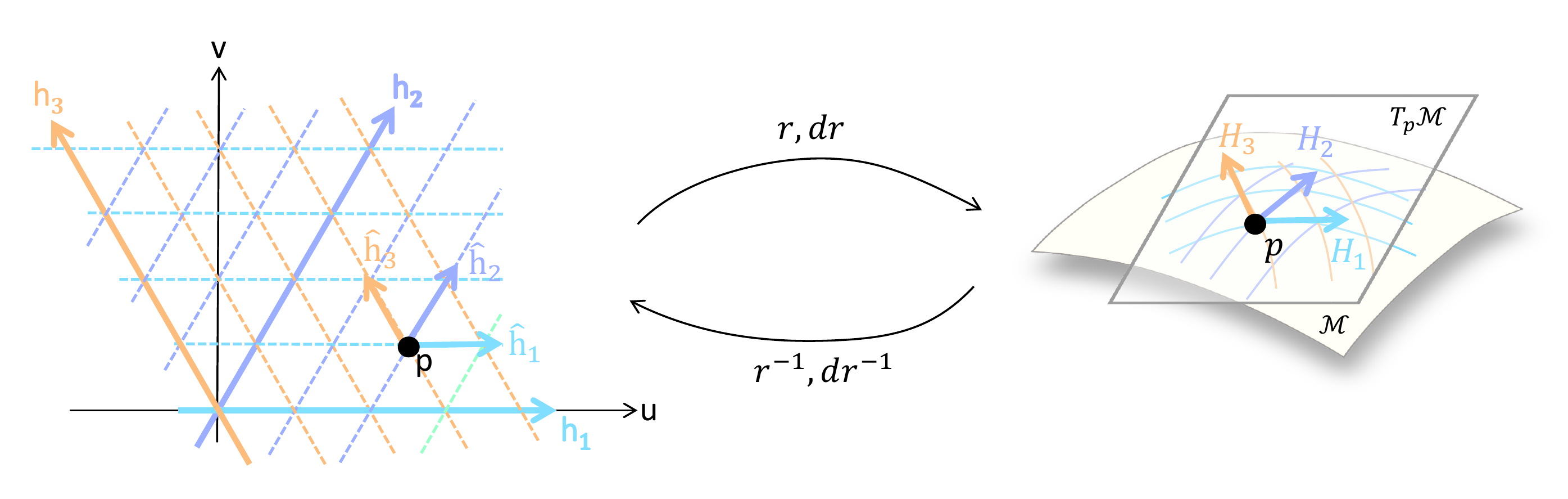}
\caption{Notations, triangular meshing.}
\label{fig:tri_mesh_notations}
\end{figure}

\subsubsection{Triangular meshing} 
\label{sec:triangular_meshing}
Our PH-meshing algorithm is based on tangent duality with a triangular mesh, which we compute using our parameterization framework. Here, the grid $\xZ^2$ is replaced by the triangular grid $\mathbb{A}^2$ (see Figure~\ref{fig:tri_mesh_notations}), where $\hat{h}_1,\hat{h}_2,\hat{h}_3$, are the triangular unit regular coordinate axes in the parameterization domain $\sU$. 

There is a linear relationship between the Cartesian and triangular grids, thus we can alternate between the two representations using matrix multiplication. Specifically, we have $ \begin{psmallmatrix}\hat{h}_1 &  \hat{h}_2 & \hat{h}_3\end{psmallmatrix}_{2 \times 3} =  \begin{psmallmatrix}\hat{u} & \hat{v}\end{psmallmatrix}_{2 \times 2}\, A$, where $A = \begin{psmallmatrix}1 &  1/2 &  -1/2 \\ 0 & \sqrt 3/2  & \sqrt 3/2\end{psmallmatrix}_{2\times3}$. We additionally define, $\s{h}_1, \s{h}_2, \s{h}_3 : \sU \to \xR$ as the triangular coordinate functions, and we have that $\begin{psmallmatrix}\s{h}_1 &  \s{h}_2 & \s{h}_3\end{psmallmatrix} =  \begin{psmallmatrix}\s{u} & \s{v}\end{psmallmatrix}\, A$. Finally, $H_i$ are given by the pushforward of $\hat{h_i}$ through $r$, so $H_i = dr(\hat{h}_i)$. Since $U,V$ are the coordinate vector fields of $r$, and since $dr$ is linear, we have that $\begin{psmallmatrix}H_1 &  H_2 & H_3\end{psmallmatrix}_{2 \times 3} =  \begin{psmallmatrix}U & V\end{psmallmatrix}_{2 \times 2}\, A$, where $H_i, U, V$ and $A$ are represented in an arbitrary orthonormal local basis at each $p \in \Omega$.

For a point $\s{p} \tin t\mathbb{A}^2$, the pushforward of one of its neighboring triangles has vertices at $\{ r(\s{p}), r(\s{p} + \h\hat{h}_1), r(\s{p} + \h\hat{h}_2)\}$, and similarly for the other $5$ neighboring triangles. As in the quadrangular case, in the refinement limit we have $\lim_{\h\to0} \frac{r(\s{p} + \h\hat{h}_1) - r(\s{p})}{\h} = dr(\hat{h}_1) = H_1$, and similarly for $H_2$ (and the other $5$ triangles, by symmetry). Hence, the edges of the pushed forward triangles are closely related to the vector fields $H_1,H_2,H_3$. The dual hexagonal edges are also closely related to $H_i$ and $U,V$, as we discuss in the next sections.

\subsection{Discrete surfaces}
We represent the input surfaces using triangle meshes $\mM = (\mV, \mE, \mF)$, where $\nof = |\mF|, \nov = |\mV|$. Vector fields are piecewise constant per face, either in the trivial basis in $\xR^3$ as $V \tin \xR^{3\nof}$, or using a local basis per face as $V \tin \xR^{2\nof}$. The orthonormal (arbitrary) local basis per face, which we use to represent vector fields, the shape operator and the pullback maps, is given by a matrix $B \tin \xR^{3\nof\times2\nof}$.

\section{Coordinate Power Fields}
\label{sec:cpf}

\subsection{Rotational Continuity across patches}
Consider two parameterizations $r_i \!:\! \sU_i \to \Omega_i \!\subset\! \sM, \, i \tin [1,2]$. In general,  if $r_i$ are part of a valid atlas of $\sM$, then for any $p \tin \Omega_1 \cap \Omega_2$ and any $X \tin \tpm$, the pullback maps agree. Namely: $r_1^{-1}(p) = r_2^{-1}(p) \tin \xR^2$ and $dr_1^{-1}(X) = dr_2^{-1}(X) \tin \xR^2$. 

In the context of meshing, however, we are only interested in the \emph{pushforward of a grid} through these parameterizations. Hence, the pullback maps are part of a valid \emph{grid atlas}, if they agree up to a \emph{grid automorphism} (see e.g.~\cite{nieser2011hexagonal}).  These consist of quantized rotations of integer multiples of  $2\pi/N$ (where $N=4,6$ for quads and triangles respectively), and quantized translations. In the vector field optimization step we consider only quantized rotations, as these are achievable without introducing integer variables. 

\begin{definition}
Two parameterizations $r_1,r_2$ are part of a valid \emph{rotational grid atlas of degree $N$} iff their corresponding pullback maps $dr_1^{-1}, dr_2^{-1}$ agree up to a rotation by an integer multiple of $2\pi/N$. Specifically, for any $p \tin \sM, \, X \tin \tpm$ there exists an integer $\s{k} \tin [0,..,N-1]$, such that:
\begin{equation}
\label{eq:seamless}
dr_1^{-1}(X) = R^{\frac{2\pi}{N}\s{k}} dr_2^{-1}(X),
\end{equation}
where $R^\theta$ is the counter-clockwise rotation by the angle $\theta$.
\end{definition}

By identifying points $(x,y) \tin \xR^2$ with points $x + iy \tin \xC$, we have (following~\citet{knoppel2013globally}):
\begin{equation}
\label{eq:seamless_c}
\bigl(dr_1^{-1}(X)\bigr)^N = \bigl( dr_2^{-1}(X) \bigr)^N.
\end{equation}
Using Equations~\eqref{eq:pullback} and ~\eqref{eq:gradients}, we have that:
\begin{equation}
\label{eq:seamless_c_grads}
\bigl((D_X u_1)_p, (D_X v_1)_p)\bigr)^N = \bigl((D_X u_2)_p, (D_X v_2)_p)\bigr)^N, 
\end{equation}
\begin{equation}
u_i(p) = \s{u}(r_i^{-1}(p)) , \quad v_i(p) = \s{v}(r_i^{-1}(p)), \quad i = [1,2],
\end{equation}
where $(D_X f)_p = \langle (\nabla f)_p, X \rangle_p$, and $\langle,\rangle_p$ is given by the metric at $p$.

\subsection{Smooth CPFs}

The partial derivatives of a regular parameterization $r : \sU \to \Omega$ are its coordinate vector fields. Analogously, given a collection of parameterizations, we define \emph{Coordinate Power Fields}.

\begin{definition}
Let $r_i \!:\! \sU_i \to \Omega_i \!\subset\! \sM, \, i \tin [1,..,m]$ be regular parameterizations such that $\bigcup_{i=1}^m \Omega_i = \sM$, and let $U_i, V_i$ be their partial derivatives. The set of pairs of vector fields $\{(U_i, V_i) \,| \, i \tin [1,..,m]\}$ is a set of \emph{Coordinate Power Fields of degree $N$} iff every pair of parameterizations is part of a rotational grid atlas of degree $N$.
\end{definition}

\subsection{Discrete Rotational Continuity and CPFs}

Given an oriented manifold triangle mesh $\mM$ embedded in $\xR^3$, let $U,V  \tin \xR^{2\nof}$ be a pair of piecewise constant (PC) vector fields represented in a local basis.

\begin{definition}
Two PC vector fields $U,V$ are \emph{Linearly Independent and Consistently Oriented} (LICO) iff  $\langle U_i, -JV_i \rangle > 0, \, \forall t_i \tin \mF$.
\end{definition}

Following previous work~\cite{bommes2013integer,kalberer2007quadcover}, we set each triangle to be the image of a different parameterization from some triangle domain in $\xR^2$. Given two LICO vector fields, the parameterization to the face $t_i \tin \mF$ is induced by the vectors $U_i, V_i$. 
Specifically, let $(p^1_i,p^2_i,p^3_i) \tin \xR^{3\times3}$ be the embedding of the triangle $t_i \tin \mF$, and consider the triangle 
\begin{equation}
\s{t}_i = \big((0;0), \uvinv{i}(p^2_i-p^1_i), \uvinv{i}(p^3_i-p^1_i)\big) \tin \xR^{2\times3},
\end{equation}
where $p^2_i - p^1_i$ and $p^3_i - p^1_i$ are expressed in the local basis of $t_i$. 
We define $r_i : \s{t}_i \to t_i$ as the unique linear map between the two triangles. Clearly, $dr_i = \uv{i}$, and $dr_i^{-1} = \uvinv{i}$. 
In the discrete case, the  pairwise intersections of the images of the parameterizations are the \emph{mesh edges}, leading to the following definition.

\begin{definition}
\label{def:cpf_discrete}
A pair of LICO vector fields $U,V \tin \xR^{2\nof}$ are \emph{discrete coordinate power fields of degree $N$} iff 
\begin{equation}
\label{eq:cpf}
\bigl(dr_i^{-1}(e_{ij})\bigr)^N = \bigl( dr_j^{-1}(e_{ij}) \bigr)^N, \quad \forall e_{ij} \tin \mEi,
\end{equation}
 where $\mEi$ are the interior edges of $\mM$, and $e_{ij} = t_i \cap t_j, \, t_i, t_j \tin \mF$, with a fixed orientation for both triangles.
\end{definition}

Discrete CPFs guarantee the existence of a corresponding seamless, rotationally quantized, global parameterization, as follows.

\begin{theorem}
\label{thm:cpf}
Let $U,V$ be discrete CPFs of degree $N$. Then there exist functions $u^{l}, v^{l} \tin \xR^{\nof}, \, l = [1,2,3],$ with $u_i, v_i$ piecewise linear per face $t_i \tin \mF$ (yet discontinuous between faces), such that:
\begin{enumerate}
\item $\nabla (u_i)  = \spmqty{1 & 0} \uvinv{i}, \, \nabla (v_i) = \spmqty{0 & 1} \uvinv{i}, \, \forall t_i \tin \mF$.
\item The triangle $\s{t}_i = (\s{p}_i^1, \s{p}_i^2, \s{p}_i^3) \tin \xR^{2\times3}$ with coordinates $\s{p}_i^l = (u^l_i, v^l_i)$ is positively oriented.
\item Let $e_{ij} \tin \mEi$, with $e_{ij} = t_i \cap t_j$, and set $\alpha_i, \beta_i \tin [1,..,3]$ the indices of the vertices of $e_{ij}$ in $t_i$, and similarly for $t_j$.
Thus, $e_{ij} =  {p}_i^{\alpha_i} - {p}_i^{\beta_i} = {p}_j^{\alpha_j} - {p}_j^{\beta_j} \tin \xR^3$. 
Then there exists $\s{k}_{ij} \tin \xZ$ such that $R^{2\pi\s{k}_{ij}/N} (\s{p}_i^{\alpha_i} - \s{p}_i^{\beta_i}) = \s{p}_j^{\alpha_j} - \s{p}_j^{\beta_j}$.
\end{enumerate}
\end{theorem}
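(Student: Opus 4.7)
The plan is to build the per-face parameter-domain coordinates explicitly from $U,V$, and then verify the three conclusions in turn. The CPF hypothesis \eqref{eq:cpf} will be used only for the rotational-continuity claim; the first two properties are essentially built into the construction.

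First I would fix, for each face $t_i \tin \mF$, an arbitrary point $\s{p}_i^1 \tin \xR^2$ (translational freedom), and define
\[
\s{p}_i^l \;=\; \s{p}_i^1 \;+\; \uvinv{i}\bigl[p_i^l - p_i^1\bigr]_{B_i}, \quad l = 2,3,
\]
where $[\,\cdot\,]_{B_i}$ denotes the coordinates of an edge vector of $t_i$ in the orthonormal basis $B_i$. Setting $(u_i^l,v_i^l) = \s{p}_i^l$ gives the required piecewise-linear functions: they are discontinuous across edges precisely because the $\s{p}_i^1$ are chosen independently on each face. Note that this is the vertex-data version of the unique affine map $r_i^{-1}: t_i \to \s{t}_i$ of Definition~\ref{def:cpf_discrete}.

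Properties (1) and (2) are then immediate. Since $u_i, v_i$ are affine with linear part $\uvinv{i}$, their gradients in the basis $B_i$ are exactly the rows of $\uvinv{i}$, which matches \eqref{eq:gradients}. For positive orientation, the LICO condition gives $\langle U_i,-JV_i\rangle = \det \uv{i} > 0$, hence $\det \uvinv{i} > 0$, so $\s{t}_i$ inherits the positive orientation of $t_i$.

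The heart of the proof is property (3). For an interior edge $e_{ij}=t_i\cap t_j$ oriented as in the theorem, the construction above yields
\[
\s{p}_i^{\alpha_i} - \s{p}_i^{\beta_i} \;=\; \uvinv{i}\,[e_{ij}]_{B_i} \;=\; dr_i^{-1}(e_{ij}),
\]
and likewise $\s{p}_j^{\alpha_j} - \s{p}_j^{\beta_j} = dr_j^{-1}(e_{ij})$. Identifying $\xR^2$ with $\xC$, the CPF condition \eqref{eq:cpf} asserts $\bigl(dr_i^{-1}(e_{ij})\bigr)^N = \bigl(dr_j^{-1}(e_{ij})\bigr)^N$, so the two complex numbers differ by an $N$-th root of unity, i.e. there exists $\s{k}_{ij} \tin \xZ$ with $dr_j^{-1}(e_{ij}) = e^{2\pi i \s{k}_{ij}/N}\,dr_i^{-1}(e_{ij})$. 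Translating back to $\xR^2$, multiplication by this root of unity becomes rotation by $2\pi \s{k}_{ij}/N$, giving exactly the identity in (3).

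I expect the main obstacle to be purely bookkeeping: the shared 3D edge vector $e_{ij}$ has two distinct coordinate representations $[e_{ij}]_{B_i}$ and $[e_{ij}]_{B_j}$ in the respective local bases, and one must track these carefully so that the CPF equality \eqref{eq:cpf} is applied to the correct pullbacks. One must also check that the orientation convention for $e_{ij}$ in Definition~\ref{def:cpf_discrete} is consistent with the index pairs $(\alpha_i,\beta_i)$ and $(\alpha_j,\beta_j)$, so that the two parameter-domain edge differences correspond to $dr_i^{-1},dr_j^{-1}$ applied to the \emph{same} oriented 3D vector. Once this bookkeeping is pinned down, the rest is formal.
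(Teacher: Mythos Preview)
Your proposal is correct and follows essentially the same route as the paper: define $\s{p}_i^l$ by applying $\uvinv{i}$ to the edge vectors of $t_i$ (the paper simply fixes $\s{p}_i^1=(0,0)$ rather than allowing an arbitrary translation), then read off (1) from the affine structure, (2) from LICO and the positive orientation of $t_i$, and (3) from the CPF equality \eqref{eq:cpf} interpreted over $\xC$. The bookkeeping caveats you flag (two local bases, consistent edge orientation) are exactly the only points requiring care, and the paper handles them implicitly in the same way.
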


The proof is a straightforward application of the CPF condition~\eqref{eq:cpf}, and is provided in the \REV{supplemental material}.

\emph{Note 1.}  For the special case $N=1$ and a simply connected mesh, Theorem~\ref{thm:cpf} leads to a global parameterization $u, v \tin \xR^{|\mV|}$. Its existence is guaranteed since Equation~\eqref{eq:cpf} implies that the classic discrete \emph{edge-based} curl operator~\cite[Def. 3]{Polthier2003identifying} is $0$ for the vector fields given by the rows of $dr_i^{-1}$. 

\emph{Note 2.} Our definition is related to Integer Grid Maps~\cite{bommes2013integer}, but not exactly the same. Specifically, we do not enforce integer transitions between charts, nor integer singularity locations. We further discuss this in Section~\ref{sec:int_mesh_gen}.

\subsection{Relation to other approaches}
\label{sec:relation}
The existence conditions of an Integer Grid Map, i.e. a parameterization s.t. the pushforward of an integer grid leads to a valid mesh, have been given previously~\cite{bommes2013integer,kalberer2007quadcover}. 
The full problem leads to a difficult mixed integer optimization problem, which, to the best of our knowledge, is not solved directly in any approach.
Methods vary in the \emph{decomposition} of the problem, and in the method that is used to solve each part.

\paragraph{QuadCover, MIQ, Globally Optimal (GO), and IGM} The most common decomposition is to first compute a \emph{cross-field} either by smoothing the principal curvature directions (QuadCover~\cite{kalberer2007quadcover}), by solving a mixed integer problem (MIQ~\cite{bommes2009mixed}), or by solving a convex complex-valued problem (GO~\cite{knoppel2013globally}). In the second step, a parameterization is computed which aligns with the computed cross field and adheres to the integer transition constraints, either by a mixed integer formulation (MIQ~\cite{bommes2009mixed}), or using a branch-and-cut algorithm (IGM~\cite{bommes2013integer}).

This decomposition has several disadvantages. First, the sizing of the elements is only addressed in the second step, and thus cannot influence the topology (the singularity locations) of the resulting mesh. Furthermore, integrability is only enforced in the second step (by explicitly computing the parameterization functions), and thus the resulting parameterization can be arbitrarily far from the input cross-field.
Finally, using a cross field implies that the resulting gradient directions are orthogonal, which is not always preferable.

Our approach does a \emph{different decomposition}. The only part which is not addressed in our optimization are the integer transitions between charts. All the other components: integrability, sizing and alignment are all addressed in the vector field optimization stage. This allows, for example, the singularities to move to modify the sizing of the elements and not only the directions.

\paragraph{FrameFields, Metric Customization, Simplex Assembly and PolyVectors} A different first step is to allow for non-orthogonal non-unit length fields, and then proceed with the second parameterization step as described above. This is done by either computing a new 3D geometry that is prescribed by a given metric (FrameFields~\cite{Panozzo2014Frame}), by enforcing a different intrinsic metric using a different connection on the input mesh (Metric Customization~\cite{Jian2015Frame}), or by generalizing the complex-valued formulation to a complex-polynomial one (PolyVectors~\cite{diamanti2014designing}). None of these approaches guarantees integrability, and thus when computing the parameterization in the second step, the gradients of the parameterization may differ greatly from the computed fields. 
The FrameFields and Metric Customization approaches optimize for a $2x2$ matrix per face, as do we. However, they consider only the SPD component (without the planar rotation), whereas we compute the full linear map (without the translation). Some notion of integrability of the computed metric exists for both approaches, the first since they construct a new mesh embedding, and the second since the metric of two neighboring triangles should agree on the length of the edge. Note, however, that this constraint is weaker than ours, as we enforce a seamless quantized rotation. The Simplex Assembly~\cite{Fu2016computing} method computes the full affine map per triangle (including the translation part), yet they do not take into account field alignment, or allow for quantized rotational jumps. 

\paragraph{Integrable PolyVectors, Commuting PolyVectors and Discrete Integrable Frame Fields}
The only three approaches that allow for non-orthogonal fields and optimize integrability are Integrable PolyVectors (IPV)~\cite{Diamanti2015Integrable}, Commuting PolyVectors (CPV)~\cite{sageman2019chebyshev} and Discrete Integrable Frame Fields (DIFF)~\cite{liu2020practical}. Integrability is defined for IPVs through the \emph{PolyCurl}, which is $0$ if there exists a matching between neighboring faces which leads to a zero edge-based curl for all the vectors of the PolyVector. However, IPVs describe a larger space than CPFs, and it is not clear how to restrict them to represent only vector fields that are the coordinate vector fields for some parameterization. In CPVs, integrability is defined through the \emph{discretization} of the Lie bracket, which in general does not guarantee discrete integrability. Finally, in DIFFs integrability is defined on the pulled-back edges yet with a known matching, unlike our approach where we allow for rotational jumps in the pulled back edges. 

\section{CPFs Optimization Framework}
\label{sec:optimization}
Given an input mesh $\mM$, we setup an optimization framework for computing CPFs $U,V$. The CPFs are then used to compute a parameterization and a regular mesh, as explained in Section~\ref{sec:meshing}. 

\subsection{Variables}
Our main variables are $U,V \tin \xR^{2\nof}$ , the candidate CPFs, represented using a local basis $B$. 

\subsection{Hard Constraints}
We have two hard constraints. The LICO constraint:
\begin{equation} 
\tag{C1}
\langle U_i, -JV_i \rangle > 0, \quad \forall t_i \tin \mF,
\label{eq:constr_lico}
\end{equation}
and the CPF constraint:
\begin{equation} 
\tag{C2}
\bigl(dr_i^{-1}(e_{ij})\bigr)^N = \bigl( dr_j^{-1}(e_{ij}) \bigr)^N, \quad \forall e_{ij} \tin \mEi. 
\label{eq:constr_cpf}
\end{equation}

\subsection {Soft Constraints}
Conditions on alignment, sizing and orthogonality are required in general for parameterization.

\subsubsection{Alignment} 
Let $d_i \neq 0$ be a direction in the face $t_i \tin \mF$, and $\aln{i}$ be an even divisor of $N$. Our alignment constraint is:
\begin{equation}
\tag{C3}
\Im\big((dr_i^{-1}(d_i)\big)^{\aln{i}/2}) = 0.
\label{eq:constr_alignment}
\end{equation}

\begin{restatable}{lemma}{alignlemma}
\label{lico_lemma}
If $U,V$ are LICO and the constraint~\eqref{eq:constr_alignment} is fulfilled, then one of the pushed forward grid directions in the face $t_i$ will be parallel to $d_i$.
\end{restatable}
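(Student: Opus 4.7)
The plan is to interpret the alignment constraint in the parametric domain via $dr_i^{-1}$, using the complex identification $\xR^2 \cong \xC$ already employed in~\eqref{eq:seamless_c}. First, I would use the LICO assumption to conclude that $\uv{i}$ has positive determinant, so $\uvinv{i}$ (and hence $dr_i^{-1}$) is invertible. Setting $z := dr_i^{-1}(d_i)$ and identifying $\xR^2$ with $\xC$, the assumption $d_i \neq 0$ gives $z \neq 0$, so one can write $z = \rho e^{i\theta}$ with $\rho > 0$.

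Then I would unpack the constraint~\eqref{eq:constr_alignment}. Because $\aln{i}$ is an even divisor of $N$, the exponent $\aln{i}/2$ is a positive integer and $z^{\aln{i}/2}$ is unambiguous. The constraint becomes $\rho^{\aln{i}/2} \sin\bigl((\aln{i}/2)\theta\bigr) = 0$, which forces $\theta = 2\pi k/\aln{i}$ for some $k \tin \xZ$. Writing $N = \aln{i} s$ with $s$ a positive integer, this is $\theta = 2\pi (ks)/N$, so $z$ is positively proportional to one of the $N$ equally spaced parametric grid directions ($\pm\hat{u}, \pm\hat{v}$ when $N=4$; $\pm\hat{h}_1, \pm\hat{h}_2, \pm\hat{h}_3$ when $N=6$).

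Finally, applying $dr_i$ and using linearity, $d_i = dr_i(z)$ is a positive scalar multiple of $dr_i$ applied to that parametric grid direction. Since, by definition, the pushed-forward grid directions in $t_i$ are exactly the images under $dr_i$ of the parametric grid directions --- namely $\pm U_i, \pm V_i$ for quads, and $\pm H_1, \pm H_2, \pm H_3$ for triangles --- one of them is parallel to $d_i$, as claimed.

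There is no genuine obstacle: the argument is a translation of the constraint through the invertible map $dr_i^{-1}$, combined with the elementary fact that $z^k \tin \xR$ pins $\arg z$ to a finite set of equally spaced angles. The only point requiring care is to check that the $\xR^2 \cong \xC$ identification used in~\eqref{eq:constr_alignment} matches the one implicit in naming the parametric grid directions, which amounts to tracking the local orthonormal basis $B$ consistently throughout the derivation.
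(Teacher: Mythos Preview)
Your proposal is correct and follows essentially the same route as the paper's proof: both pull $d_i$ back via $dr_i^{-1}$, use the constraint to pin the argument of the resulting complex number to a multiple of $2\pi/\aln{i}$, invoke $\aln{i}\mid N$ to identify it with an $N$-th root of unity direction, and then push forward. The paper phrases the middle step as ``$(dr_i^{-1}(d_i))^{\aln{i}}\in\xR_{>0}$, hence a positive multiple of some $\aln{i}$-th root of unity, hence of some $N$-th root of unity,'' while you use the polar form directly; these are the same argument in slightly different notation.
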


\REV{The proof is provided in Appendix~\ref{appendix:lemma_alignment}.}

\subsubsection{Sizing}
The element sizing is determined by the scale of the CPFs $U,V$. We add a constraint on the size of the elements under a user-defined pointwise metric $g$. The metric is the identity for uniform scale,  a diagonal matrix for isotropic scale, or any symmetric positive definite matrix for anisotropic scale. 
We have:
\begin{equation}
\tag{C4}
\| U_i \|^2_{g_i} = 1, \quad \| V_i \|^2_{g_i} = 1, \quad \forall t_i \tin \mF,
\label{eq:constr_sizing}
\end{equation}
where $g_i \tin \xR^{2\times2}$ is a positive definite matrix defined in the local basis of $t_i \tin \mF$, and $\|x\|^2_{g} = x^T g x$.

\begin{figure*}[t]
    \centering
    \includegraphics[width=.69\linewidth]{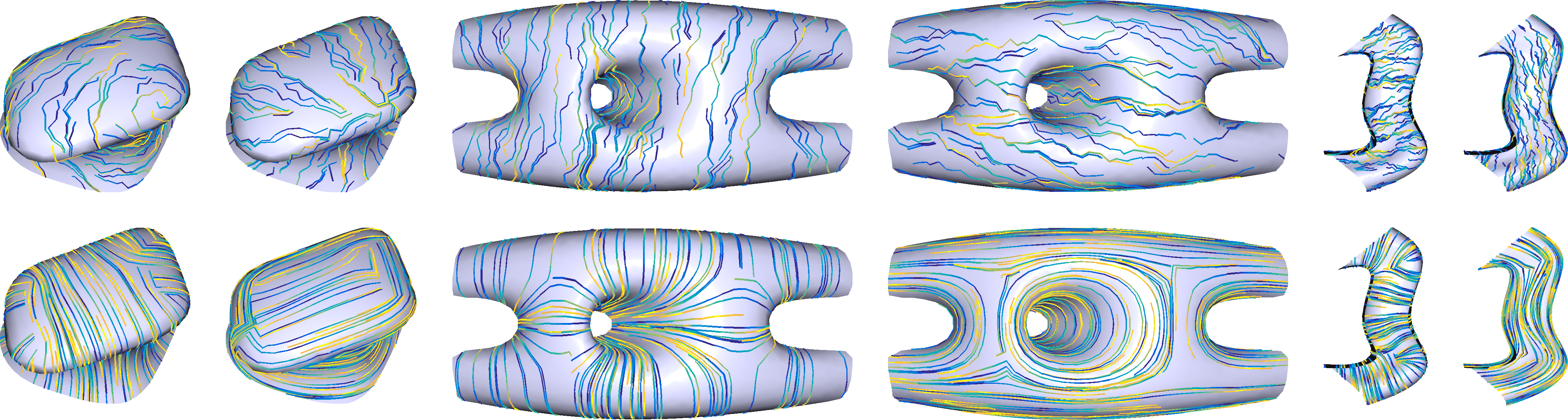}    
    \includegraphics[width=.29\linewidth]{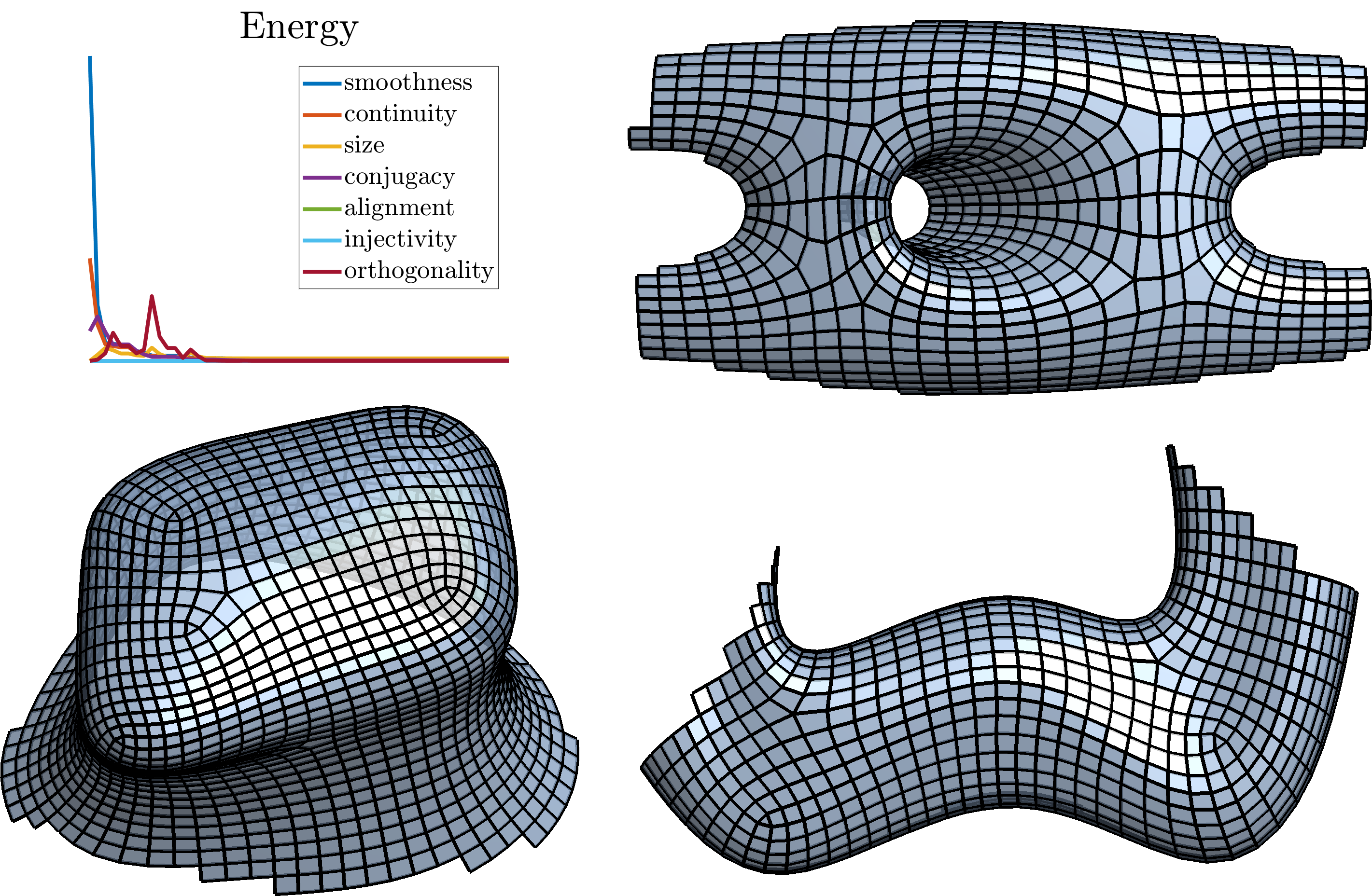}
    \caption{(left) Examples of convergence from a random initialization, with the initial (top) and final (bottom) $U,V$s shown. (right) the resulting quad meshes, and the energy graph for the train station model.}
    \label{fig:random_init}
\end{figure*}

\subsubsection{Orthogonality}
The elements' angles are determined by the pointwise inner product between $U,V$. We add an orthogonality constraint under a symmetric inner product $s$ chosen by the user. The inner product can be the identity for orthogonality, or the shape operator for conjugacy, for example.
We have:
\begin{equation}
\tag{C5}
\langle U_i, V_i \rangle_{s_i} = 0,
\label{eq:constr_ortho}
\end{equation}
where $s_i \tin \xR^{2\times2}$ is a symmetric matrix, defined using the local basis of the face $t_i \tin \mF$, and $\langle x, y \rangle_{g} = x^T g y$. 

\subsection{Relaxed Problem and Optimization Approach}
\label{sec:relaxed_opt_problem}
The constraints we require lead to a non-linear non-convex optimization problem. As a design choice, we opted for posing the problem as an unconstrained non-linear least squares problem, and solving it with the Levenberg-Marquardt~\cite{gavin2020levenberg} algorithm. 

For the LICO constraint~\eqref{eq:constr_lico}, we follow previous work on preserving local injectivity~\shortcite{schuller2013locally}, and formulate it using a spline barrier function.
This approach requires a valid initial guess, and therefore is not applicable to the CPF constraint~\eqref{eq:constr_cpf}. Hence, we opt for a penalty approach~\cite[Sec 4.2]{leyffer2010nonlinear}. 
Specifically, we reformulate the constraint as a quadratic objective, and use an increasing sequence of penalty parameters.
For simplicity, we introduce auxiliary variables $\pf \tin \xC^{\noei}$, which represent the ``consensus'' pulled back edge through the local transformations at its neighboring triangles. 
The soft constraints~\eqref{eq:constr_alignment},~\eqref{eq:constr_sizing},~\eqref{eq:constr_ortho} are reformulated as (potentially weighted) quadratic objectives.

Our relaxed optimization problem is given by:
\begin{mini}|l|
{U,V\tin\xR^{2\nof}, \pf\tin\xC^{\noei}}{E^s(U,V,\pf) + \beta E^{\Psi_c}(U,V,\pf) + E^{\Psi}(U,V)}
{}{}
\label{eq:opt}
\end{mini}
where $E^s$ is a smoothness objective, $E^{\Psi_c}$ is the penalty objective for the CPF constraint~\eqref{eq:constr_cpf}, and $\beta$ is the penalty parameter. Finally, $E^{\Psi}$ is the objective for the rest of the constraints:
\begin{equation}
E^{\Psi}(U,V) = E^{\Psi_{lico}}(U,V) +  E^{\Psi_l}(U,V) + E^{\Psi_s}(U) + E^{\Psi_s}(V) + E^{\Psi_o}(U,V),
\label{eq:opt_psi}
\end{equation}
where $E^{\Psi_{lico}}$,$E^{\Psi_l}$, $E^{\Psi_s}$, $E^{\Psi_o}$ are penalty objectives for the LICO~\eqref{eq:constr_lico}, alignment~\eqref{eq:constr_alignment}, sizing~\eqref{eq:constr_sizing} and orthogonality~\eqref{eq:constr_ortho} constraints.

The objectives are defined either per interior edge (smoothness and CPF penalty), or per face (all the others). For each such objective we specify the formulation for a single element. The total energy is computed by averaging all the elements' objectives values.  
All the gradients are provided in the \REV{supplemental material}.

\begin{figure*}[t]
    \centering
    \includegraphics[width=.9\linewidth]{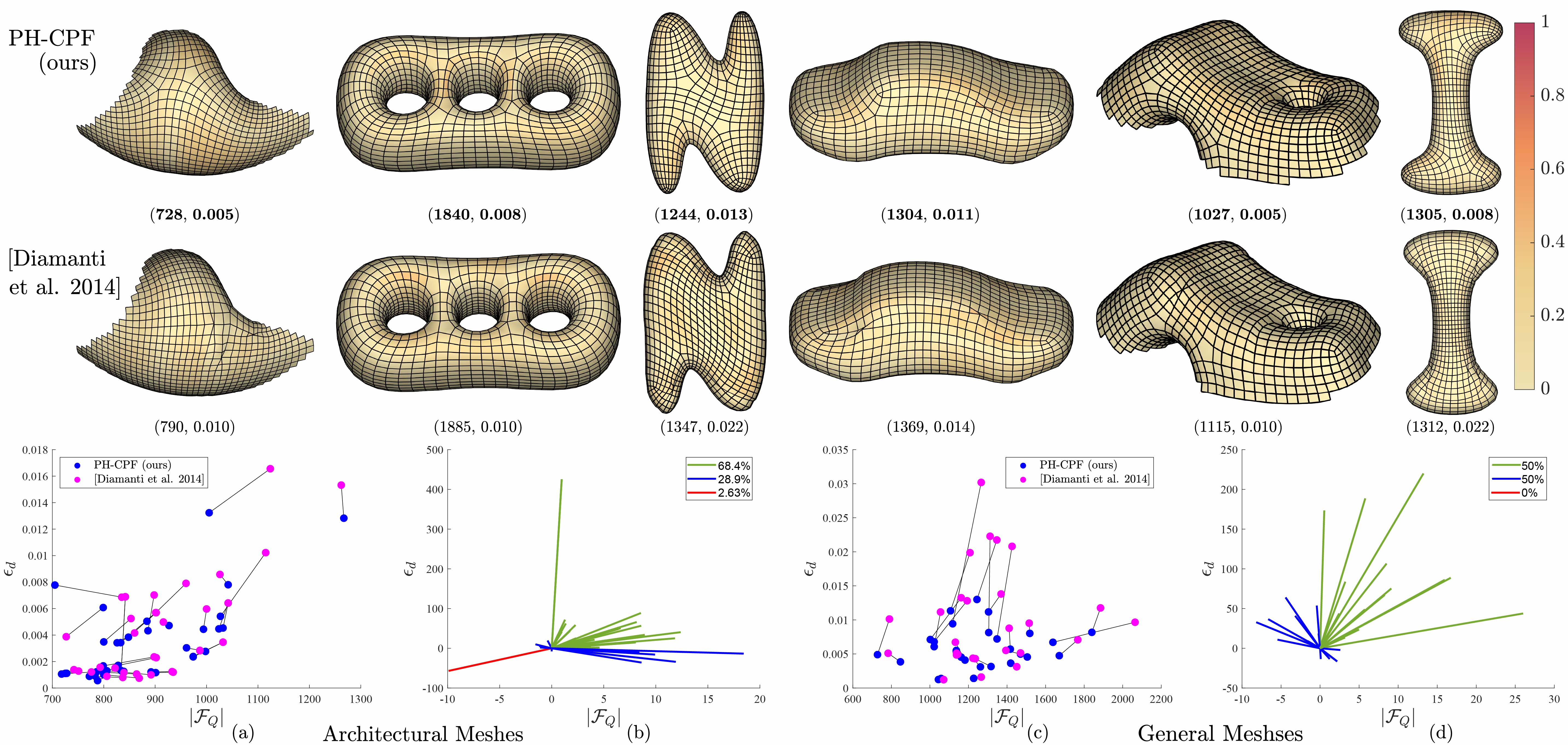}
    \caption{Computing planar quad meshes using our method and PolyVectors~\cite{diamanti2014designing}. (top) Some examples, with the number of faces $|\qF|$ and the Hausdorff distance $\hdist$ to the input mesh in parentheses. (bottom) (a,c) Scatter graph of $|\qF|$ vs $\hdist$, (b,d) tradeoff graph, see the text for details. Green lines indicate our method achieves a lower distance with fewer faces.}
    \label{fig:pq}
\end{figure*}

\subsubsection{LICO penalty}
We use the barrier function designed for Locally Injective Mappings (LIM)~\cite{schuller2013locally}, where the barrier term is applied to $s_i$:
\begin{equation}
E^{\Psi_{lico}}_{i}(U,V) =\phi_i^2\bigl(s_i(U,V) - \epsilon^{\Psi_{lico}} \bigr), \quad s_i(U,V) = \langle U_i , -J V_i \rangle,
\end{equation}
where $\phi_i$ is~\cite[Eq.(5),(6)]{schuller2013locally}:
\begin{equation}
\phi_i(x) = 
    \begin{cases}
       \ \infty, &\quad x \leq 0\\
       \ \frac{1}{g_i(x)}-1, &\quad 0 < x < \bar{s}_i \\
       \ 0, &\quad x \geq \bar{s}_i \\ 
     \end{cases},    
     \quad g_i(x) = \invf{\bar{s}_i^3}x^3 - \frac{3}{\bar{s}_i^2}x^2 + \frac{3}{\bar{s}_i}x.
     \label{eq:barrierTerm}
\end{equation}
Here $\bar{s}_i$ is the value of $s_i$ in the previous iteration times a constant $s=0.1$, and we take $\epsilon^{\Psi_{lico}} = 10^{-3}$. We take the barrier with respect to the previous iteration instead of a fixed constant to be more resilient to different scalings.

\subsubsection{CPF penalty}
The $z$ variables conciliate the pullback of $t_i$ and the pullback of $t_j$:
\begin{equation}
E^{\Psi_c}_{ij}(U,V,\pf) = \bigl|\bigl(dr_i^{-1}(e_{ij})\bigr)^N - z_{ij}\bigr|^2 + \bigl|\bigl(dr_j^{-1}(e_{ij})\bigr)^N - z_{ij}\bigr|^2,
\end{equation}
where we measure absolute difference between complex numbers.

\subsubsection{Soft constraints}
Let $\Psi(U,V) = 0$ be one of the constraints~\eqref{eq:constr_alignment}, \eqref{eq:constr_sizing} or ~\eqref{eq:constr_ortho}. Then the corresponding objective $E^{\Psi}$ is given by $|\Psi(U,V)|^2$.

\subsubsection{Smoothness}

The regularizer that is often used in $N$-RoSy field generation is the vector Laplacian, modified to accommodate the rotational symmetry~\cite{knoppel2013globally}. 
This formulation favors vector fields which are similar in neighboring faces when raising them to the $N$-th power in the appropriate basis.
The integrability constraint requires that the $N$-th power of the pulled back edge agrees across neighboring triangles, yet the orthogonal direction is not constrained.
Hence, for the regularizer we use:
\begin{equation}
E^{S}_{ij}(U,V,\pf) = \bigl|\bigl(dr_i^{-1}(J e_{ij})\bigr)^N - z_{ij}\bigr|^2 + \bigl|\bigl(dr_j^{-1}(J e_{ij})\bigr)^N - z_{ij}\bigr|^2.
\label{eq:harmonic-cpf}
\end{equation}
Note that, similarly to the CPF constraint, for $N=1$ Equation~\eqref{eq:harmonic-cpf} is equivalent to requiring that the discrete \emph{edge-based} divergence~\cite[Def. 4]{Polthier2003identifying} is $0$ for the vector fields given by the rows of $dr_i^{-1}$.

\subsection{Implementation Details}
\paragraph{Initialization}
The only constraint on our initialization is that the initial $U,V$ are LICO. The initial $z$ is always computed as the average of the $N$-th power of the pulled-back edges, namely: $z_{ij} = \big(dr_i^{-1}(e_{ij}))^N + dr_j^{-1}(e_{ij}))^N\big)/2$. 

The best initialization is highly dependent on the application. For simple surfaces,
we can obtain a good solution from random (valid) initialization. For example, taking $U$ to be a random unit length vector field, $V_i = JU_i$, and rescaling both according to $g_i$ if a sizing constraint given. Figure~\ref{fig:random_init} (left) shows examples of optimized CPFs $U,V$ (bottom) computed from a random initialization (top), with curvature based sizing (Section~\ref{sec:ph_scaling}) and the conjugacy and orthogonality constraints. Figure~\ref{fig:random_init} (right) shows the resulting quad meshes and the convergence graph of the energy for the train station model (the other graphs are similar). Note that for quad meshes, conjugacy and orthogonality combined imply alignment to curvature directions, although we did not enforce this explicitly.

\paragraph{Optimization} We use the Levenberg-Marquardt algorithm~\cite{gavin2020levenberg} with Jacobian scaling for optimization, as implemented in Matlab, with $\beta_0 = 1$. If required, we repeat the optimization, using the optimized variables as the initial solution, and increasing $\beta$ by a multiplicative factor of $2$, until mesh integration with integer seams succeeds, or a maximum of $10$ iterations is reached. In most of the results we show, the first iteration sufficed.

\paragraph{Balancing the energy terms}
\sloppy Different energy terms in Equations~\eqref{eq:opt},~\eqref{eq:opt_psi} have different units, which are mesh-dependent. Without balancing, different weighting parameters are required for different meshes. Specifically, $E^{s}$, $E^{\Psi_c}, E^{\Psi_l}$ are defined on the $N$-th power of a pulled back vector, whereas the rest of the objectives are defined on the input mesh. Hence, we rescale these energies, taking $1/\lambda^{2N} E$, where $\lambda$ is the average expected $2D$ edge length, given by $\lambda = \frac{1}{3\nof}\sum_{i=1}^{\nof} \sum_{e_{ij} \in t_i} \|e_{ij}^T \, g_i \, e_{ij}\|$.

\subsection{Meshing with CPFs}
\label{sec:meshing}
Given the computed $U,V$ vector fields, the seamless parameterization can be theoretically constructed using the definitions of the parameterization functions $u,v$ that are given in the proof. However, even if the CPF constraint is accurate to machine precision, such an implementation would require computing integer translations per edge, which is unnecessarily expensive. 
Thus, we follow the standard procedure for extracting a mesh~\cite{vaxman2019directional}. Specifically, integrating the parameterization with the prescribed quantized rotations on the edges using a Poisson solve, computing a full integer grid map by quantizing the translations~\cite{vaxman2021seamless}, and extracting the mesh by overlaying a grid on the integer grid map. We provide the details in Section~\ref{sec:int_mesh_gen}.

\subsection{Example Application: Planar quad meshing}
\label{sec:pq}
Our main interest in this paper is planar hexagonal meshing, which is explained in detail in the next section. As we introduce a general scheme for parameterization-based meshing, we provide also an example for planar quad meshing, and compare with existing work.
We set $N\!=\!4$, set the sizing to be curvature dependent (Section~\ref{sec:ph_scaling}), set alignment with $N\!=\!4$ to the prominent curvature directions (Section~\ref{sec:ph_alignment}), and set the conjugacy and orthogonality constraints. For initialization we use the smoothed curvature directions.

For the comparison, we use the PolyVectors method~\cite{diamanti2014designing} using the publicly available implementation from the Directional library~\cite{vaxman2019directional}.
We note that an implementation of Integrable PolyVectors~\cite{Diamanti2015Integrable} that includes conjugacy optimization of the vector fields (which is necessary for planar quad meshing) is not publicly available. For the competing code we we used the same shape operator as ours and applied the same planarization post processing as for our approach, setting the bound on the same maximal planarity of $1\%$. 
We compare the Hausdorff distance to the source mesh $\hdist$ and the number of faces $|\qF|$ produced by both methods for a set of architectural meshes and a set of general meshes. Figure~\ref{fig:pq} (top) shows a few qualitative examples from the sets, as well as $(|\qF|, \hdist)$. Figure~\ref{fig:pq} (bottom, (a,c)) shows a scatter plot of $\hdist$ vs $|\qF|$ for both methods, where connected points represent the same mesh. The graphs (b,d), show the difference between our results, represented as the origin, to Diamanti's results, in percentages. Green lines, in the first quadrant, represent meshes for which our method produces a lower $\hdist$, with less elements. Blue lines, in the second and third quadrant, represent the trade-off between the methods, one achieved a better distance, while the other, less elements. The fourth quadrant, with a single red line, shows a case where Diamanti's method produced better results in both criteria. Note that most meshes fall in the first quadrant, indicating that for the same planarity values, our method achieved a lower Hausdorff distance with a smaller number of elements.

\begin{figure}[t]
    \centering
    \includegraphics[width=.49\linewidth]{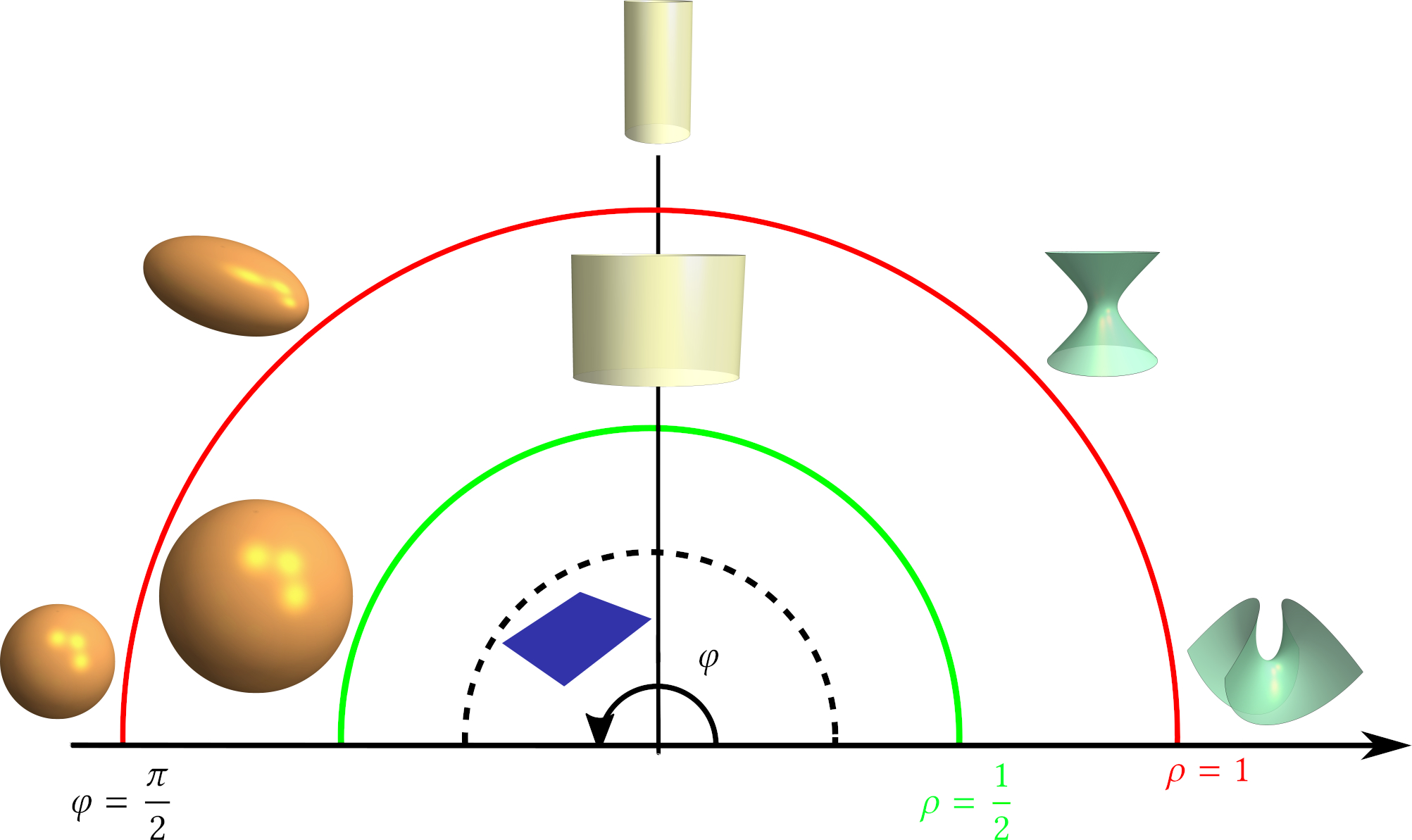}
    \includegraphics[width=.49\linewidth]{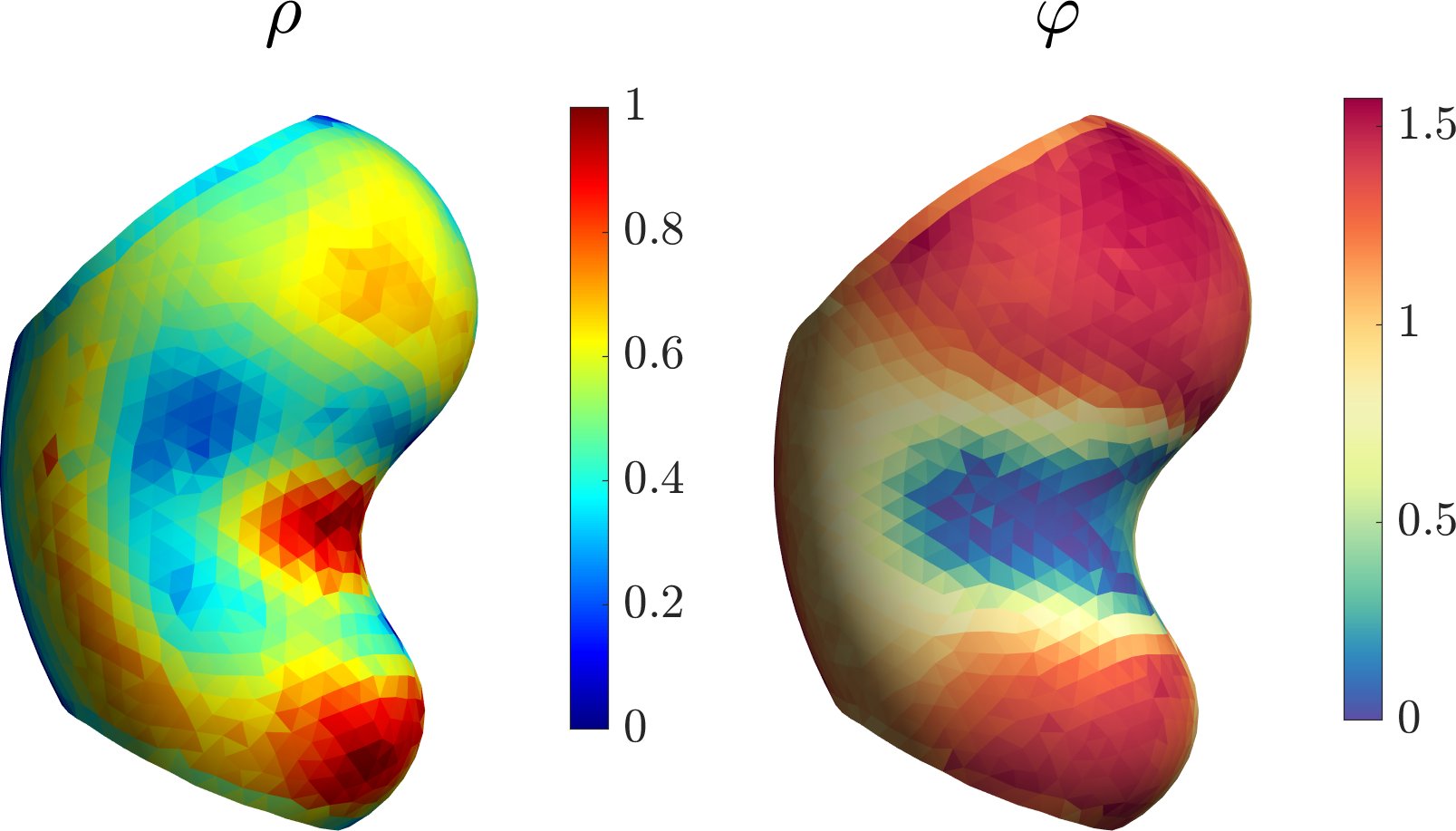}
    \caption{Curvature regions and their polar representation with $\varphi$ and $\rho$.}
    \label{fig:rho_phi}
\end{figure}

\section{Planar Hex Meshing}
\label{sec:ph_meshing}
Our main goal is generating planar hexagonal meshes which approximate a given surface, and we use the CPF framework for the optimization. In the following we describe the constraints that we apply, which are unique for hexagonal meshes. 

\subsection{Background: Curvature and the Dupin Indicatrix}
\paragraph{Notations and polar curvature representation.}
The maximal and minimal curvatures at $p \tin \sM$ are denoted by $\kmin(p)$, $\kmax(p) \tin \xR$, respectively, with corresponding maximal and minimal curvature directions denoted by $\dmin(p), \dmax(p) \tin \tpm$. The Gaussian and mean curvatures are denoted by $\kgauss(p), \kmean(p)$, respectively, and the shape operator by $\shapeop_p$.  The normal curvature in the direction $v \tin \tpm$ is denoted by $\kn(v)$. Two directions $x,y \tin \tpm$ are conjugate if they are orthogonal with respect to the shape operator, namely $\langle x, y \rangle_{\shapeop_p} = x^T \shapeop_p y = 0$. We drop $p$ when the meaning is clear.

Curvature on triangle meshes is notoriously difficult to compute robustly, requiring special treatment in the context of curvature aligned meshing~\cite{campen2016scale}. 
Thus, similarly to Niesser et al.~\shortcite{nieser2011hexagonal}, we use a polar representation of the curvatures: 
\begin{equation*}
\varphi = \left|\operatorname{atan}\left(\frac{\kmin + \kmax}{\kmin - \kmax}\right)\right|, \quad \rho = \sqrt{\kmin^2 + \kmax^2},
\end{equation*}
where  $\varphi$ is a smooth representation of the curvature type of the point (elliptic $\kgauss > 0$, parabolic $\kgauss = 0$ or hyperbolic $\kgauss < 0$), and $\rho$ represents distance from planarity (we normalize it between $0$ and $1$). Thus, $\varphi$ is well defined only for $\rho > 0$ (see Figure~\ref{fig:rho_phi}).

\begin{figure}[b]
    \centering
    \includegraphics[width=.9\linewidth]{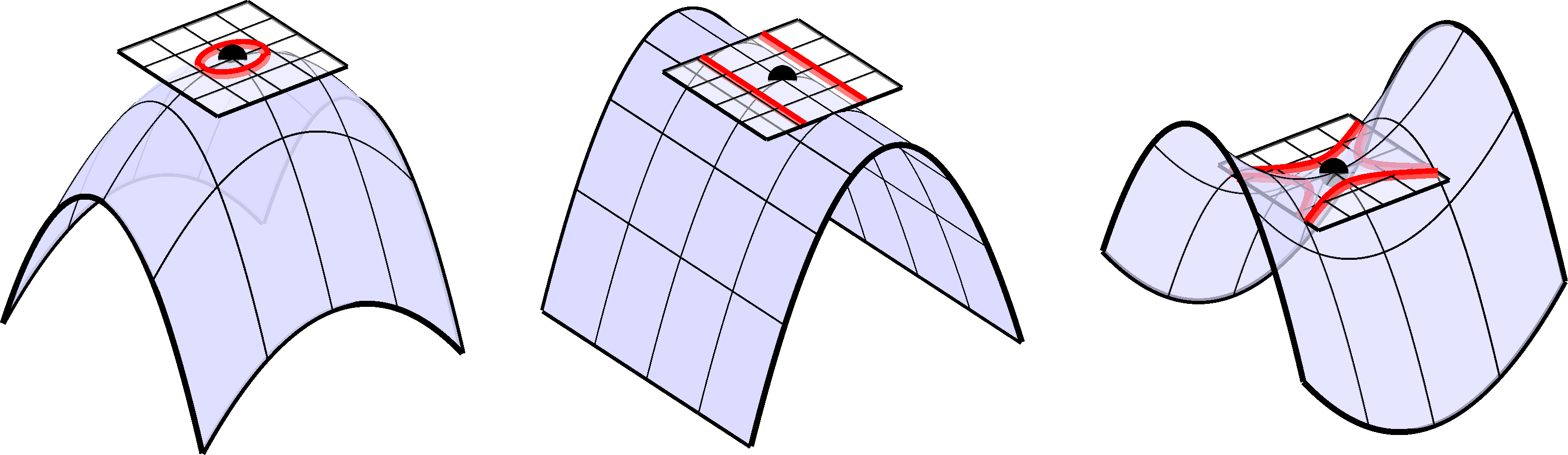}
    \includegraphics[width=.9\linewidth]{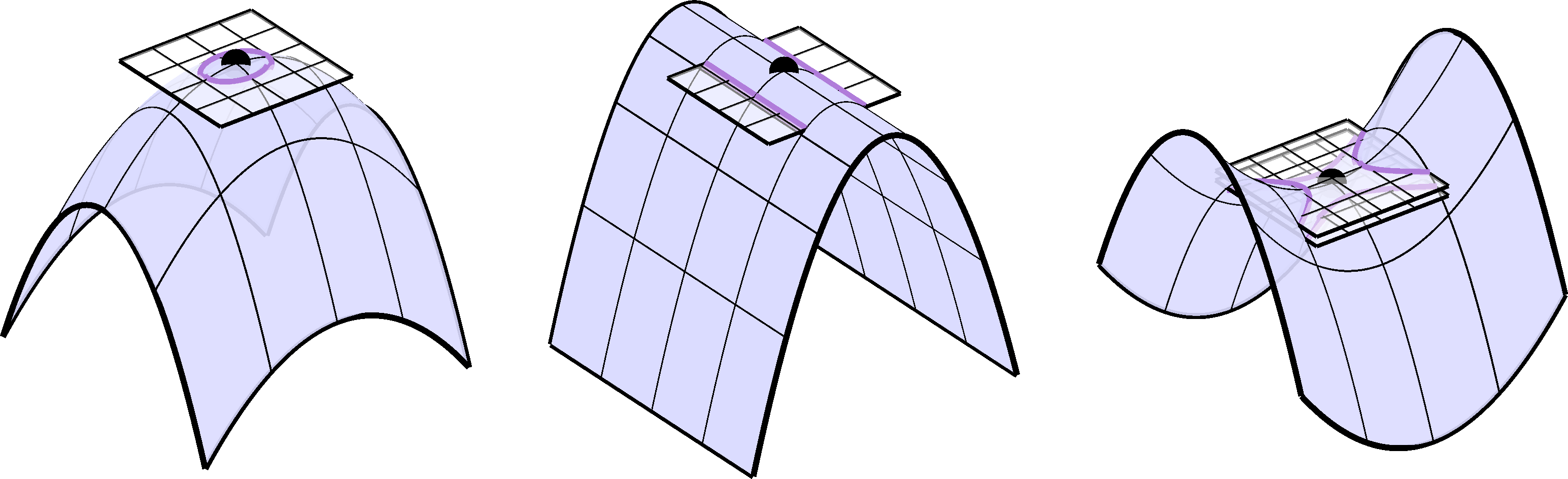}
    \caption{(top) The Dupin indicatrix at $p\tin\sM$. (bottom) The intersection of a plane parallel to the tangent plane and at a small distance from it with the surface is similar to the Dupin Indicatrix. When $p$ is (left) elliptic, (center) parabolic and (right) hyperbolic. }
    \label{fig:dupin}
\end{figure}

\paragraph{The Dupin Indicatrix}
An important notion for computing PH meshes, as has been shown by Wang et al.~\shortcite[Thm. 1]{wang2009note}, is the concept of the \emph{Dupin Indicatrix}~\cite[Chap. IV]{eisenhart1960treatise}.

The Dupin indicatrix at an elliptic point $p \tin \sM$ is a curve on the tangent plane $\tpm$, given by the points $p + |\kn(v)|^{-1/2} \hat{v}$ for all unit length vectors $\hat{v} \in \tpm$. In local curvature coordinates, i.e. taking $p$ as the origin and $\dmin,\dmax$ as the local $x,y$ axes, the Dupin indicatrix is the ellipse $|\kmin|x^2 + |\kmax|y^2 = 1$. At a hyperbolic point it is given by the conjugate hyperbolas $\kmin x^2 + \kmax y^2 = \pm 1$, and at parabolic points by two parallel lines, either $|\kmin| x^2 = 1$ or $|\kmax| y^2 = 1$ (see Fig.~\ref{fig:dupin} (top)).

The relation between the Dupin indicatrix and planar meshing is clearly visualized when considering the local geometry at $p$ (see Figure~\ref{fig:dupin_plane_and_hex_notations}, inspired by~\cite[Fig.2]{lisle2003dupin}).  Consider a normal plane in the direction $\hat{v}\tin\tpm$ at $p$, and its intersection curve with the surface. Further, take a plane parallel to $\tpm$ and at distance $\h$ from it, such that it intersects the surface, and let $2d$ be the distance between the two intersection points of the surface and these two planes. 
If $r$ is the radius of curvature at $p$, then for small enough $\h$ we have $d^2 = r^2 - (r-\h)^2 =  2\h|\kn(\hat{v})|^{-1} + O(t^2)$.
Hence, up to first order in $t$, the intersection curve between the cutting plane and the surface is a uniform scale of the dupin indicatrix (see Figure~\ref{fig:dupin} (bottom)). We use this property when we design our sizing constraints for PH meshing, as we discuss next.

\begin{figure}[t]
    \centering
    \includegraphics[width=.63\linewidth]{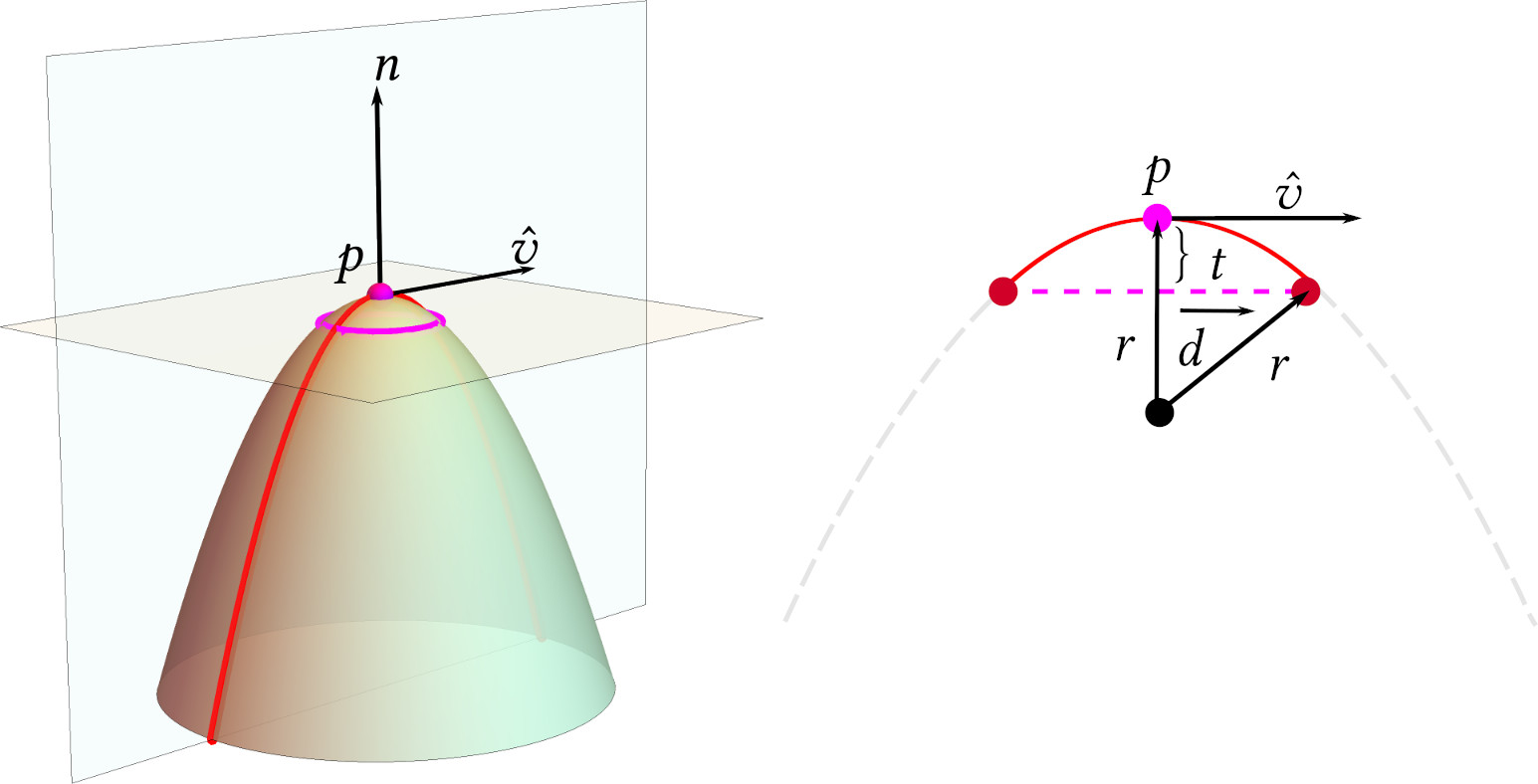}
    \includegraphics[width=.33\linewidth]{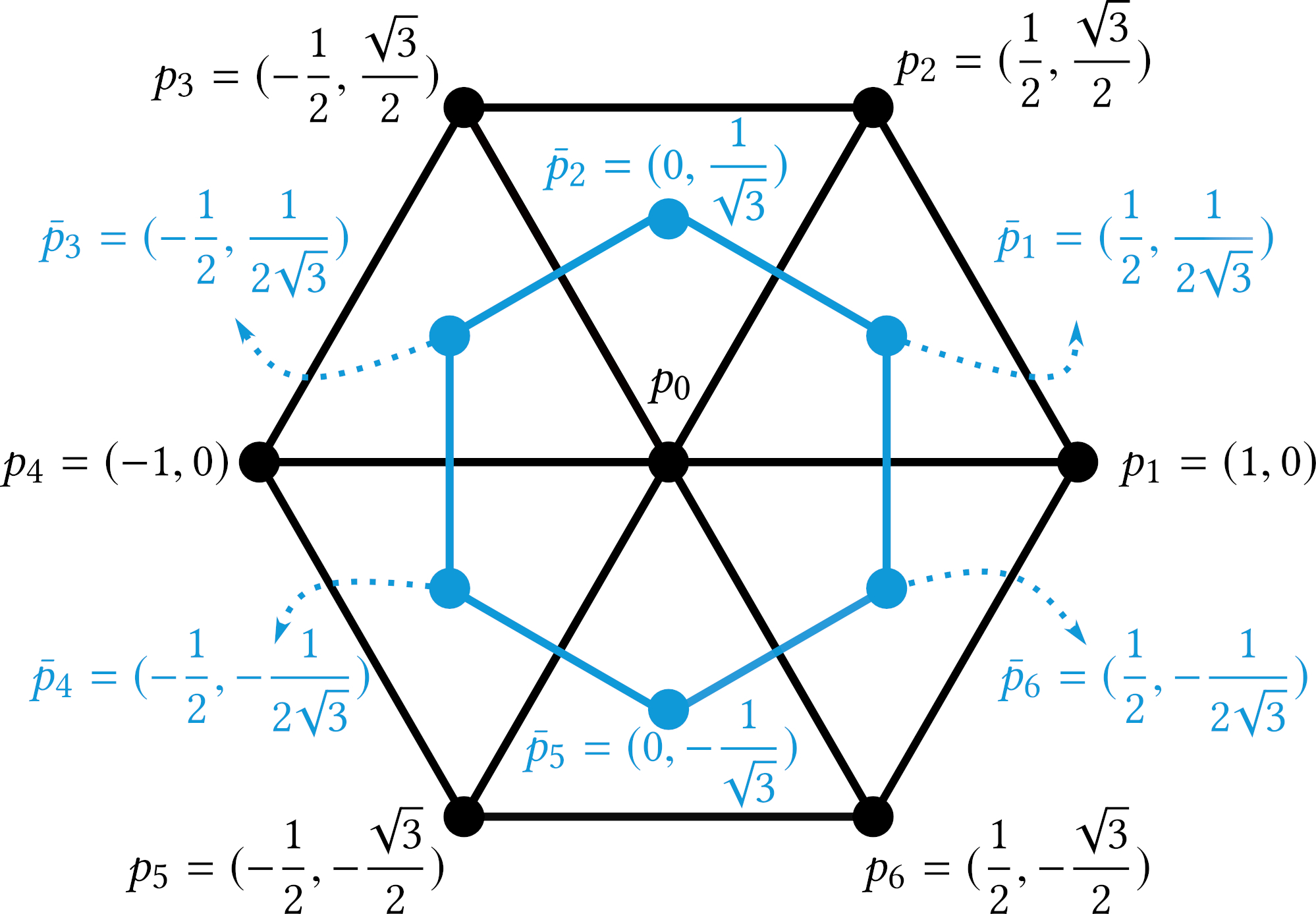}
    \caption{(left) The local geometry near a point $p\tin\sM$, with a normal plane in the direction $\hat{v}\tin\tpm$, and a cutting plane parallel to $\tpm$ at distance $\h$ from it. (right) Notations for the pushforward of one ring of triangles, and the corresponding dual hexagon.}
    \label{fig:dupin_plane_and_hex_notations}
\end{figure}

\subsection{Dual Meshing}
Givan a parameterization $r : \sU \subset \xR^2 \to \Omega \subset \sM$, and a point on the triangular grid $\s{p} \tin \h\mathbb{A}^2 \cap \sU$, let $\s{p}_i, i \in [1,..,6]$ be its one-ring grid neighbors, and  $\bar{\s{p}}_i = (\s{p} + \s{p}_i + \s{p}_{i+1})/3$ be the barycentric dual vertices (see Figure~\ref{fig:dupin_plane_and_hex_notations} (right)). Next, consider the points $r(\bar{\s{p}}_i) \tin \sM$ and the corresponding hexagon $\shp$  generated by linearly connecting the pushed forward points to yield a closed polyline. This hexagon represents one element in the hexagonal remeshing of the input surface. 
Our general goals are: 
\begin{itemize}
\item \emph{Planarity.} The vertices of $\shp$ are co-planar, and $\shp$ is a simple polygon.
\item \emph{Approximation.} The Hausdorff distance between the surface and $\shp$ is minimal among all hexagons with the same area approximating the surface at $p = r(\s{p})$.
\item \emph{Fairness.} $\shp$ has horizontal and vertical reflection symmetry.
\end{itemize}

\subsection{Obstruction to convexity.} 
Parameterization-based dual meshing generates \emph{convex} hexagonal faces. However, the faces of a PH mesh approximating a given surface are inscribed in a quadric which is a uniform scale of the Dupin indicatrix~\cite[Thm. 1]{wang2009note} (see also Fig.~\ref{fig:dupin_plane_and_hex_notations} (left)). Hence, for elliptic, parabolic and hyperbolic regions the expected face shapes are \emph{convex} hex, \emph{brick} hex, and \emph{bow-tie} hex, (see Fig~\ref{fig:hex_shapes}).

\begin{figure}[b]
    \centering
    \includegraphics[width=.9\linewidth]{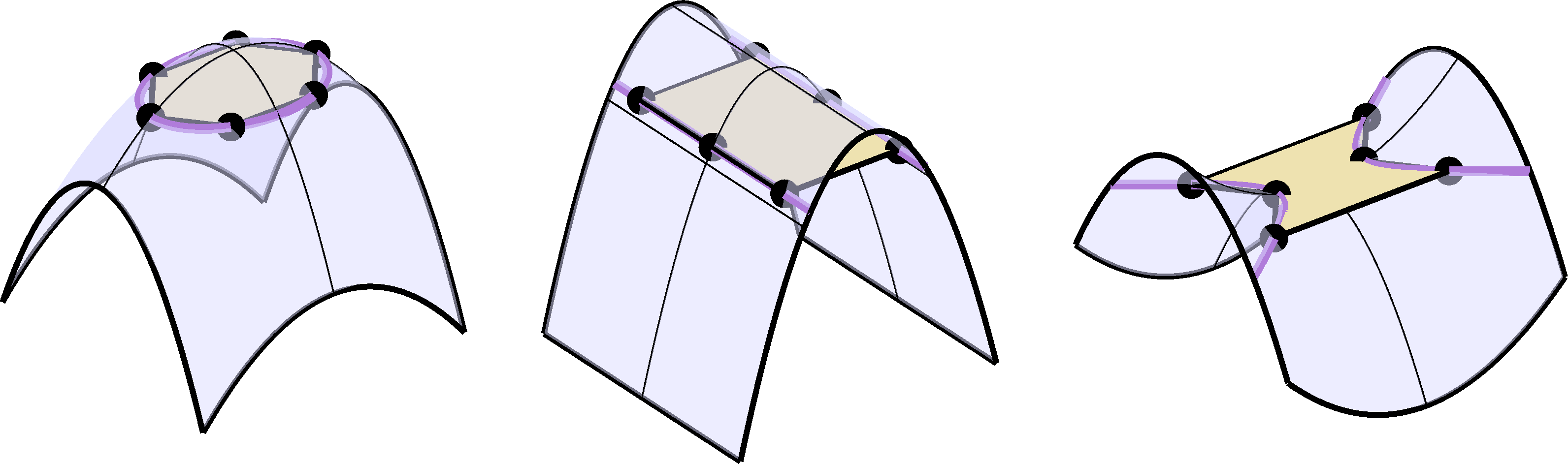}
    \includegraphics[width=.9\linewidth]{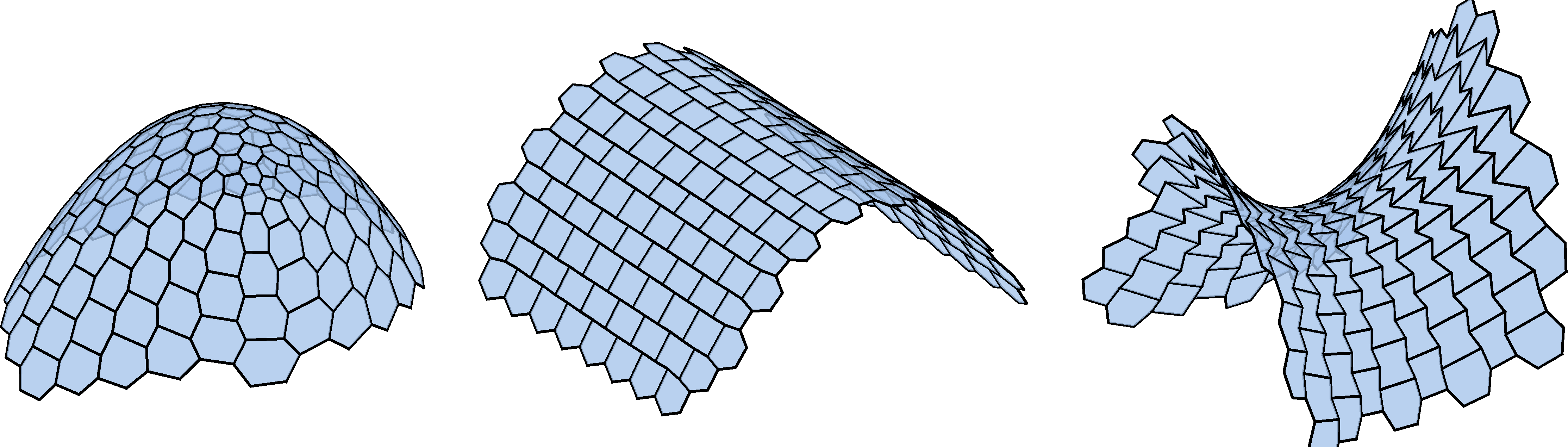}
    \caption{(top) Expected shapes of planar hexes in different curvature regions: convex, brick and bow-tie. (bottom) Quadric surfaces exhibiting these face shapes: elliptic paraboloid, 	parabolic cylinder, and hyperbolic paraboloid.}
    \label{fig:hex_shapes}
\end{figure}

Thus, regardless of the parameterization chosen, we \emph{cannot} expect the mesh faces to be planar, unless the surface has only positive Gaussian curvature.
There exist methods for \emph{planarizing} a hexagonal mesh using a non linear optimization~\cite{tang2014form,Jiang2015polyhedral}. These approaches generate a valid PH mesh, which fulfills our stated goals, \emph{given a good initialization}. We denote a mesh that can serve as the initial solution for a planarization approach as \emph{planarizable}. We compute planarizable meshes using the CPF framework, by designing constraints on the CPFs $U,V$. In the following we describe our pipeline, and the required CPF constraints.

\subsection{Algorithm}
Our algorithm is outlined in the following (see Figure~\ref{fig:pipeline}).
\begin{algorithm}
\SetAlgoLined
\KwIn{$\mM, \delta, \eta, \h$}
\KwOut{$\hM$}
 Compute guiding field from curvature, and use as alignment constraints (Sec.~\ref{sec:ph_alignment} to~\ref{sec:cpf_alignment_constraints}) \;
 Compute scaling constraints from $\delta,\eta$ (Sec.~\ref{sec:cpf_scaling_constraints}) \;
 Add orthogonality (Sec.~\ref{sec:cpf_orthogonality_constraints}) and conjugacy (Sec.~\ref{sec:strips}) constraints\;
 Solve optimization problem for $U,V$  (Sec.~\ref{sec:relaxed_opt_problem}) \;
 Convert $U,V$ to $H_1,H_2,H_3$ (Sec.~\ref{sec:triangular_meshing}) \;
 Integrate $H_i$ into parameterization functions $h_i$ (Sec.~\ref{sec:integration}) \;
Extract primal triangular mesh $\rM$ from $h_i$ (Sec.~\ref{sec:mesh_extraction}) \;
Extract barycentric dual mesh $\rdM$ from $\rM$ (Sec.~\ref{sec:mesh_extraction}) \;
Planarize dual mesh $\rdM$ to the final PH mesh $\hM$ (Sec.~\ref{sec:planarization})
\label{alg:main_alg}
 \caption{An outline of our algorithm}
\end{algorithm}

\begin{figure*}[t]
    \centering
    \includegraphics[width=\linewidth]{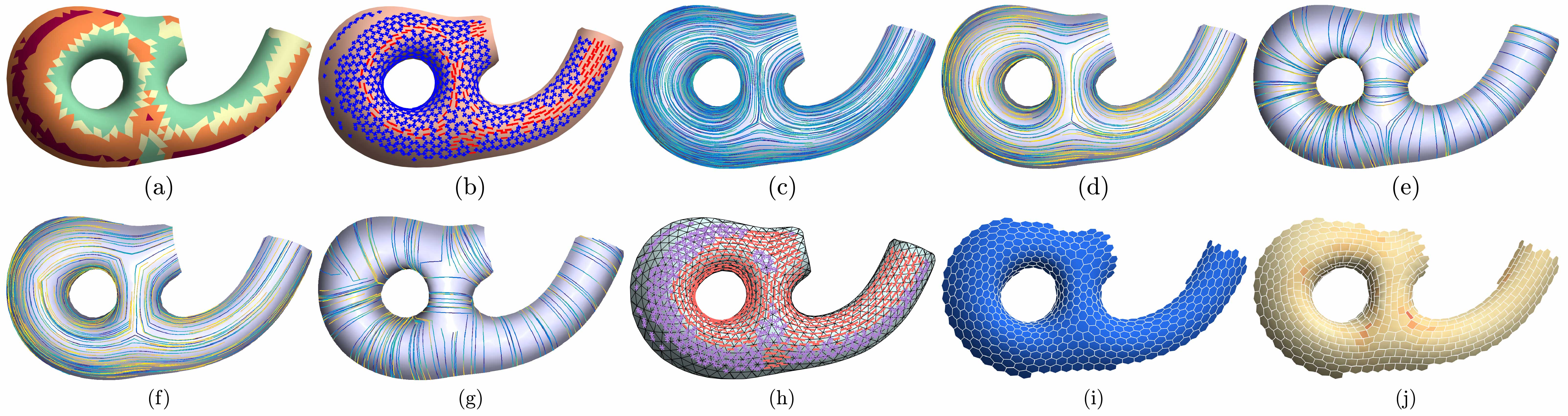}
    \caption{Our algorithm: (a) curvature regions, (b) alignment constraints for guiding field (Equation~\eqref{eq:opt_guiding_field}), (c) guiding field, (d,e) initial $U,V$, (f,g) optimized $U,V$, (h) extracted primal triangular mesh $\rM$, overlayed with the CPF alignment constraints (Equation~\eqref{eq:cpf_alignment_constraints}),(i) dual hexagonal mesh $\rdM$, (j) final planar mesh $\hM$.}
    \label{fig:pipeline}
\end{figure*}

\begin{figure}[t]
    \centering
    \includegraphics[width=.9\linewidth]{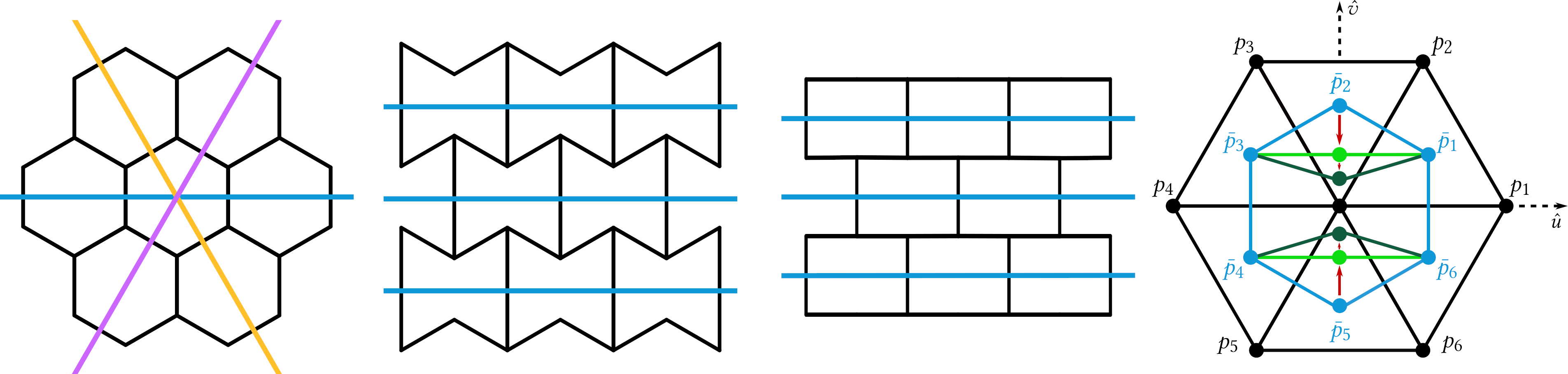}
    \caption{(left) Developable strips of hexes in different curvature regions. (right) A dual convex hex (blue) can be modified to the brick or bow-tie shapes (green) if it is aligned correctly. }
    \label{fig:strips}
\end{figure}

\subsection{Planarity by Conjugacy and Alignment}
\subsubsection{Conjugacy and developable strips}
\label{sec:strips}
Following Wang et al.~\shortcite{wang2009note} we consider strips of hexagons which are developable (see Figure~\ref{fig:strips} (left)). 

The relation between conjugate direction fields and planar meshing is well established~\cite{Liu2006geometric,liu2011general,wang2009note}. Specifically, consider a hexagon $\shp = \{\bar{p}_i \, | \, i \tin [1,..,6]\}$ at the tangent plane at $p \tin \sM$. Using the argument that the discrete ruling directions are conjugate to the central curve of a developable strip, Wang et al.~\shortcite{wang2009note} have shown that for a convex hexagon, every pair of consecutive edge vectors $e_i = \bar{p}_i - p$ fulfills $\langle e_i + e_{i+1}, e_i - e_{i+1} \rangle_{S_p} = 0$.

Plugging in the expressions for the edge vectors as a function of $U_p,V_p$ it is straightforward to show (see the \REV{supplemental material}) that in elliptic regions, these conditions are equivalent to:
\begin{align}
\label{eq:conjugacy_uv}
\langle U_p, V_p \rangle_{S_p} &= 0, \\
\label{eq:conjugacy_uu}
\langle U_p, U_p \rangle_{S_p} &= \langle V_p, V_p \rangle_{S_p}.
\end{align}

Therefore, in the CPF framework, we use the orthogonality constraint $\langle U_i, V_i \rangle_{s_i}
 = 0$, taking $s_i$ to be the shape operator at face $t_i \tin \mF$.  Note that Equation~\eqref{eq:conjugacy_uu} implies $\langle U_p, U_p \rangle_{S_p} = \kn(U_p)\|U_p\|^2 = \kn(V_p)\|V_p\|^2 = \langle V_p, V_p \rangle_{S_p}$. In elliptic regions, where $S_p$ is positive definite and the normal curvatures have the same sign, this is equivalent to $|\kn(U_p)|\|U_p\|^2 = |\kn(V_p)|\|V_p\|^2$,  and hence $U_p, V_p$ trace a uniform scale of the Dupin indicatrix. Thus, this constraint will be covered by the scaling constraint as we explain in Section~\ref{sec:ph_scaling}.

\subsubsection{Alignment} 
\label{sec:ph_alignment}
A strip of convex hexes is planarizable if there are degrees of freedom allowing the hex vertices to move to the required hex shapes (see Figure~\ref{fig:strips} (right)). 
Hence, in parabolic regions the $U$ direction \emph{must} be aligned with the ruling direction, i.e. the direction of \emph{minimal absolute curvature}, which we denote by $\damin$. 
In hyperbolic regions, the $U$ direction should be aligned with either $\dmin$ or $\dmax$, and in elliptic regions alignment is flexible, yet hexes would be easier to planarize and better shaped if $U$ is aligned either with $\dmin$ or $\dmax$.

The minimal absolute curvature direction $d$ is defined by:
\begin{equation}
\label{eq:damin}
\damin = 
\begin{cases} 
      \dmin & |\kmin| < |\kmax| \\
      \dmax & \text{otherwise}
\end{cases},
\end{equation}
and it is not necessarily smooth on the surface. For example, in parabolic regions of positive mean curvature, we have $\kmin = 0, \kmax > 0$, and thus $\damin = \dmin$. On the other hand, in parabolic regions of negative mean curvature, we have $\kmin < 0, \kmax = 0$ and thus $\damin = \dmax$. Furthermore, when the discrete curvature is noisy $\damin$ is not stable in umbilical regions where $\kmin = \kmax$, nor in regions where $\kmin = -\kmax$.

To consolidate the different alignment constraints in different regions, we define \emph{curvature regions} in terms of $\rho, \varphi$ and then compute a smooth $2$-RoSy \emph{guiding field} which will provide alignment constraints for the CPFs computation.

\subsubsection{Curvature regions.} Following Nieser et al.~\shortcite{nieser2011hexagonal}, we use $\rho,\varphi$ to robustly define curvature regions that contribute to the alignment constraints as previously discussed. Let $\mFe,\mFp,\mFh \subset \mF$ denote the faces sets of elliptic, parabolic and hyperbolic curvature regions, respectively. They are defined as follows:
\begin{equation}
\begin{aligned}
\mFe &= \{ \, t_i \tin \mFnp\setminus\mFu \quad | \quad \varphi - \pi/4 \geq \deltap\,\}, \\
\mFp &= \{ \, t_i \tin \mFnp \quad | \quad |\varphi - \pi/4| < \deltap  \,\}\\
\mFh &= \{ \, t_i \tin \mFnp \quad | \quad \varphi - \pi/4 \leq -\deltap \,\},
\end{aligned}
\label{eq:face_sets}
\end{equation}
where $\mFnp = \{ t_i \tin \mF \, | \, \rho > \rhom\}$ are the non planar regions, $\mFu = \{ t_i \tin \mF \, | \, \varphi \geq \pi/2 - \deltae\}$ are the umbilical regions, and we take $\rhom = 0.01, \deltap = 0.05, \deltae = 0.2$. Figure~\ref{fig:curvature_regions} shows a few examples of color coded curvature regions. 

\begin{figure}[b]
    \centering
    \includegraphics[width=.9\linewidth]{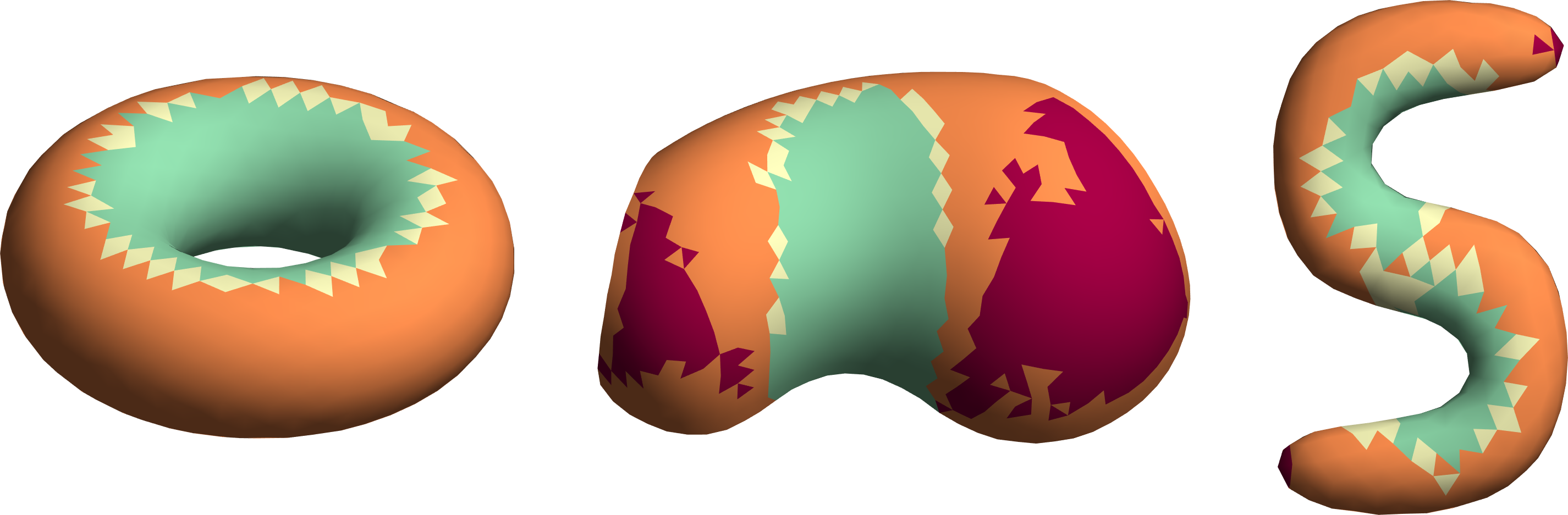}
    \caption{Elliptic (orange, $\mFe$), parabolic (yellow, $\mFp$)  and hyperbolic (green, $\mFh$) curvature regions. Red regions are umbilical.}
    \label{fig:curvature_regions}
\end{figure}

\subsubsection{Guiding field.} 
We define an optimization problem for a $2$-RoSy \emph{guiding field} $\gf$, that is aligned with either $\dmin$ or $\dmax$ according to the curvature regions. Let $\mFs = \mFp, \mFq = \mFe \cup \mFh$. Then, for all faces in $\mFs$, the guiding field should be aligned with $\pm\damin$, where as for faces in $\mFq$ it should be aligned either with $\pm\damin$ or $\pm\s{i}\damin$. Hence, we obtain the following optimization problem.

\begin{mini}|l|
{\gf\tin\xR^{2\nof}}{\gf^T \Delta_2 \gf}
{}{}
\addConstraint{\gf_i - \damin_i^2}{=0,}{\quad t_i \tin \mFs}
\addConstraint{\gf_i^2 - \damin_i^4}{=0,}{\quad t_i \tin \mFq},
\label{eq:opt_guiding_field}
\end{mini}
where $\Delta_2 \tin \xR^{2\nof\times2\nof}$ denotes the $2$-RoSy Laplacian~\cite{knoppel2013globally}, expressed in the local face basis.

This problem has a convex quadratic objective and linear and quadratic constraints. We reformulate it as an unconstrained non-linear least squares problem, as follows.
\begin{mini}|l|
{\gf\tin\xR^{2\nof}}{\frac{1}{\lambda}\gf^T \Delta_2 \gf + \beta(E^{\Psi_2} + E^{\Psi_4}),}
{}{}
\label{eq:opt_guiding_field_relaxed}
\end{mini}
where $E^{\Psi_2}, E^{\Psi_4}$ are the average $L2$ errors of the first and second constraints, respectively, $\lambda$ is the first non-zero eigenvalue of $\Delta_2$, and we take the penalty weight $\beta=10^5$. 

The problem~\eqref{eq:opt_guiding_field} without the second constraint is convex, and we take its solution as the initial solution for the problem~\eqref{eq:opt_guiding_field_relaxed}, for which we use the Levenberg-Marquardt algorithm. Figure~\ref{fig:guiding_field} (left) shows an example of $2$- and $4$-RoSy constraints and the resulting guiding field $\gf$ (middle).

\begin{figure}[b]
    \centering
   \includegraphics[width=.9\linewidth]{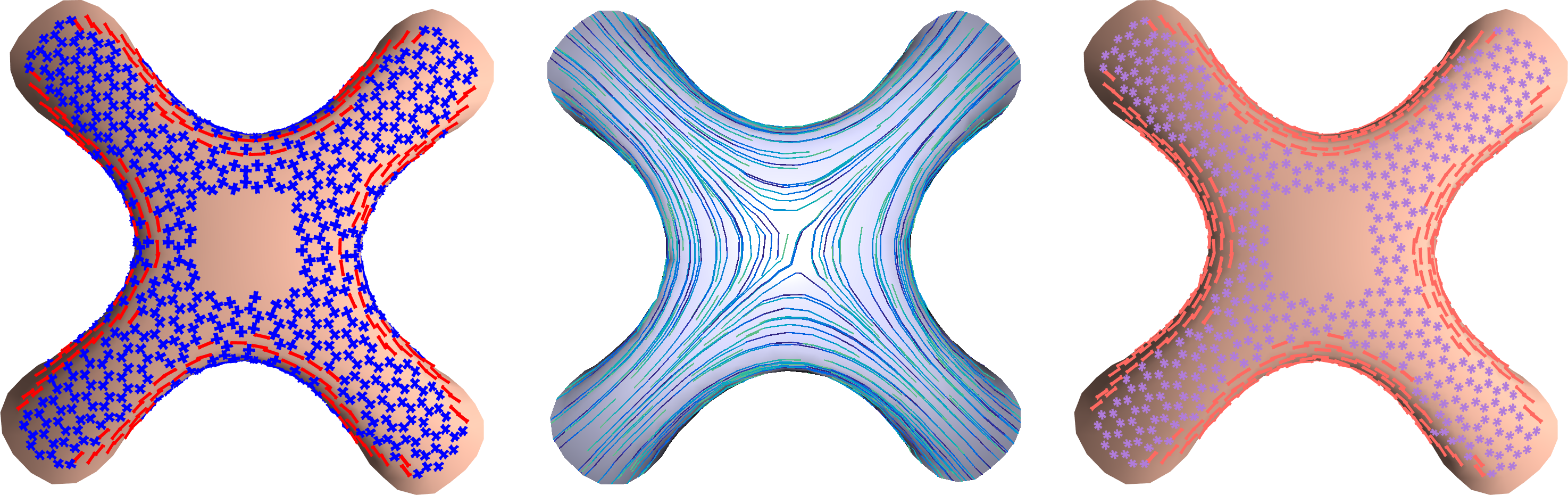}
    \caption{(left) Constraints from curvature regions, (middle) resulting $2$-RoSy guiding field $\gf$, (right) alignment constraints for CPF optimization.}
    \label{fig:guiding_field}
\end{figure}

\subsubsection{CPF Alignment constraints.} 
\label{sec:cpf_alignment_constraints}
Finally, we use the guiding field to setup the alignment constraints for the CPF framework. As discussed, in elliptic regions there are three developable strips, thus the guiding field can be aligned with any of the pushed forward grid directions. On the other hand, in parabolic and hyperbolic regions there is a single strip, which should be aligned with the pushed forward $\pm\hat{\s{u}}$ to generate a planarizable dual hex, as in Figure~\ref{fig:strips} (right). Hence, we set the constraints:
\begin{equation}
\begin{aligned}
d_i &= \sqrt{\gf_i}, \quad \aln{i}=6, \quad t_i \tin \mFe, \\
d_i &= \sqrt{\gf_i}, \quad \aln{i}=2, \quad t_i \tin \mFp \cup \mFh.
\end{aligned}
\label{eq:cpf_alignment_constraints}
\end{equation}
The CPF alignment constraints are visualized in Figure~\ref{fig:guiding_field} (right). 

These constraints are essential for achieving a planarizable mesh. In Figure~\ref{fig:ablation_alignment} we compare the results of our algorithm with and without alignment constraints,  shown in the left and right part of the Figure, respectively. We show the resulting $U,V$ fields, the dual mesh $\rdM$ and the final PH mesh $\hM$. In both experiments we used conjugacy and scaling constraints. Note that without alignment constraints, the resulting $U$ field is not aligned with the minimal absolute curvature direction, and the planarization leads to deformed hexagons. In this figure, and all the figures that show color coded planarity error, all the faces that have a planarity error larger than the maximal value on the color scale are showed with the color of the maximal value. This is done to better visualize the low planarity regions on meshes which have large maximal planarity error.

\begin{figure}[t]
    \centering
    \includegraphics[width=\linewidth]{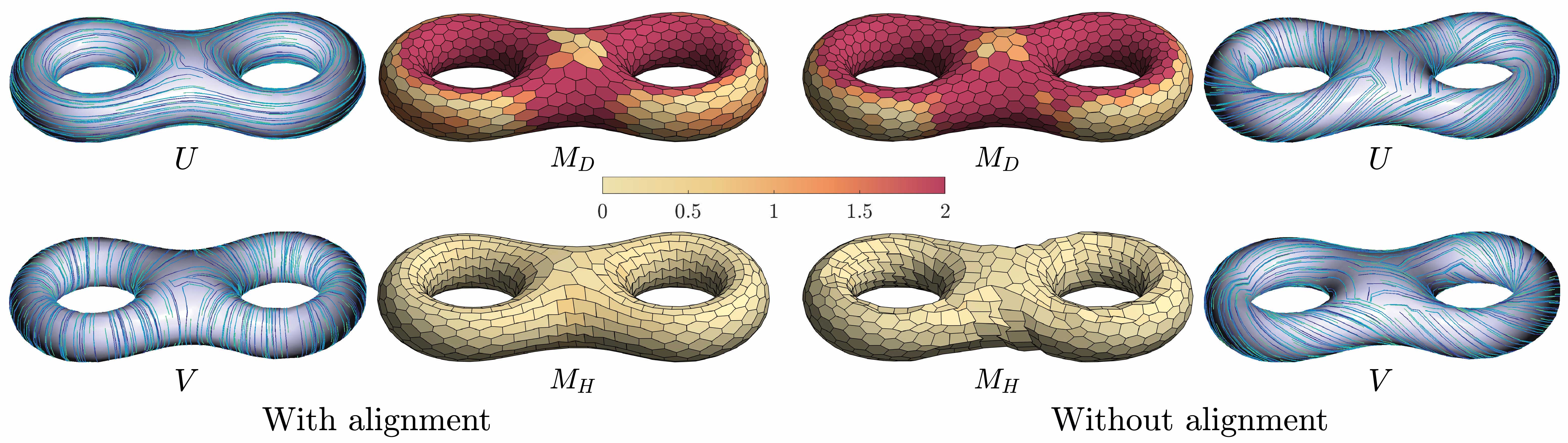}
    \caption{Our results with (left) and without (right) alignment constraints. We show the final $U,V$ fields, and the dual and final meshes. Note that without alignment constraints the $U$ field is not aligned with the minimal absolute curvature direction, leading to a non-planarizable mesh.}
    \label{fig:ablation_alignment}
\end{figure}

\begin{figure}[b]
    \centering
    \includegraphics[width=\linewidth]{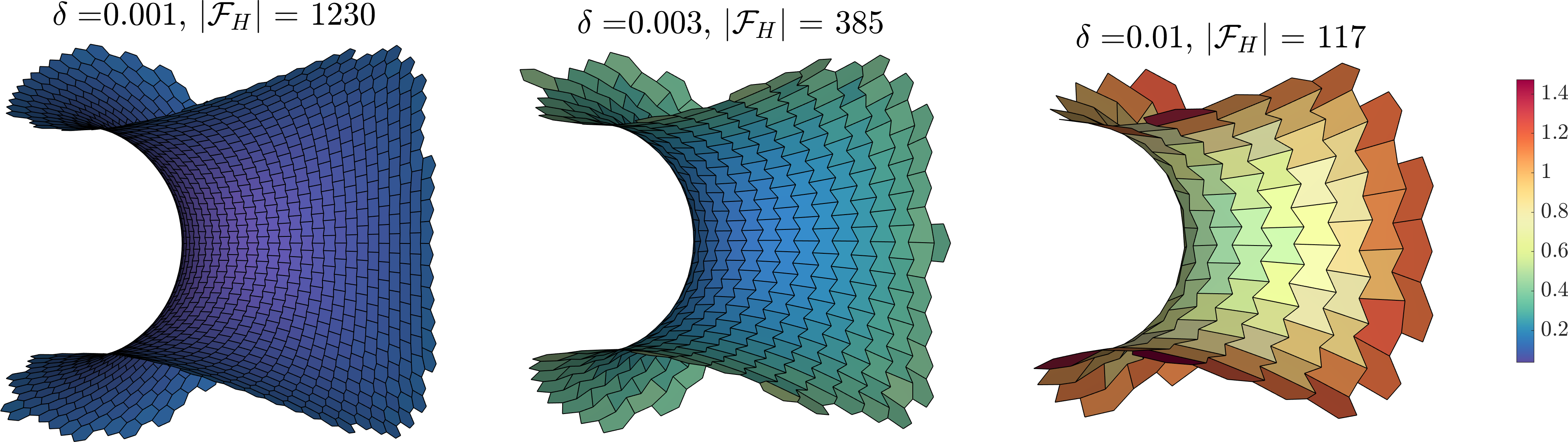}
    \par\vspace{.1in}
    \includegraphics[width=\linewidth]{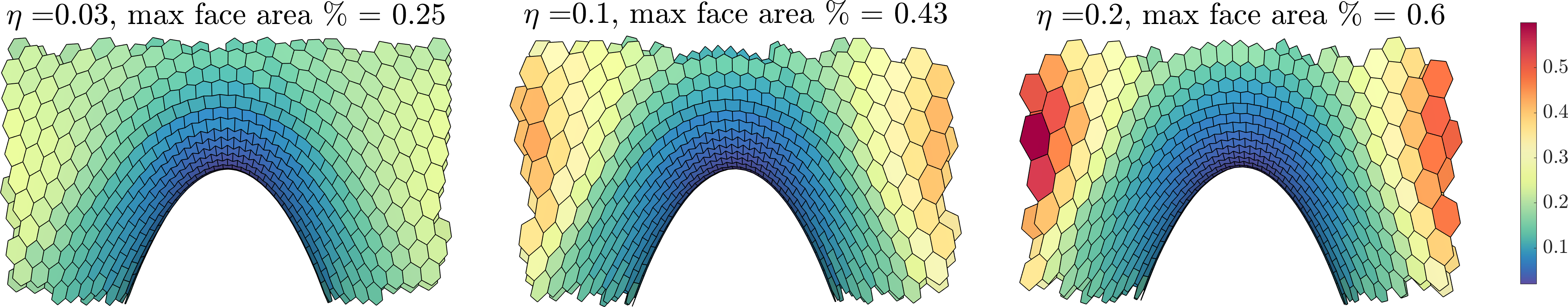}
    \caption{Effect of scaling parameters $\delta,\maxel$. (top) Decreasing $\delta$ while keeping $\maxel$ fixed allows a closer approximation of the surface at non-degenerate curvature regions. (bottom) Increasing $\maxel$ while keeping $\delta$ fixed allows for larger elements in flat regions. We show the color coding of the face area's percentage of the total surface area. }
    \label{fig:scaling_params}
\end{figure}

\begin{figure}[t]
    \centering
    \includegraphics[width=.59\linewidth]{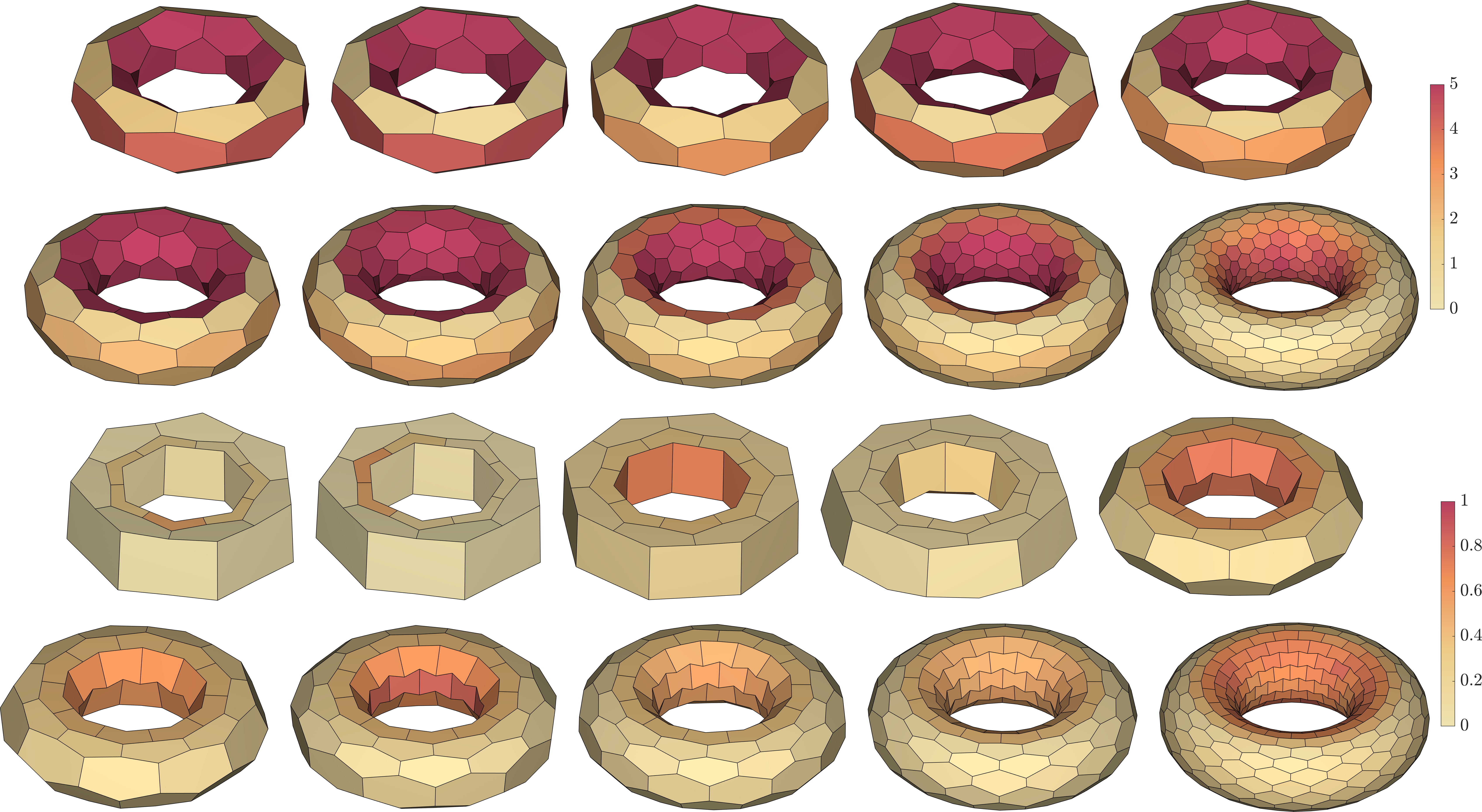}
    \includegraphics[width=.39\linewidth]{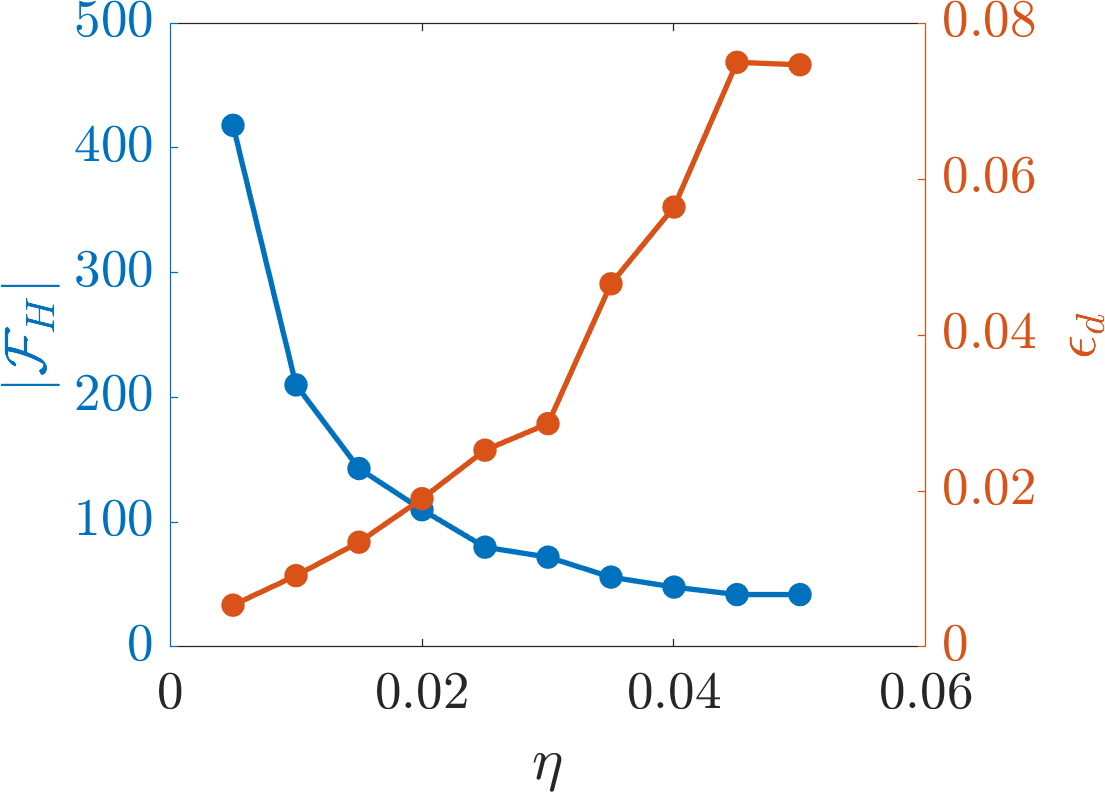}
    \caption{Varying $\maxel,\delta$ together while keeping their percentage ratio fixed allows us to smoothly vary the mesh resolution. (left) The two top rows show the dual mesh before planarization, and the two bottom rows the final PH mesh, all color coded by the planarity value (with maximal values $5$ and $1$ resp.). (right) The trade-off between the number of elements $|\hF|$ and the Hausdorff distance $\hdist$ as a function of $\maxel$. As expected, lower values of $\maxel$ lead to a smaller Hausdorff distance at the expense of a larger number of faces.}
    \label{fig:resolution}
\end{figure}

\subsection{Approximation by Scaling} 
\label{sec:ph_scaling}
\subsubsection{Regularized Dupin Indicatrix.}
Our conjugacy and alignment constraints set the \emph{orientation} of the pushed forward hexagon. For the resulting hexagons to be circumscribed in a uniform scale of the Dupin indicatrix, as required, we additionally constrain the \emph{scaling} of a pushed forward vector $w \tin \tpm$. In general, in an elliptic region, $w$ should fulfill $\|w\|^2|\kn(w)| = \maxel, \forall w \, \text{ s.t.} \, \kn(w) \neq 0$, where $\maxel > 0$ controls the area of the element. 

There are two issues to note. First, the vector size $\|w\|$ becomes unbounded as $\kn \to 0$. Second, our hexagons are regular hexagons which are pushed forward through a parameterization, and thus are \emph{always} convex. Thus, $w$ should always trace an ellipse, even in hyperbolic regions, and it should have the aspect ratio of the Dupin indicatrix. We address both issues with the following definition.

\begin{definition}
Let $p\tin\sM$, with curvature directions $\dmin, \dmax$, and consider the local curvature coordinates in $\tpm$ where $p$ is the origin, and $\dmin, \dmax$ are the $x,y$ axes, respectively. Given $\epsilon > 0$, the \emph{$\epsilon$-regularized Dupin indicatrix} $D_\epsilon(p)$ is the ellipse $x^2 \skmin + y^2 \skmax = 1$, where $\skmin = \sqrt{\kmin^2 + \epsilon^2}, \quad \skmax = \sqrt{\kmax^2 + \epsilon^2}$. The \emph{$\epsilon$-regularized normal curvature} in direction $w \tin\tpm$ is given by $\skn(w) = \skmin \cos^2\theta + \skmax \sin^2\theta$, where $\theta$ is the angle of $w$ w.r.t $\dmin$.
\label{def:reg_dupin}
\end{definition}

The function $x \to \sqrt{x^2 + \epsilon^2}$ is a well known smoothing of the function $|x|$ that is known to be convex, differentiable, positive, and within an $\epsilon$-dependent bound from $|x|$~\cite{argaez2011l}. Geometrically, we replace $|x|$ with the hypotenuse of a right angle triangle with leg lengths $|x|$ and $\epsilon \!>\! 0$, and thus we are guaranteed that the result is positive, and has the desired approximation effect. Furthermore, the regularized Dupin indicatrix is always well defined, and provides a smooth, non-degenerate transition between elliptic and hyperbolic regions. We further have the following.

\begin{lemma}
Let $p \tin \sM$, with curvatures $\kmin, \kmax$, and let $D_\epsilon(p)$ be its regularized Dupin indicatrix. Then, the area $\ar_\epsilon(p)$ of $D_\epsilon(p)$ fulfills:
\begin{equation}
\begin{aligned}
\lim_{\substack{|\kmin| \gg 0 \\ |\kmax| \gg 0}} \ar_\epsilon(p) &= \frac{\pi}{\sqrt{|\kmin|} \sqrt{|\kmax|}}, \quad &\lim_{\substack{\kmin \to 0 \\ \kmax \to 0}} \ar_\epsilon(p) &= \frac{\pi}{\epsilon}\\
\lim_{\substack{\kmin \to 0 \\ |\kmax| \gg 0}} \ar_\epsilon(p) &=  \frac{\pi}{\sqrt{\epsilon}\sqrt{|\kmax|}},  \quad &\lim_{\substack{\kmax \to 0 \\ |\kmin| \gg 0}} \ar_\epsilon(p) &=  \frac{\pi}{\sqrt{\epsilon}\sqrt{|\kmin|}}.
\end{aligned}
\label{eq:reg_dupin_area}
\end{equation}
\label{lemma:reg_dupin_area}
\end{lemma}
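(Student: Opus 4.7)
The plan is to reduce the lemma to the classical area formula for an axis-aligned ellipse and then evaluate four straightforward limits. The main content is simply to verify that the $\epsilon$-regularization behaves as designed: it reproduces the ``unregularized'' areas when curvatures are much larger than $\epsilon$, and yields a finite, smoothly interpolating value otherwise.

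First I would recall that for $a,b > 0$, the ellipse $\{(x,y) \in \xR^2 \mid a x^2 + b y^2 = 1\}$ bounds a region of area $\pi / \sqrt{ab}$. Applied to Definition~\ref{def:reg_dupin} with $a = \skmin$, $b = \skmax$, this immediately gives
\begin{equation*}
\ar_\epsilon(p) \;=\; \frac{\pi}{\sqrt{\skmin\,\skmax}} \;=\; \frac{\pi}{(\kmin^2+\epsilon^2)^{1/4}\,(\kmax^2+\epsilon^2)^{1/4}}.
\end{equation*}
This closed form is the only identity I need; every claim in the lemma follows from it by elementary limits.

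Next I would evaluate each of the four limits in turn. When $|\kmin|,|\kmax| \gg \epsilon$, one has $(\kappa_i^2+\epsilon^2)^{1/4} \to |\kappa_i|^{1/2}$, yielding the first limit $\pi / (\sqrt{|\kmin|}\sqrt{|\kmax|})$. When both $\kmin,\kmax \to 0$, each factor $(\kappa_i^2+\epsilon^2)^{1/4} \to \epsilon^{1/2}$, so the denominator tends to $\epsilon$, giving $\pi/\epsilon$. The two mixed cases are analogous: if one of the curvatures vanishes and the other stays bounded away from zero relative to $\epsilon$, exactly one factor degenerates to $\epsilon^{1/2}$ while the other tends to $|\kappa_j|^{1/2}$, producing $\pi/(\sqrt{\epsilon}\sqrt{|\kappa_j|})$ as stated.

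There is no real obstacle here; the only point worth emphasizing is the conceptual one, namely that the regularized denominator $(\kmin^2+\epsilon^2)^{1/4}(\kmax^2+\epsilon^2)^{1/4}$ is uniformly bounded below by $\epsilon^{1/2}\cdot \epsilon^{1/2} = \epsilon$, which is what guarantees the area $\ar_\epsilon$ stays finite as curvatures degenerate and thus justifies its use as a sizing quantity for our CPF optimization in the parabolic and umbilical regimes.
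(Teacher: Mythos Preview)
Your proposal is correct and follows essentially the same approach as the paper: the paper's proof consists of a single sentence noting that everything follows from $\lim_{x\to 0}\sqrt{x^2+\epsilon^2}=\epsilon$, which is precisely the elementary limit you apply to the closed-form ellipse area $\pi/\sqrt{\skmin\,\skmax}$. Your write-up simply makes explicit the intermediate area formula and the four limit evaluations that the paper leaves implicit.
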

The proof follows directly from the fact $\lim_{x\to0} \sqrt{x^2 + \epsilon^2} = \epsilon$.

Hence, by choosing $\epsilon, \gamma$, and requiring $\|w\|^2|\skn(w)| = \gamma^2, \forall w=dr(\s{w})\tin \tpm$, we constrain the pushforward unit circle to be a \emph{uniform $\gamma$-scale of the $\epsilon$-regularized Dupin indicatrix}. This allows us to control the areas of the pushed forward hexagons, and provide a smooth transition between elliptic, parabolic and hyperbolic regions. 

\begin{figure}[t]
    \centering
      \includegraphics[width=.9\linewidth]{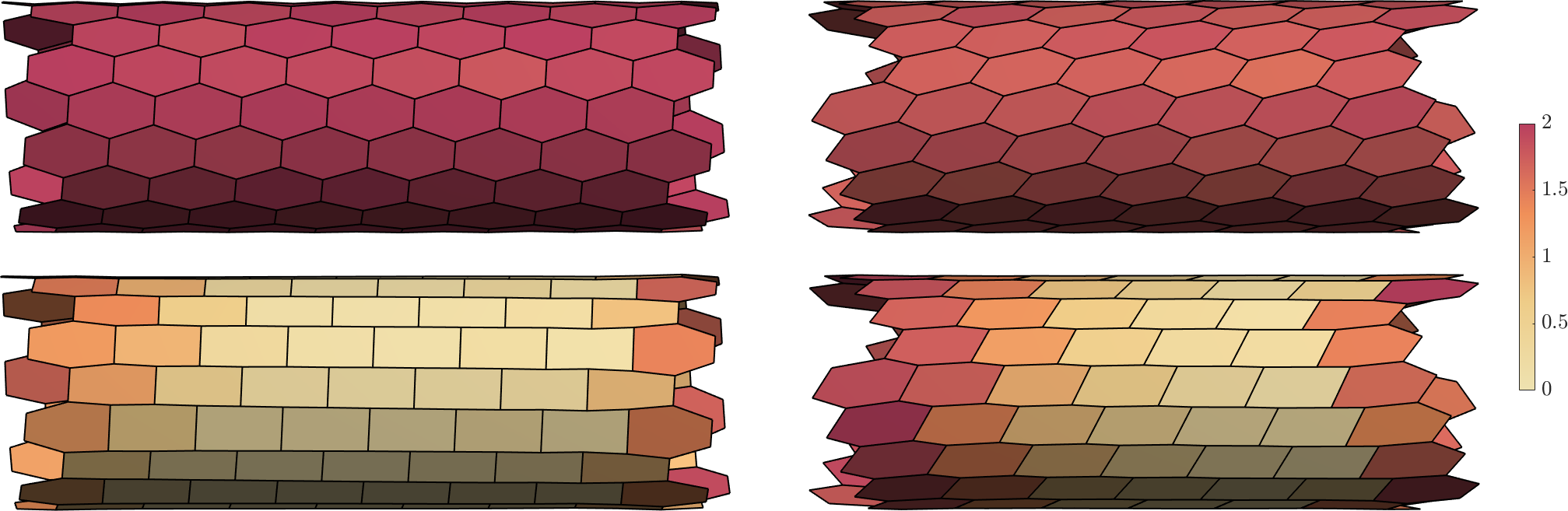}
    \caption{The dual mesh (left) with and (right) without orthogonality constraints, (top) before and (bottom) after planarization. Without orthogonality hexes are skewed, but remain planar.}
    \label{fig:ablation_fairness_ortho}
\end{figure}

\begin{figure}[b]
    \centering
   \includegraphics[width=\linewidth]{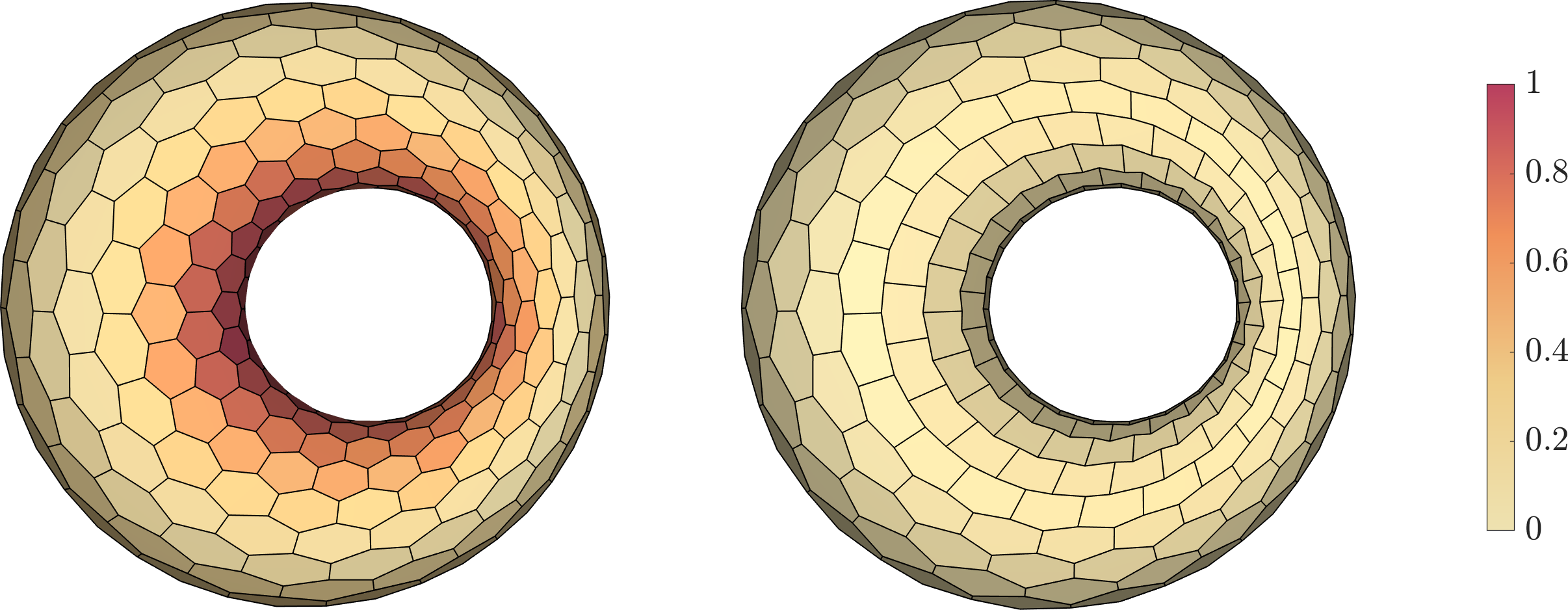}
    \caption{The dual mesh (left) before and (right) after planarization. Hexagonal faces in elliptic regions are nearly planar even before planarization, since the hexagons are inscribed in a scaled Dupin indicatrix. }
    \label{fig:elliptic_is_planar}
\end{figure}

\begin{figure}[t]
    \centering
    \includegraphics[width=\linewidth]{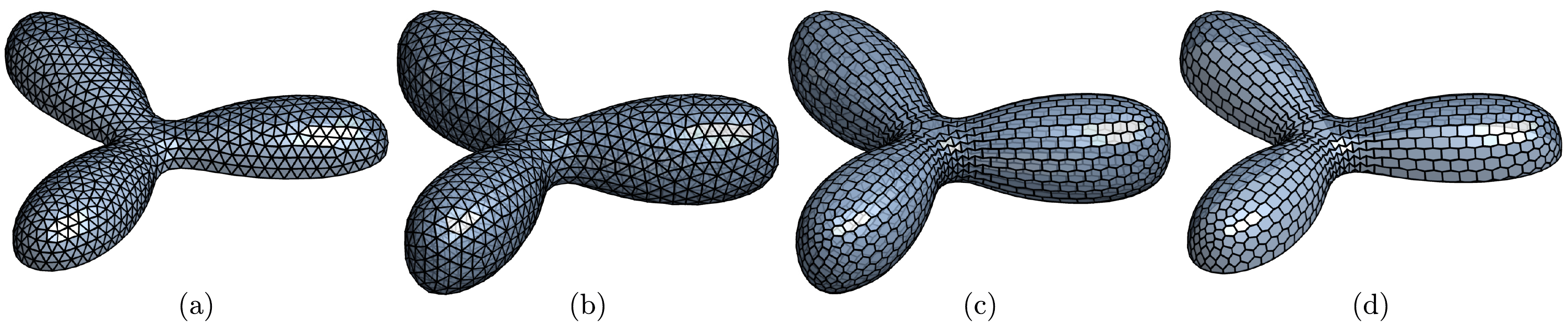}
    \caption{Preserving the input's (a) single planar boundary curve by doubling the mesh (b), PHexing it (c), and intersecting the result with the boundary plane (d).}
    \label{fig:bdry}
\end{figure}

\begin{figure}[b]
    \centering
    \includegraphics[width=\linewidth]{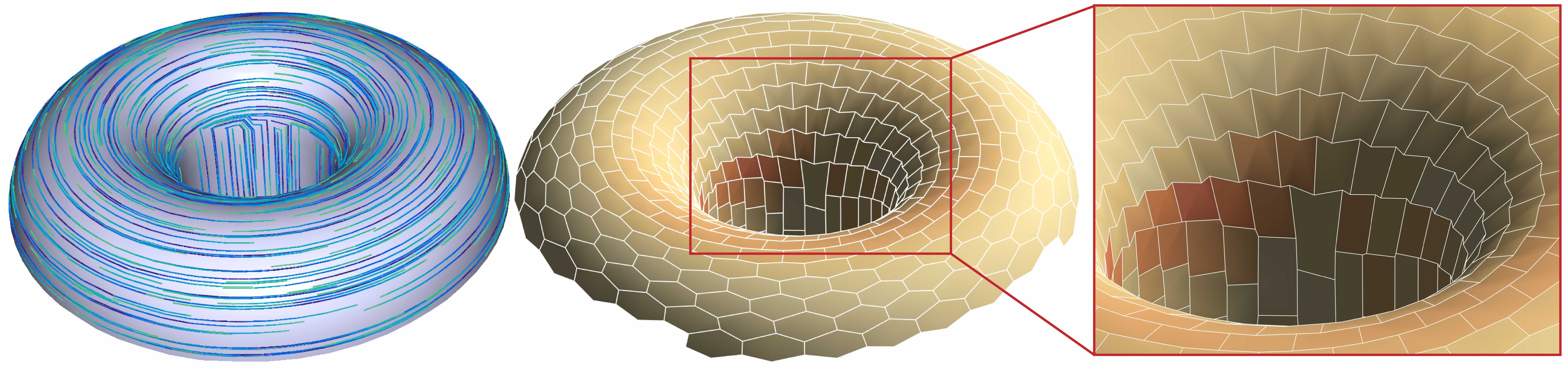}
    \caption{\REV{
Failure case: neighboring parabolic regions with conflicting maximal absolute curvature directions lead to a non-smooth guiding field (left), which leads to multiple singularities and badly shaped hexagons (right).}}
    \label{fig:conflicting_cyl}
\end{figure}

\begin{figure*}
    \centering
    \includegraphics[width=\linewidth]{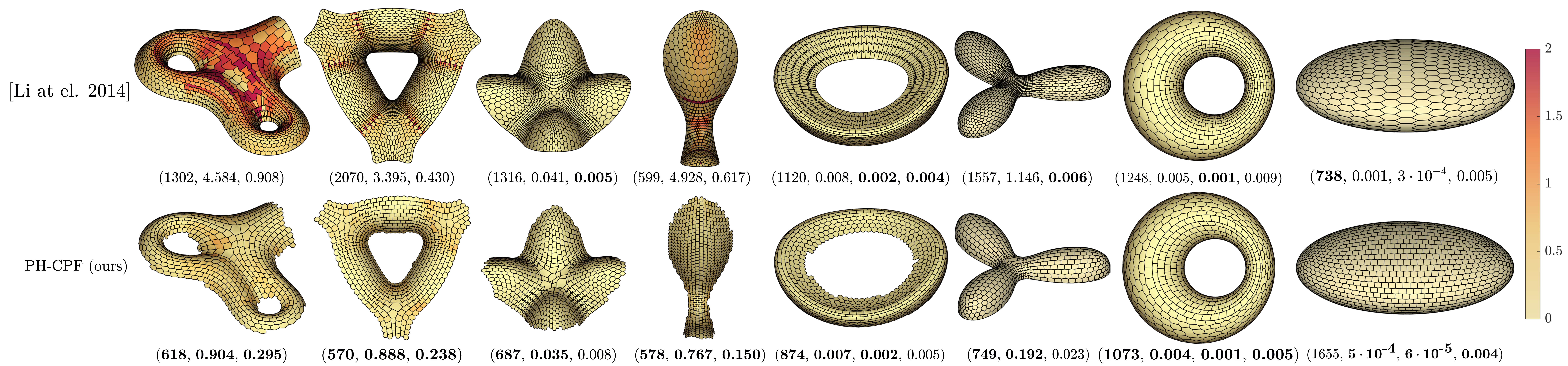}
    \caption{Comparison of our results with~\citet{li2014planar}, color coded with the face planarity values. Below each mesh, we show $(|\hF|,\planerrM,\planerravg,\hdist)$ where $\hdist$ is displayed only if Li's input mesh was provided. Note that we obtain better or similar planarity with a smaller number of faces.}
    \label{fig:wang_comp}
\end{figure*}

\subsubsection{CPF scaling constraints.} 
\label{sec:cpf_scaling_constraints}
We set up the scaling constraints:
\begin{equation}
\begin{aligned}
\|U_i\|_{g_i(\epsilon,\gamma)}^2 &= 1, \quad \|V_i\|_{g_i(\epsilon,\gamma)}^2 = 1, \\
g_i(\epsilon,\gamma) &= \frac{1}{\gamma^2}\dmindmax \skminskmax \dmindmax^T = \frac{1}{\gamma^2}\sshapeop.
\end{aligned}
\label{eq:cpf_scaling_constraints}
\end{equation}
We discuss the effect of the combined constraints in different regions in Section~\ref{sec:constraint_analysis}.

\subsubsection{Choosing $\epsilon$ and $\gamma$.}
The choice of the scaling parameters $\epsilon, \gamma$ affects the aspect ratio of the hexagonal elements in parabolic regions, as well as the maximal element size in planar regions. 
The area of the $\gamma$-scaled $\epsilon$-regularized Dupin indicatrix is $\gamma^2 \alpha_\epsilon(p)$, which is equal to $\gamma^2 \pi/ \epsilon$ in planar regions (Equation~\eqref{eq:reg_dupin_area}).

Let $\delta$ be the desired elements' distance to the surface, and $\maxel$ the maximal element size.
Thus, we have $\pi\gamma^2/\epsilon = \maxel$, where $\gamma^2 = 2\delta + O(\delta^2)$ (see Figure~\ref{fig:dupin_plane_and_hex_notations} (left)). Hence, we take $\epsilon_{\delta,\maxel} = \frac{2\pi\delta}{\maxel}, \gamma_{\delta,\maxel} = \sqrt{2\delta}$.

Figure~\ref{fig:scaling_params} shows the effects of choosing different $\delta, \maxel$. Varying $\delta$ while keeping $\maxel$ fixed (top) allows a closer approximation of the surface at non-degenerate locations. Varying $\maxel$ while keeping $\delta$ fixed (bottom) allows for larger elements in flat regions. The values of $\delta,\maxel$ are taken as a percentage of the bounding box diagonal, and total surface area, respectively. The figure shows the color coding of the face areas percentage of the total area of the mesh. 

Figure~\ref{fig:resolution} shows how we can control the mesh resolution by scaling both $\maxel$ and $\delta$ while keeping their percentage ratio fixed. The values of $\maxel$ vary linearly from $5\%$ (top left) to $.5\%$ (bottom right) of the total surface area, and $\delta$ varies from $10\%$ to $1\%$ of the bounding box diagonal. The color scale is uniform for all images, and varies from $0$ to $2$. The two top rows show the dual mesh before planarization, and the two bottom rows show the final PH mesh. Note that here the initial planarity error is quite high, yet we still obtain a coarse planar hex mesh after planarization. In general, for simple surfaces with low curvature variation, we take $\delta$ to be $0.5\%$ of the bounding-box diagonal of the input surface, and $\maxel$ to be $0.25\%$ of the total surface area. For more challenging general meshes, we take the same $\delta$, but reduce $\maxel$ to be $0.1\%$, to generate smaller elements.

\begin{figure}[b]
    \centering
    \def\svgwidth{\linewidth}
    \includegraphics[width=\linewidth]{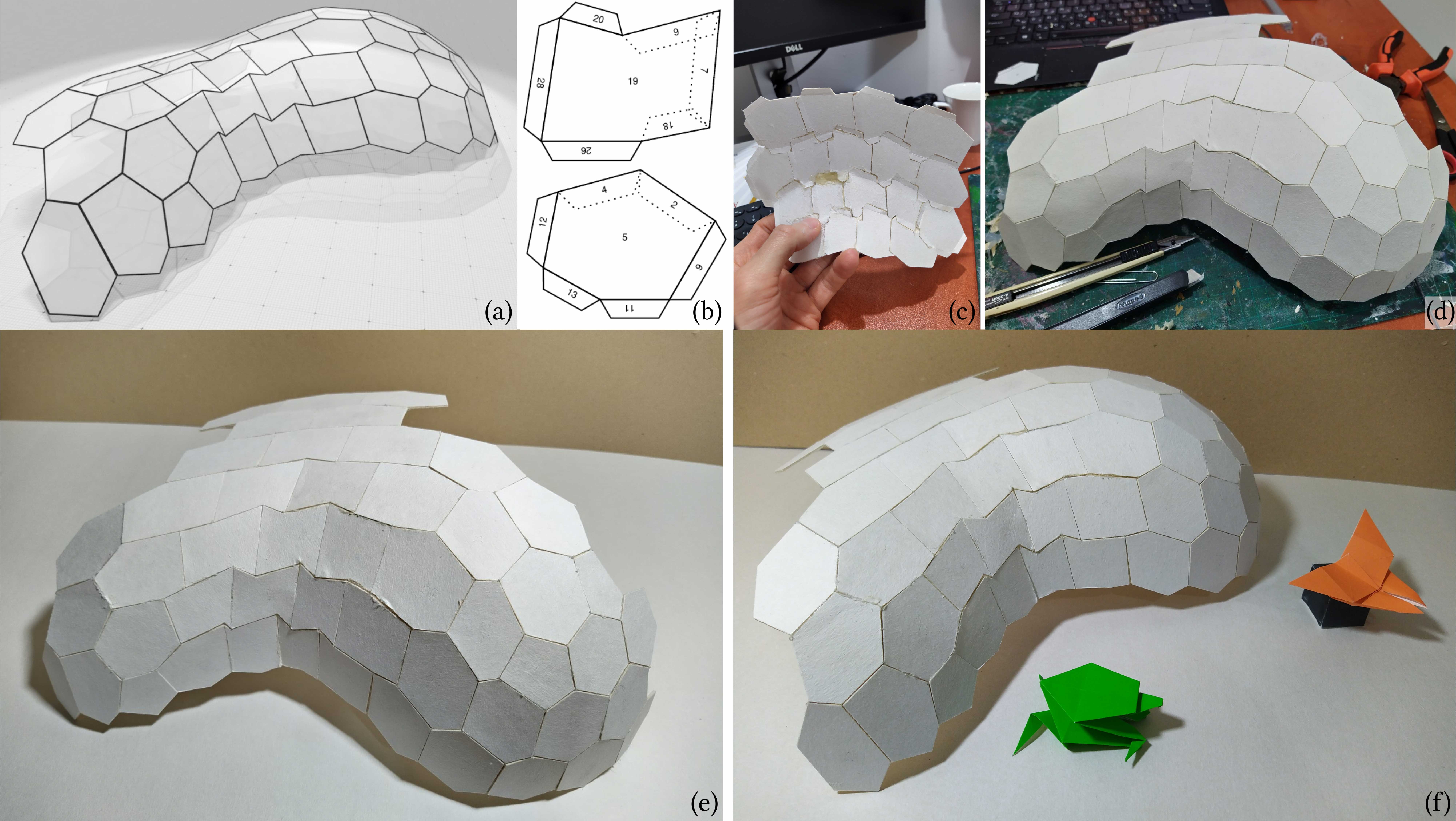}
    \caption{Fabrication of the Laga model from construction paper. (a) PH mesh, (b) 2D face
    templates for cutting, (c-d) intermediate and (e-f) final constructions.}
    \label{fig:fabrication}
\end{figure}

\subsection{Fairness by Orthogonality}
\label{sec:cpf_orthogonality_constraints}
Our last objective is \emph{fairness}, formalized as horizontal and vertical symmetry of the hexagonal elements. We achieve this by requiring the orthogonality of $U_i,V_i$, namely by adding the CPF constraints:
\begin{equation}
\langle U_i, V_i \rangle = 0, \quad \forall t_i \tin \mF.
\label{eq:cpf_orthogonality_constraints}
\end{equation}

In parabolic and hyperbolic regions the planarity optimization modifies the shapes of the pushed-forward hexes away from convexity. If the symmetry axes of the pushed forward hexagon are not aligned with the curvature directions, this modification can lead to skewed hexes, for example parallelograms instead of rectangular bricks in parabolic regions.  
By enforcing orthogonality as well as conjugacy, we effectively require that $U_i,V_i$ are aligned with the curvature directions, and thus that the planarization preserves the horizontal and vertical reflectional symmetry of the hexes. Note that, unlike for PQ meshes, these combined constraints do not replace alignment, since for PH meshing the $U$ direction must be aligned with the minimal absolute curvature direction, whereas for PQ meshes $U,V$ can be aligned with either minimal or maximal directions. 

Figure~\ref{fig:ablation_fairness_ortho} demonstrates the effect of adding orthogonality constraints. On the left we show the result with orthogonality, and on the right without it. We show the dual mesh before planarization $\rdM$ (top) and the final PH mesh $\hM$ (bottom). Note that without orthogonality the hexes of $\hM$ are still planar, but lose horizontal and vertical symmetry with respect to the ruling direction.

\subsection{Constraint analysis}
\label{sec:constraint_analysis}
\subsubsection{Elliptic regions}
When $\kmin\kmax \!\gg\! 0$ we have $\sshapeop \!\approx\! \shapeop$, and thus the Dupin Equation~\eqref{eq:conjugacy_uu} holds approximately. 
Furthermore, in these regions, for any $\s{w} \tin \xR^2, \, |\s{w}|=1$ we have that $\|dr(\s{w})\|^2_{g_i(\epsilon,\gamma)} = \frac{1}{\gamma^2}\s{w}^T \uv{i}^T \sshapeop \uv{i} \s{w} \approx \s{w}^T \id \s{w} = 1$. Hence, in elliptic regions the scaling and conjugacy constraints imply that pushing forward the unit circle yields (approximately) a uniform $\gamma$-scale of the $\epsilon$-regularized Dupin indicatrix. Indeed, in elliptic regions (when $\gamma$ is small enough) the initial planarity error is considerably lower compared to the rest of the surface (see Figure~\ref{fig:elliptic_is_planar}).

\begin{figure}[b]
    \centering
    \includegraphics[width=.9\linewidth]{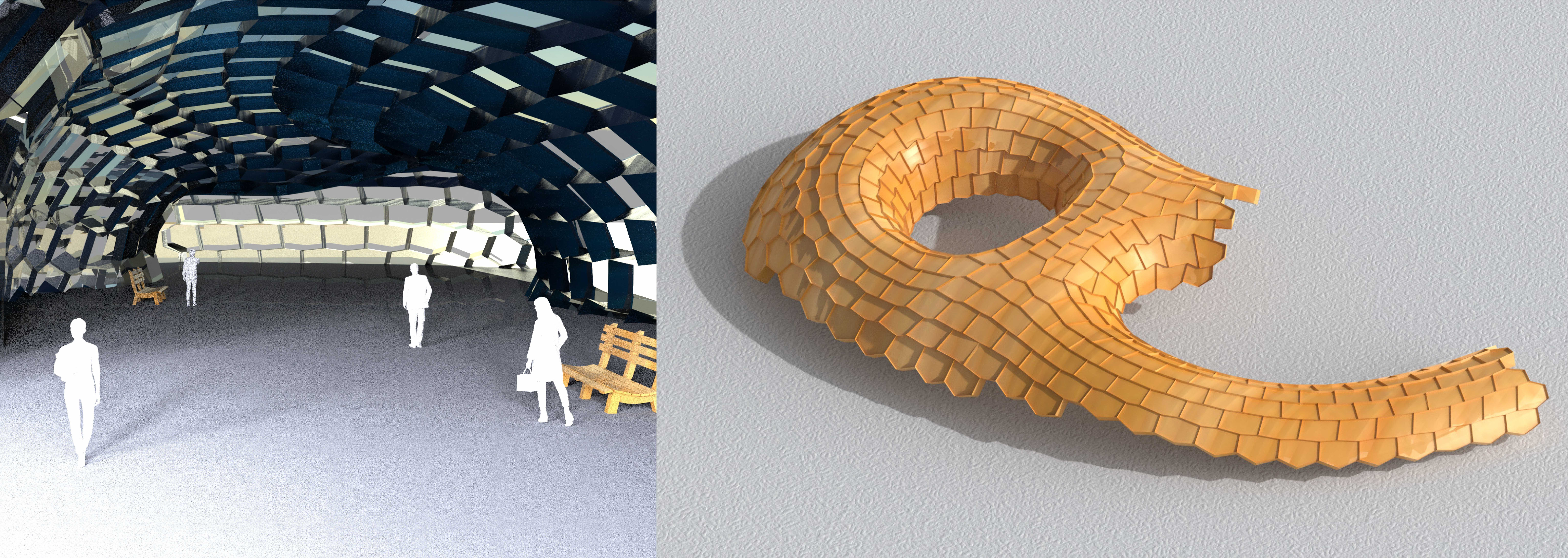}
    \caption{Face offset meshes computed from our PH meshes. Note that these are computable only for conical meshes.}
    \label{fig:face_offset}
\end{figure}

\begin{figure*}
    \centering
    \includegraphics[width=.9\linewidth]{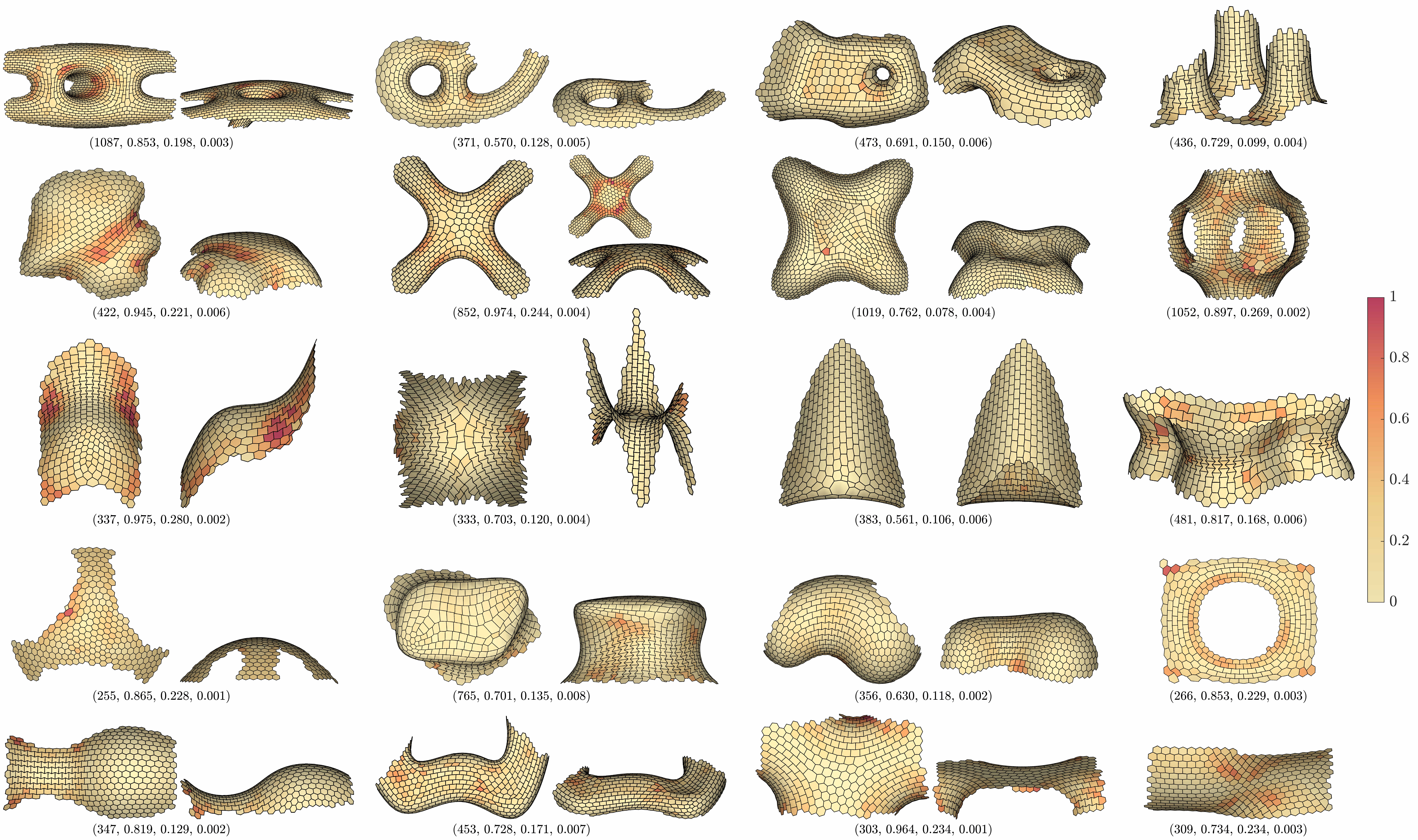}
    \caption{Gallery of results for architectural models. We show the PH mesh, color coded with planarity values, and below $(|\hF|,\planerrM,\planerravg,\hdist)$.}
    \label{fig:gallery_architectural}
\end{figure*}

\subsubsection{Parabolic regions}
When either $\kmin\to0$ or $\kmax\to0$, we have $\langle \damin, w \rangle_\shapeop \to 0$ for any $w \tin \tpm$. Thus, if Eq.~\eqref{eq:cpf_alignment_constraints} holds and $U$ is aligned with $\damin$, then the conjugacy Eq.~\eqref{eq:conjugacy_uv} holds for any tangent vector $w$. The dimensions of the resulting element are restricted in the ruling direction by the scaling parameters $\epsilon,\gamma$, and in the orthogonal direction by the non-zero curvature, as given by Eq.~\eqref{eq:reg_dupin_area}.

\subsubsection{Hyperbolic regions}
If the alignment and conjugacy equations~\eqref{eq:cpf_alignment_constraints},~\eqref{eq:conjugacy_uv} hold, then $U,V$ are aligned with the curvature directions. 
In this case, it is easy to check that $\langle U_i, V_i \rangle_{\sshapeop} = 0$, and if additionally the scaling equation~\eqref{eq:cpf_scaling_constraints} holds we have again that  $\|dr(\s{w})\|^2_{g_i(\epsilon,\gamma)} = 1$ for any unit vector $\s{w} \tin \xR^2$, and thus pushing forward the unit circle yields a uniform $\gamma$-scale of the $\epsilon$-regularized Dupin indicatrix.

\subsubsection{Planar regions}
When both $\kmin\to0, \kmax\to0$, we have that $g_i(\epsilon,\gamma) \to \frac{\epsilon}{\gamma^2}Id$, thus the sizing is uniform, and determined by the scaling parameters. If we additionally require orthogonality, this amounts to requiring conformality in planar regions.

\subsection{Initialization}
We initialize the variables $U,V$ as follows $\forall t_i \tin \mF$:
\begin{equation}
\begin{aligned}
\tilde{U}_i^0 &= \sqrt{\gf_i},  & \tilde{V}_i^0 &= -J\tilde{U}_i^0,   \\
U_i^0 &= \tilde{U}_i^0 / \|\tilde{U}_i^0\|_{g_i(\epsilon,\gamma)}, & V_i^0 &= \tilde{V}_i^0 / \|\tilde{V}_i^0\|_{g_i(\epsilon,\gamma)}, \\
\end{aligned}
\label{eq:cpf_init_uv}
\end{equation}
and $z$ as $\forall e_{ij} \tin \mEi$:
\begin{equation}
z_{ij}^0 = \left(\big({(dr_i^0)}^{-1}(e_{ij})\big)^N +  \big({(dr_j^0)}^{-1}(e_{ij})\big)^N\right)/2 .
\label{eq:cpf_init_z}
\end{equation}
Thus, the initial solution fulfills the conjugacy, scaling, orthogonality and alignment constraints, and is of course LICO as required. The optimization improves the continuity objective while maintaining smoothness and the other constraints as best as possible.

\section{Implementation Details}

\REV{We provide the details on mesh generation and planarization in the Appendices.}

\subsection{Constraints computation}
\label{sec:constraints_computation}
\subsubsection{Shape operator}
For the shape operator we take the tangential gradient of the normals at the vertices, symmetrized and projected to the tangent plane~\cite{azencot2015functional}. If normals at the vertices are available (e.g. if the surface was sampled from a parametric surface) we use those, otherwise we take the area-weighted average of the normals of the neighboring faces.
For the large majority of meshes, the normals are computed.

\subsubsection{Boundary faces}
Boundary faces are defined as faces with at least one vertex on the boundary. On these faces, the computation of vertex normals  (and the shape operator) is not reliable, and therefore we do not constrain these faces for alignment, nor conjugacy.
\subsubsection{Guiding field, special cases}
If there are no constraints, the guiding field is given by the lowest eigenvector of the $2$-RoSy Laplacian. If there are only $2$-RoSy constraints, the optimization problem~\eqref{eq:opt_guiding_field} is convex quadratic and is solved directly using constrained linear least squares. If there are only $4$-RoSy constraints, then the curvature is uniform, either hyperbolic or elliptic, and we take the minimal curvature direction $\dmin$ in constrained regions as the $2$-RoSy constraints for the guiding field optimization.
\subsubsection{Alignment weighting}
As is common in meshing, we add diagonal weights to the alignment constraint, to reduce the weight in nearly planar regions. Specifically, for $N=4$ we weigh the alignment constraint on the face $t_i \tin \mF$ with $\rho_i^2$, and for $N=6$ with $\big(\rho_i^{3}-3\rho_i^{2}+3\rho_i\big)^{2}$, a polynomial which resembles a Sigmoid function and has vanishing first and second derivatives at $1$.

\begin{figure*}
    \centering
    \includegraphics[width=.9\linewidth]{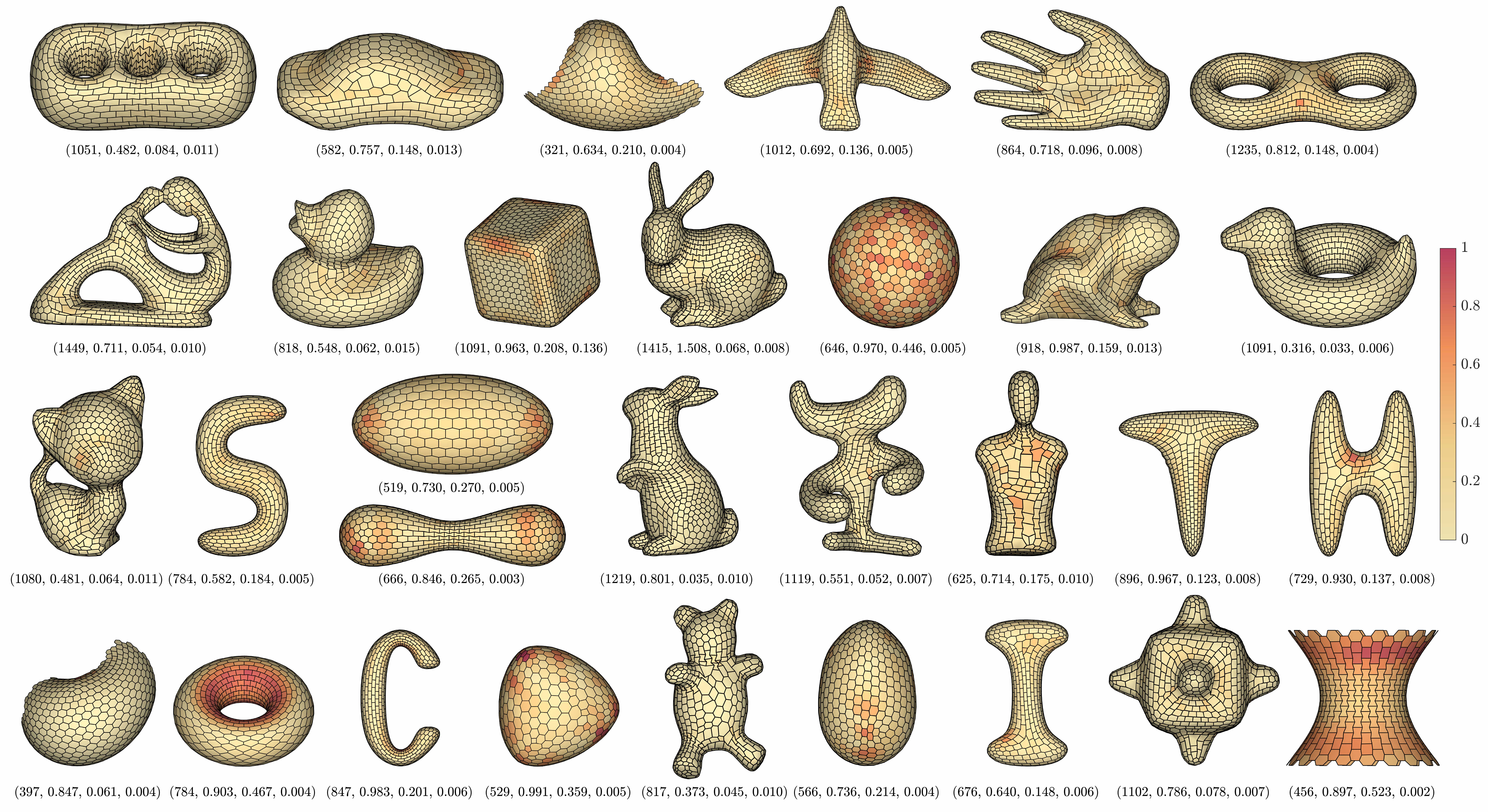}
    \caption{Gallery of results for general models. We show the PH mesh, color coded with planarity values, and below $(|\hF|,\planerrM,\planerravg,\hdist)$.} 
    \label{fig:gallery_general}
\end{figure*}

\subsection{Boundaries}
\label{sec:bdry}
\REV{
Unlike quadrangular meshes, hexagonal elements cannot naturally conform to the surface boundaries, as in convex regions the hexagons will form ``jagged'' boundaries. 
Thus, a boundary-conforming mesh will have many non-hexagonal faces. Furthermore, if the boundaries are parallel to the maximal absolute curvature direction, arbitrarily shaped low quality faces may be obtained when the curvature-aligned hexagons intersect the non-aligned boundary. Thus, many recent works on hexagonal remeshing~\cite{Jiang2015polyhedral,wang2009note,muller2011conformal,wang2008hexagonal} have chosen to generate meshes whose boundaries do not conform to the input shape. Following the large body of existing work, we have chosen a similar path.
Note, however, that methods that preserve boundaries, such as~\citet{li2014planar}, are advantageous in many scenarios, especially in the architectural context.

When there is a single planar boundary curve, one possible solution is \emph{doubling} the input surface by reflecting it with respect to the boundary plane.
This leads to a closed surface, which is then be intersected with the boundary plane after hex-meshing. See Figure~\ref{fig:bdry} for such an example.
Finding a robust, general solution for handling the boundary is an interesting avenue for future work.
}

\subsection{Limitations}
\REV{
Our method relies on the existence of a smooth $2$-RoSy guiding field, which is aligned to the maximal absolute curvature direction in parabolic regions.
In some cases, when there are adjacent parabolic regions with a maximal absolute curvature direction which changes abruptly (see Fig.~\ref{fig:conflicting_cyl} (a)), there exists no such smooth field. In these cases, multiple singularities are introduced along the discontinuity line, leading to badly shaped hexagons (Fig.~\ref{fig:conflicting_cyl} (b)).
This is a limitation that can potentially be addressed by allowing both quads and hexagons, a direction we leave for future work.
}

We solve a non-linear non-convex optimization problem, and the solution that we obtain depends on the initialization. While our algorithm is quite stable, as we have demonstrated using convergence from a random initialization, we cannot guarantee that the global optimum would be reached. 
Furthermore, we rely heavily on curvature information, for all our constraints except orthogonality. While we have used a standard off-the-shelf algorithm for computing curvatures, using more elaborate curvature computation schemes can potentially improve the accuracy of our results.

\section{Experimental Results}
\label{sec:results}

\subsection{Evaluation metrics for PH meshes}
\label{subsec:evaluation-metrics-for-ph-meshes}
Let $\hM = (\hV, \hF, \hE)$ be the output hexagonal mesh. 

\emph{Planarity.} The planarity error $\planerr$ is computed per face $f_i = \{p_1,..,p_d\} \tin \hF$ according to standard practice~\cite{libhedra}: 
\begin{equation}
\begin{aligned}
\planerr(f_i) &= ({\frac{1}{d}\textstyle\sum\nolimits_{j=1}^{d}\planerr^2(f_{q_j})}^{1/2}, &
\quad \planerr(f_q) &= \textstyle\frac{|\langle \hat{n}, p_{21} \rangle|}{\big(\|p_{31}\| + \|p_{42}\|\big)/2},  \\
\hat{n} &= \textstyle\frac{p_{31} \times p_{42}}{\|p_{31}\| \|p_{42}\|}, &\quad p_{ij} &= p_i - p_j.
\end{aligned}
\label{eq:planarity_error}
\end{equation}
where $q_j = \{p_j, p_{j+1}, p_{j+2}, p_{j+3}\}$, and the indexing is modulo $d=|f_i|$. We denote by $\planerrM$ the maximal planarity error over all the faces, and by $\planerravg$ the average planarity error. Planarity errors are given in percentages, e.g. when we refer to planarity error $1$ in the results, the meaning is $1\%$. 

\emph{Hausdorff distance.} We use the one sided Hausdorff distance $\hdist$ from $\hM$ tp $\mM$ using the VCG package~\cite{vcglib} with face sampling, and measured w.r.t the bounding box diagonal.

\subsection{Comparison with~\citet{li2014planar}}
We compare our PH meshes with the results of the method by~\citet{li2014planar} which were supplied by the authors. As input to our method, we used either Li's input mesh if it was provided, or a remeshed version of their output. Figure~\ref{fig:wang_comp} shows the qualitative results, color coded with the face planarity values, and the quantitative results $(|\hF|,\planerrM,\planerravg,\hdist)$.
Note that we obtain better or similar planarity values with a smaller number of faces. Furthermore, the transition between element shapes in our approach is smoother. 
We choose the target planarity value for our planarization as the minimum between the maximum planarity error for the Li model and $1\%$.
\REV{An important advantage of~\citet{li2014planar} is that their result preserves the input boundary, which is important in some applications. For some special cases we can generate a boundary conforming mesh (third result from the right, see also Section~\ref{sec:bdry}).}

\subsection{Fabrication}
To demonstrate the planarity of our result, we generated a very coarse PH mesh of the Laga model, having only $49$ faces, and realized it using construction
paper. The fabrication has been done by hand using printed $2$D face templates that were transferred
over a sheet of construction paper and cut using a precision crafting knife. We chose this
approach---over options such as laser cutting and plywood or plexiglass---mainly
because construction paper is easy to work with and allows a straightforward design of connectors
between faces. Nevertheless, this approach has low accuracy, and is likely not suitable for crafting meshes with a large number of faces. The results are shown in
Figure~\ref{fig:fabrication}.

\subsection{Face offset meshes}
Fig.~\ref{fig:teaser} and Fig.~\ref{fig:face_offset} show face offset meshes, which are in general only computable for conical meshes.  
In Figure~\ref{fig:teaser} (right) and Fig.~\ref{fig:face_offset} we show only the PH mesh, and the extruded edges to the face offset mesh for better visualization.

\subsection{Robustness}
We show results for a variety of architectural and general meshes (Fig.~\ref{fig:gallery_architectural} and~\ref{fig:gallery_general}). For each mesh we show the output PH mesh $\hM$ color coded with planarity errors, and specify how many faces it has $|\hF|$, the maximal planarity error $\planerrM$, the average planarity error $\planerravg$ and the Hausdorff distance to the source mesh $\hdist$. To the best of our knowledge, PH meshes of such complicated general models have not been demonstrated before. 
\REV{The most complex PH-Meshes appear in~\citet{li2014planar}, and have limited curvature variation. Models such as the kitten, bunny, teddy bear and fertility, with large curvature variation, have not been PHexed previously.
}

\section{Conclusions}  
We proposed a novel framework for parameterization-driven meshing, at the heart of which lies a novel integrability constraint, which is provably zero when a quantized rotationally-seamless parameterization exists. We have leveraged this framework for designing planar hexagonal meshes, succeeding to mesh a variety of complex models, which have not been demonstrated to be PH-meshable previously. We have additionally showed that our method is general, and is easily applicable to planar quad meshing as well. 

We believe that both our general framework, as well as the PH meshing application would be highly useful to upstream geometry processing applications and to practitioners. 
Due to the flexibility of our framework, it is interesting to investigate how additional constraints, e.g. physical constraints, can be added. Furthermore, using our approach for quick validation during form-finding is a fruitful direction for future work. Finally, generalizing the approach to work directly on a free-form parameterized surface instead of a surface mesh is both theoretically interesting and practically useful. 

\begin{acks}
The authors acknowledge the support of the German-Israeli Foundation for Scientific Research and Development (grant number I-1339-407.6/2016), the Israel Science Foundation (grant No. 504/16), and the European Research Council (ERC starting grant no. 714776 OPREP). We also thank SHREC'07, SHREC'19, AIM@SHAPE, the Windows 3D library, Prof.  Christopher Robeller and Prof. Jan Knippers for providing the models. We additionally thank M. Eng Daniel Sonntag and Prof. Christopher Robeller for stimulating discussions.
\end{acks}

\bibliographystyle{ACM-Reference-Format}
\bibliography{phex}

\appendix
\small
\section{Proof of Lemma~\ref{lico_lemma}}
\alignlemma*
\label{appendix:lemma_alignment}
\begin{proof}
Let $(dr_i^{-1}(d_i))^{\aln{i}/2} = a + \s{i}b$. Since the constraint holds, $b = 0$, and since $U,V$ are LICO, $a \neq 0$. Thus, $(dr_i^{-1}(d_i))^{\aln{i}} \tin \xR_{>0}$, and $\exists \, \alpha \tin \xR_{>0}$ s.t. 
$(\alpha dr_i^{-1}(d_i))^{\aln{i}} = 1$. 

Let $\omega_0, \omega_1, ... , \omega_{\aln{i}-1}$ be the $\aln{i}$-th unit roots, namely $\omega_l^{\aln{i}} = 1, l \tin [0,...,\aln{i}-1]$. Hence, $\exists \, \alpha \tin \xR_{>0}$ s.t. $(\alpha dr_i^{-1}(d_i))^{\aln{i}} = \omega_l^{\aln{i}}$. Therefore, $\exists l \tin [0,..,{\aln{i}}-1], \alpha\tin\xR_{> 0}$ s.t. $dr_i(\omega_l) = \alpha d_i$. 
On the other hand,  by definition, the pushed-forward grid in the face $t_i$ is parallel to $dr_i(\tilde{\omega}_{\tilde{l}}), \tilde{l} \tin [0,...,N-1]$, where $\tilde{\omega}_{\tilde{l}}$ are the $N$-th unit roots. Since $\aln{i}$ is a divisor of $N$, $\{\omega_l\} \subseteq \{\tilde{\omega}_{\tilde{l}}\}$, which completes the proof.
\end{proof}

\section{Integration and mesh generation}
\label{sec:int_mesh_gen}

For integration and meshing we use standard practice and existing software that we do not claim as a contribution; however we discuss here specific consequences to our configuration choices.
\subsection{Integration with seams in $\xZ$}
\label{sec:integration}
From the CPFs $U_i,V_i$ we first compute the gradients $(\nabla u)_i, (\nabla v)_i$ for every face $t_i \tin \mF$ using Equation~\eqref{eq:gradients}. Let $h_{1,2,3}$ denote the pullback of the triangular grid coordinates to $\mM$. The linear relation between the Cartesian and triangular grid carries over to the gradients, and we use it to compute $F_{1,2,3}$, the candidate gradient fields of the triangular grid functions.

We generate the mesh by integrating the field $F_{1,2,3}$ into the seamless multivariate function $h_{1,2,3}$ defined on the vertices on the mesh $\mM$. We use the new seamless integration package in Directional~\cite{vaxman2019directional, vaxman2021seamless}. This guarantees that the solution is order-preserving and faithful to our original field as much as possible, as our field is near-integrable (with quantized rotational jumps). The algorithm first ``combs'' $F_{1,2,3}$ according to our prescribed matching (the integer jumps on interior edges), cutting the mesh into a topological disc with the singularities on the boundaries, and creating an order-preserving function on the vertices of the cut mesh. The function is seamless across these cuts, which means there exist permutations and integer translational jumps, where the permutations correspond to the matching of the field and the singularities indices. We further prescribe that $h_1-h_2+h_3=0$ to get a consistent triangle mesh, which matches the symmetry of the field $F$.

We choose a configuration where we round the seam translational jumps to integers, rather than the singularity coordinates, to relax the constraints on our otherwise very-constrained field design; that means that a singularity coordinate can fall on any point in $\frac{1}{2}\mathbb{Z}^2$~\cite{nieser2011hexagonal}. The practical implication is that the resulting primal mesh $\rM$ can be only triangle-dominant rather than fully triangular. To mitigate that we triangulate any resulting quadrilateral faces by adding the Delaunay diagonal.

\subsection{Mesh extraction}
\label{sec:mesh_extraction}

To get $\rM$ from the parameterization functions $h_{1,2,3}$, we trace their isolines on the original mesh and enumerate their intersections and faces; this is effectively and exactly done using the CGAL arrangement package~\cite{cgal:wfzh-a2-20b} also as a part of the seamless integration package.

\paragraph{Hex-dominant dualization} We dualize the mesh $\rM$ into the barycentric hex-dominant mesh $\rdM$ by inserting the barycenteric dual vertex in each interior face. Special measures are taken for the dualization of boundary faces, to ensure that the output mesh is consistent. We first remove one strip of boundary faces from the primal mesh and then another strip of faces from the dual mesh. In doing so, we avoid boundary faces with very small area, which may lead to instabilities during planarization.

\newpage

\section{Planarization}
\label{sec:planarization}

The dual meshes we obtain are only approximately planar, with the planarity error concentrated in non-elliptic regions. 
However, the meshes are \emph{planarizable} using a standard approach inspired by~\citet{Jiang2015polyhedral}, and planarization 
is necessary to generate brick and bowtie concave hexagonal shapes.
We optimize for face planarity~\eqref{eq:planarity_error} under regularization with several objectives.

\emph{Surface distance.} To prevent the vertices from drifting from the original
surface $\mM$, we minimize the distance $\|\hat{v}_i - v_i\|$ between the vertices $v_i$ and their closest
projections $\hat{v}_i$ on $\mM$.

\emph{Symmetry.} To preserve the symmetry of the even-degree faces, we consider
the vectors connecting the barycenter $c_f$ with every $v_i \in f$. We then require each pair of
opposite vectors to be as equal as possible. In detail, we consider the vectors $p_i = c_f -v_i$ and
$q_i = c_f - v_{(i + d/2)}, i \in [1, .., d/2]$, where $d$ is the face
degree. Finally, $f$ is symmetric when $p_i = -q_i,\forall i \in [1, .., d/2]$.

\emph{Tangential drift.} To prevent normal drift, we favor vertex 
movement in the tangent plane, computed at vertex locations in the previous iteration. We minimize
$\langle n_{\tilde{v}_i}, (v_i -\tilde{v}_i)\rangle^2$, where $\tilde{v}_i$ is the previous value of $v_i$, and
$n_{\tilde{v}_i}$ is the average face normal around $\tilde{v}_i$.

\emph{Edge and diagonal length.} We prevent the degenration of edges and diagonals using the geometric barrier in Equation~\eqref{eq:barrierTerm} applied to the edges' and diagonals' length.

\emph{Quad fairness.} For quad meshes we add an extra fairness term, the squared-norm of the difference between a vertex and its averaged neighborhood $\|v_i - (w_1 + \dots + w_n)/n\|^2$, where $w_1, \dots, w_n$ are the $n$ nearest neighbors of $v_i$. For more details see~\citet{tang2014form}.

We provide the planarization formulation and optimization in \REV{the supplemental material}. Meshes before and after planarization are shown in Figures~\ref{fig:pipeline},~\ref{fig:ablation_alignment},~\ref{fig:ablation_fairness_ortho} and~\ref{fig:elliptic_is_planar}.


\clearpage

\title{PH-CPF: Planar Hexagonal Meshing using Coordinate Power Fields - Supplemental Material}

\setcounter{section}{0}
\setcounter{equation}{0}
\setcounter{figure}{0}
\setcounter{table}{0}
\setcounter{page}{1}
\renewcommand\thesection{\arabic{section}}

\keywords{}

\twocolumn[  
\begin{@twocolumnfalse}
	
	\Huge\sffamily
	PH-CPF: Planar Hexagonal Meshing using Coordinate Power Fields - Supplemental Material
	
	\bigskip	
	\bigskip
	\Large\sffamily
	KACPER PLUTA, Technion - Israel Institute of Technology, Israel\newline	
	MICHAL EDELSTEIN, Technion - Israel Institute of Technology, Israel\newline	
	AMIR VAXMAN, Utrecht University, The Netherlands\newline	
	MIRELA BEN-CHEN, Technion - Israel Institute of Technology, Israel
	
	\bigskip
\end{@twocolumnfalse}
]

\noindent We present a new approach for computing planar hexagonal meshes that approximate a given surface, represented as a triangle mesh. Our method is based on two novel technical contributions. First, we introduce \emph{Coordinate Power Fields}, which are a pair of tangent vector fields on the surface that fulfill a certain \emph{continuity} constraint. We prove that the fulfillment of this constraint guarantees the existence of a seamless parameterization with quantized rotational jumps, which we then use to regularly remesh the surface. We additionally propose an optimization framework for finding Coordinate Power Fields, which also fulfill additional constraints, such as alignment, sizing and bijectivity. Second, we build upon this framework to address a challenging meshing problem: planar hexagonal meshing. To this end, we suggest a combination of conjugacy, scaling and alignment constraints, which together lead to planarizable hexagons. We demonstrate our approach on a variety of surfaces, automatically generating planar hexagonal meshes on complicated meshes, which were not achievable with existing methods.

\medskip

\noindent Additional Key Words and Phrases: geoemtry processing, planar hexagonal
meshing, parameterization, tangent vector fields

\medskip

\noindent \textbf{ACM Reference Format:}

\noindent Kacper Pluta, Michal Edelstein, Amir Vaxman, and Mirela Ben-Chen. 2021.
PH-CPF: Planar Hexagonal Meshing using Coordinate Power Fields - Supplemental
Material. \textit{ACM Trans. Graph.} 40, 4, Article 156 (August 2021), 4 pages.
https://doi.org/10.1145/3450626.3459770

\blfootnote{Authors' addresses: Kacper Pluta, Michal Edelstein, Mirela Ben-Chen, The Henry and Marilyn Taub Faculty of Computer Science, Technion - Israel Institute of Technology, Haifa, Israel, 3200003; Amir Vaxman, Princetonplein 5, De Uithof, 3584 CC Utrecht, The Netherlands.}

\blfootnote{\hrule \medskip Permission to make digital or hard copies of part or all of this work for personal or classroom use is granted without fee provided that copies are not made or distributed for profit or commercial advantage and that copies bear this notice and the full citation on the first page. Copyrights for third-party components of this work must be honored. For all other uses, contact the owner/author(s).\newline	
© 2021 Copyright held by the owner/author(s).\newline	
0730-0301/2021/8-ART156\newline	
https://doi.org/10.1145/3450626.3459770}

\section{From coordinate vector fields to gradient vector fields}
\label{appendix:uv2grad}
Let $r: \sU \subset \xR^2 \to \Omega \subset \sM$ be a regular parameterization of a planar domain to a patch on the surface. Let $\hat{u},\hat{v}$ be  the Cartesian unit orthogonal axes in the parameterization domain, and $\s{u},\s{v}$ be the coordinate functions on $\sU$, and set $U = dr(\hat{u}), V = dr(\hat{v})$, where $dr$ is the differential of $r$. 
Let $B_p$ be a local orthonormal basis of $\tpm$ at $p \tin \Omega$, $\uv{p}$ be the $2\times2$ matrix whose columns are the coefficients of $U_p,V_p$ in the basis $B_p$.
\begin{lemma}
	Let $X \tin \tpm$. Then we have:
	\begin{equation*}
	dr^{-1}(X) = \uvinv{p} X, \quad \forall X\tin \tpm, 
	\end{equation*}
	and
	\begin{equation*}
	\label{eq:gradientsA}
	(\nabla u)_p = \spmqty{1 & 0} \uvinv{p}, \quad (\nabla v)_p = \spmqty{0 & 1} \uvinv{p},
	\end{equation*}
	where all the coefficients of vectors in $\tpm$ are with respect to the local basis $B_p$.
\end{lemma}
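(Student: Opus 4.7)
The plan is to handle the two identities in turn, since the second is a direct consequence of the first combined with the chain rule.

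For the first identity, I would begin by observing that $dr_p : T_\s{p}\sU \to \tpm$ is a linear map between two-dimensional real vector spaces. In the domain, choose the basis $(\hat{u},\hat{v})$, and in $\tpm$ choose the orthonormal basis $B_p$. By the definition $U = dr(\hat{u})$ and $V = dr(\hat{v})$, the columns of the matrix representing $dr_p$ in these bases are precisely the coordinate vectors of $U_p,V_p$ with respect to $B_p$, i.e.\ the matrix is $\uv{p}$. Regularity of $r$ says $dr_p$ is invertible, hence $\uv{p}$ is invertible, and the matrix of the inverse map is $\uvinv{p}$, which yields $dr^{-1}(X) = \uvinv{p} X$ for every $X\tin\tpm$ (expressed in $B_p$).

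For the second identity, I would use that $u = \s{u}\circ r^{-1}$ and compute the directional derivative at $p$ in an arbitrary direction $X\tin\tpm$. Applying the chain rule and using that $\nabla \s{u} = \hat{u}$ on $\sU$, I get
\begin{equation*}
D_X u = D_{dr^{-1}(X)}\s{u} = \langle \hat{u},\, dr^{-1}(X)\rangle = \hat{u}^T \uvinv{p} X = \spmqty{1 & 0}\uvinv{p} X,
\end{equation*}
where I substituted the first identity. On the other hand, by definition of the gradient on $\Omega\subset\sM$, $D_X u = \langle (\nabla u)_p, X\rangle_p = (\nabla u)_p\, X$ when $(\nabla u)_p$ is written as a row vector in the orthonormal basis $B_p$. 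Since the equality holds for every $X$, comparing the two expressions gives $(\nabla u)_p = \spmqty{1 & 0}\uvinv{p}$. Repeating verbatim with $\s{v}$ and $\nabla \s{v}=\hat{v}$ yields the formula for $(\nabla v)_p$.

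There is no real obstacle here; the proof is essentially bookkeeping. The only small thing to be careful about is the row/column convention implicit in the statement: the gradient is represented as a row vector in $B_p$ so that it pairs with column-vector tangent vectors via the Euclidean inner product induced by the orthonormal basis. As long as one is consistent in identifying $\tpm$ with $\xR^2$ through $B_p$ and recalls that in this identification $\langle a,b\rangle_p = a^T b$, the chain-rule step is immediate and the two identities drop out.
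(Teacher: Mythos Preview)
Your proof is correct and follows essentially the same approach as the paper: the first identity is obtained by recognizing that $\uv{p}$ is the matrix of $dr_p$ in the chosen bases (the paper phrases this via an explicit decomposition $X=aU_p+bV_p$, but the content is identical), and the second identity is derived exactly as in the paper by pairing with an arbitrary $X$, applying the chain rule/pullback compatibility of the gradient, and substituting the first identity.
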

\begin{proof}
	Since $r$ is regular and $U,V$ are its coordinate vector fields, then $U_p,V_p$ are linearly independent and $\uv{p}$ is invertible. Let $(a,b) = \uv{p}^{-1}X$, or equivalently $X = \uv{p}(a,b) = aU_p + bV_p$. We additionally have $dr^{-1}(X) = adr^{-1}(U_p) + bdr^{-1}(V_p) = a\hat{{u}} + b\hat{{v}} = (a,b)^T$, where the first equality holds since $dr^{-1}$ is linear and the second holds by the definition of $U_p,V_p$. Thus, we have, as required, $dr^{-1}(X)=\uv{p}^{-1}X$. 
	
	To prove the second equation, consider again $X \tin \tpm$. We have that $\langle \nabla u, X \rangle_p = \langle \nabla \s{u}, dr^{-1}(X) \rangle_{r^{-1}(p)}= \langle \hat{u}, dr^{-1}(X) \rangle = \spmqty{1 & 0} dr^{-1}(X) = \spmqty{1 & 0} \uv{p}^{-1}X$. Here, the first equality holds since the inner product with the gradient of a function commutes with the pullback of the function. Again, since this holds for any $X$ we have, as required, $\nabla u = \spmqty{1 & 0} \uv{p}^{-1}$. The proof for $\nabla v$ is similar.
\end{proof}

\section{Conjugacy condition}
\label{appendix:conjugacy}
We refer to the notations in Figure~\ref{fig:dupin_plane_and_hex_notations} (right), where the hexagon is in the tangent plane of the central point $p_0$, the coordinates are with respect to $U_{p_0},V_{p_0}$, and the conjugacy is with respect to $S_{p_0}$. We drop the notation of the point to reduce clutter. The three conjugacy conditions corresponding to the three strips of hexes are:
\begin{equation}
\begin{aligned}
(A) \quad 0 &= \langle \bar{p}_1 + \bar{p}_6, \bar{p}_1 - \bar{p}_6 \rangle_S =\quad \langle U , \frac{1}{\sqrt{3}}V \rangle_{S} \Rightarrow \langle U, V \rangle_S = 0, \\
(B)  \quad 0 &= \langle \bar{p}_1 + \bar{p}_2, \bar{p}_1 - \bar{p}_2 \rangle_S =\quad \frac{1}{4} \langle U + \sqrt{3}V , -U + \frac{1}{\sqrt{3}}V \rangle_{S} = \\
& \frac{1}{4}\big (-\langle U,U \rangle_S -\sqrt{3}\langle V,U \rangle_S + \frac{1}{\sqrt{3}} \langle U, V\rangle_S + \langle V, V \rangle_{S} \big).\\
&\langle U, V \rangle_S = 0 \Rightarrow  \langle U, U \rangle_S = \langle V, V \rangle_S. \\
(C)  \quad  0 &= \langle \bar{p}_2 + \bar{p}_3, \bar{p}_2 - \bar{p}_3 \rangle_S =\quad \frac{1}{4} \langle -U + \sqrt{3}V , U + \frac{1}{\sqrt{3}}V \rangle_{S} = \\
& \frac{1}{4}\big (-\langle U,U \rangle_S +\sqrt{3}\langle V,U \rangle_S - \frac{1}{\sqrt{3}} \langle U, V\rangle_S + \langle V, V \rangle_{S} \big).\\
&\langle U, V \rangle_S = 0 \Rightarrow   \langle U, U \rangle_S = \langle V, V \rangle_S. \\
\end{aligned}
\end{equation}

Hence, conditions $(B),(C)$ are identical, and given condition (A) they both reduce to $\langle U, U \rangle_S = \langle V, V \rangle_S$, as required.

\section{Proof of Theorem~\ref{thm:cpf}}
\label{appendix:cpf_proof}
\setcounter{theorem}{4}
\begin{theorem}
	Let $U,V$ be discrete CPFs of degree $N$. Then there exist functions $u^{l}, v^{l} \tin \xR^{\nof}, \, l = [1,2,3],$ with $u_i, v_i$ piecewise linear per face $t_i \tin \mF$ (yet discontinuous between faces), such that:
	\begin{enumerate}
		\item $\nabla (u_i)  = \spmqty{1 & 0} \uvinv{i}, \, \nabla (v_i) = \spmqty{0 & 1} \uvinv{i}, \, \forall t_i \tin \mF$.
		\item The triangle $\s{t}_i = (\s{p}_i^1, \s{p}_i^2, \s{p}_i^3) \tin \xR^{2\times3}$ with coordinates $\s{p}_i^l = (u^l_i, v^l_i)$ is positively oriented.
		\item Let $e_{ij} \tin \mEi$, with $e_{ij} = t_i \cap t_j$, and set $\alpha_i, \beta_i \tin [1,..,3]$ the indices of the vertices of $e_{ij}$ in $t_i$, and similarly for $t_j$.
		Thus, $e_{ij} =  {p}_i^{\alpha_i} - {p}_i^{\beta_i} = {p}_j^{\alpha_j} - {p}_j^{\beta_j} \tin \xR^3$. 
		Then there exists $\s{k}_{ij} \tin \xZ$ such that $R^{2\pi\s{k}_{ij}/N} (\s{p}_i^{\alpha_i} - \s{p}_i^{\beta_i}) = \s{p}_j^{\alpha_j} - \s{p}_j^{\beta_j}$.
	\end{enumerate}
\end{theorem}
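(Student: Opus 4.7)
The plan is to construct the per-face affine functions $u_i, v_i$ by integrating the prescribed gradients, and then verify the three conclusions in order, with only the third requiring the CPF hypothesis in a non-trivial way.

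First, for each face $t_i \tin \mF$, I would fix an arbitrary base value (for instance $u_i(p_i^1) = v_i(p_i^1) = 0$) and define $u_i, v_i$ as the unique affine functions on $t_i$ whose gradients in the local basis $B_i$ are $\spmqty{1 & 0} \uvinv{i}$ and $\spmqty{0 & 1} \uvinv{i}$, respectively. Reading off the three vertex values defines the entries $u_i^l, v_i^l$, and condition~(1) then holds by construction. A direct calculation from this definition yields the key identity
\begin{equation*}
\s{p}_i^\alpha - \s{p}_i^\beta \;=\; \uvinv{i}\,(p_i^\alpha - p_i^\beta),
\end{equation*}
for any pair of vertex indices $\alpha, \beta$, where $p_i^\alpha - p_i^\beta$ is expressed in the local basis $B_i$ (which is well defined since the vector is tangent to $t_i$).

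For condition~(2), the LICO hypothesis gives $\det \uv{i} = \langle U_i, -JV_i \rangle > 0$, so $\uvinv{i}$ preserves orientation on $\xR^2$. Since $B_i$ is oriented consistently with the face orientation of $t_i$, the 3D triangle $t_i$ expressed in $B_i$-coordinates is positively oriented, and applying the orientation-preserving map $\uvinv{i}$ to it yields the positively oriented triangle $\s{t}_i$, as required. For condition~(3), the identity above applied to $\alpha = \alpha_i, \beta = \beta_i$ gives $\s{p}_i^{\alpha_i} - \s{p}_i^{\beta_i} = \uvinv{i}\,e_{ij} = dr_i^{-1}(e_{ij})$, and likewise on $t_j$. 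The discrete CPF hypothesis~\eqref{eq:cpf} then states that these two vectors, identified with complex numbers, have equal $N$-th powers. Both are nonzero, since $e_{ij} \neq 0$ and $\uv{i},\uv{j}$ are invertible by LICO, so their ratio is an $N$-th root of unity, yielding an integer $\s{k}_{ij} \tin \xZ$ with $R^{2\pi \s{k}_{ij}/N}(\s{p}_i^{\alpha_i} - \s{p}_i^{\beta_i}) = \s{p}_j^{\alpha_j} - \s{p}_j^{\beta_j}$.

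The main care point is bookkeeping rather than mathematics: the shared edge $e_{ij}$ must be taken with the same geometric orientation when pulled back from both sides (as already stipulated in Definition~\ref{def:cpf_discrete}), each $p_i^\alpha - p_i^\beta$ must be consistently converted between its 3D embedding and its 2D representation in $B_i$, and the usual identification $(x,y) \leftrightarrow x+iy$ must be used so that multiplication by a unit complex number realizes the planar rotation $R^\theta$. Once these conventions are pinned down, the entire argument reduces to elementary linear algebra together with the fact that two nonzero complex numbers with equal $N$-th powers differ by an $N$-th root of unity.
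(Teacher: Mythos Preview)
Your proposal is correct and follows essentially the same route as the paper's proof: both fix $\s{p}_i^1=(0,0)$ and set $\s{p}_i^l=\uvinv{i}(p_i^l-p_i^1)$, then verify~(1) from the identity $\s{p}_i^\alpha-\s{p}_i^\beta=\uvinv{i}(p_i^\alpha-p_i^\beta)$, derive~(2) from $\det\uvinv{i}>0$ (LICO) together with positive orientation of the input mesh, and obtain~(3) directly from the CPF condition~\eqref{eq:cpf}. Your presentation is slightly more explicit in noting that the pulled-back edge vectors are nonzero (so the complex ratio is well defined), which the paper leaves implicit.
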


\begin{proof}
	Let $t_i \tin \mF$, with embedding $(p_i^1, p_i^2, p_i^3) \tin \xR^{3\times3}$. Given $U_i,V_i$, set 
	\begin{equation*}
	\s{t}_i = (\s{p}_i^1, \s{p}_i^2, \s{p}_i^3) = \big (\spmqty{0 \\ 0}, dr_i^{-1}e_i^{21}, dr_i^{-1}e_i^{31} \big) = \left (\spmqty{u_i^1 \\ v_i^1}, \spmqty{u_i^2 \\ v_i^2}, \spmqty{u_i^3 \\ v_i^3} \right),
	\end{equation*}
	where $e_i^{21} = p_i^2 - p_i^1$ and similarly for $e_i^{31}$.
	\begin{enumerate}
		\item By the definition of $\s{t}_i$, we have that $(\spmqty{1 & 0} dr_i^{-1}e_i^{21} = u_i^2 - u_i^1$ and $(\spmqty{1 & 0} dr_i^{-1}e_i^{31} = u_i^3 - u_i^1$. Since $\nabla u_i$ has the same projections on $e_i^{21}, e_i^{31}$, and since the projections uniquely define the vector as $e_i^{21}, e_i^{31}$ are linearly independent (since the mesh $\mM$ is positively oriented and not degenerate), we have the result. The claim for $v$ is similar.
		
		\item $\det( dr_i^{-1}e_i^{21}, dr_i^{-1}e_i^{31}) =  \det( dr_i^{-1})\det\big (\spmqty{e_i^{21} & e_i^{31}}\big) > 0$, and thus $\s{t}_i$ is positively oriented. Note that $\det( dr_i^{-1}) > 0$ since $U,V$ are LICO, and $\det\big (\spmqty{e_i^{21} & e_i^{31}}\big) > 0$ since the input mesh is positively oriented.
		\item The two triangles $\s{t}_i$ and $\s{t}_j$ have coordinates:
		\begin{equation*}
		\begin{aligned}
		\s{t}_i &= (\s{p}_i^1, \s{p}_i^2, \s{p}_i^3) &= \big (\spmqty{0 \\ 0}, dr_i^{-1}e_i^{21}, dr_i^{-1}e_i^{31} \big), \\
		\s{t}_j &= (\s{p}_j^1, \s{p}_j^2, \s{p}_j^3) &= \big (\spmqty{0 \\ 0}, dr_j^{-1}e_j^{21}, dr_j^{-1}e_j^{31} \big).
		\end{aligned}
		\end{equation*}
		
		Thus, an edge $\s{e}_i^{\alpha\beta} = \s{p}_i^\alpha - \s{p}_i^{\beta}$ of $\s{t}_i$ is given by $\s{e}_i^{\alpha\beta} = dr_i^{-1}e_i^{\alpha\beta}$, and similarly for $\s{t}_j$.
		Let $e_{ij} = p_i^{\alpha_i} - p_i^{\beta_i} = p_j^{\alpha_j} - p_j^{\beta_j}$. Then, from the CPF constraint, we have $\big(dr_i^{-1}( p_i^{\alpha_i} - p_i^{\beta_i})\big)^N = \big(dr_j^{-1}( p_j^{\alpha_j} - p_j^{\beta_j})\big)^N$, and thus  $\big( \s{p}_i^{\alpha_i} - \s{p}_i^{\beta_i}\big)^N = \big( \s{p}_j^{\alpha_j} - \s{p}_j^{\beta_j}\big)^N$, and the result follows.
		
	\end{enumerate}
	
\end{proof}

\section{Gradient of the CPF penalty objective}
\label{appendix:cont_constr}
We have: 
$$E^{\Psi_c}_{ij}(U,V,\pf) = \bigl|\bigl(dr_i^{-1}(e_{ij})\bigr)^N - z_{ij}\bigr|^2 + \bigl|\bigl(dr_j^{-1}(e_{ij})\bigr)^N - z_{ij}\bigr|^2,$$
where, by Equation~\eqref{eq:pullback_from_uv}, we have $dr_i^{-1} = \frac{1}{s_i} \mvu{i}^T J^T$, and $s_i =  \langle U_i , -J V_i \rangle$.
Here the $2D$ vectors are considered as complex numbers for the definition of the $N$-th power and the absolute value.
Clearly, the objective $E^{\Psi_c}_{ij}$ is local in the faces $i,j$, and thus we only compute derivatives with respect to $U_i, V_i, U_j, V_j$ and $z_{ij}$. 
Since the expressions are the same for $i$ and $j$, we provide only the expressions for $i$. 

Define the following auxiliary functions:
\begin{equation}
\begin{aligned}
\label{eq:aux_fun}
h(w,z):& \,\xR^{2\times2} \to \xR^2, \quad &h(w,z) &= w-z, \\
f(a,b):& \,\xR^2 \to \xR^2, \quad &f(a,b) &= \spmqty{\Re((a + ib)^6) & \Im((a + ib)^6)}, \\
s(x,y):& \,\xR^{2x2} \to \xR, \quad &s(x,y) &= \langle x ,-J y \rangle, \\
g_e(x,y):& \,\xR^{2\times2} \to \xR^2, \quad &g_e(x,y) &= \frac{1}{s(x,y)} \spmqty{-y & x}^T J^T e. \\
\end{aligned}
\end{equation}
Then, the objective for the face $i$, with respect to the edge $e_{ij}$ between faces $i,j$ is:
\begin{equation}
E^{\Psi_c}_{ij}(U_i,V_i,\pf_{ij}) = \norm{ \, h \bigl( f\bigl( g_{e_{ij}} (U_i, V_i) \bigr ), z_{ij} \bigr )  \, }^2.
\end{equation}

To reduce clutter, we drop all the subscripts, setting: $x = U_i, y = V_i, z =  \spmqty{\Re(z_{ij}) & \Im(z_{ij})}$. Then:
\begin{equation}
E(x,y,z) = \norm{\, h \bigl( f\bigl( g (x, y) \bigr ), z \bigr) \, }^2.
\end{equation}

Since we have $\partial E = 2 h \, \partial h$, we detail only the derivatives of $h$. Set $w(x,y) =  f\bigl( g (x, y) \bigr )$:
\begin{equation}
\pdv{z} h \bigl( w (x, y) , z \bigr) = -\xI_2
\end{equation}
\begin{equation}
\begin{aligned}
\pdv{x} h \bigl( w (x, y) , z \bigr) = \pdv{h}{w} \pdv{w(x,y)}{x}  = \xI_2 \pdv{w(x,y)}{x} = \pdv{w(x,y)}{x}, \\
\pdv{y} h \bigl( w (x, y) , z \bigr) = \pdv{h}{w} \pdv{w(x,y)}{y}  = \xI_2 \pdv{w(x,y)}{y} = \pdv{w(x,y)}{y}. \\
\end{aligned}
\end{equation}

Now for the derivatives of $w = f \circ g$, we have:
\begin{equation}
\begin{aligned}
\pdv{w(x,y)}{x} &= \pdv{f(g(x,y)))}{x} = J_f (g(x,y)) \pdv{g(x,y)}{x}(x,y), \\
\pdv{w(x,y)}{y} &= \pdv{f(g(x,y)))}{y} = J_f (g(x,y)) \pdv{g(x,y)}{y}(x,y),
\end{aligned}
\end{equation}
where $J_f, \pdv*{g}{x}, \pdv*{g}{y} \tin \xR^{2 \times 2}$.

\subsection{Jacobian of $f$}
We first write $f$ it explicitly in terms of its real variables.
Consider the polar representation $a + ib = r \cos \theta  + i r\sin \theta $. Then, $(a+ib)^6 = r^6 \cos 6 \theta + i r^6 \sin 6  \theta = x + i y$. Now, we have~\cite[Ch. 6.1.13]{zwillinger_crc_2003}:
\begin{equation}
\begin{aligned}
\cos 6 \theta &= 32 \cos^6 \theta  - 48 \cos^4 \theta + 18\cos^2 \theta - 1 \\
\sin 6 \theta &= \sin\theta \pqty{32 \cos^5\theta - 32 \cos^3 \theta + 6\cos\theta}. \\
\end{aligned}
\end{equation}
Hence:
\begin{equation}
\begin{aligned}
x = r^6 \cos 6 \theta &= 32 r^6 \cos^6 \theta - 48 r^2 r^4 \cos^4 \theta + 18 r^4 r^2\cos^2 \theta - r^6, \\
y = r^6 \sin 6 \theta &= r \sin\theta \pqty{32 r^5 \cos^5 \theta - 32 r^2 r^3 \cos^3 \theta  + 6 r^4 r\cos \theta}. \\
\end{aligned}
\end{equation}
Plugging in $a = r\cos \theta, b = r\sin \theta$ we get:
\begin{equation}
\begin{aligned}
x &= 32 a^6 - 48 r^2 a^4 + 18 r^4 a^2 - r^6, \\
y &= b \pqty{32 a^5 - 32 r^2 a^3 + 6 r^4 a}. \\
\end{aligned}
\end{equation}
Finally, plugging in $r^2 = a^2 + b^2$ we have:
\begin{equation}
\begin{aligned}
x &= 32 a^6 - 48 (a^2 + b^2) a^4 + 18 (a^2 + b^2)^2 a^2 - (a^2 + b^2)^3, \\
y &= b \pqty{32 a^5 - 32 (a^2 + b^2) a^3 + 6 (a^2 + b^2)^2 a}. \\
\end{aligned}
\end{equation}
After simplification we get:
\begin{equation}
\begin{aligned}
\label{eq:pow6_cart}
x &= a^6 - 15 a^4 b^2 + 15 a^2 b^4 - b^6, \\
y &= 2 a b \pqty{3 a^4 - 10 a^2 b^2 + 3 b^4}. \\
\end{aligned}
\end{equation}

Equation~\eqref{eq:pow6_cart} provides an expression in real numbers of the output of $f$ in terms of its input.
The derivatives follow:
\begin{equation}
\begin{aligned}
\label{eq:pow6_cart_grad}
\pdv{x}{a} &= \pdv{y}{b} = 6 a(a^4 - 10 a^2 b^2 + 5 b^4), \\
\pdv{x}{b} &= -\pdv{y}{a} = -6b(5 a^4 - 10 a^2 b^2 + b^4). \\
\end{aligned}
\end{equation}

Note that if complex numbers are used for the implementation, and $f(z) = z^6$, then the complex derivative is $\pdv{f}{z} = 6z^5$, and the real derivative is then $\pdv{x}{a} = \pdv{y}{b} = \Re (\pdv{f}{z})$ and $\pdv{x}{b} = -\pdv{y}{a} = -\Im (\pdv{f}{z})$.

\subsection{Jacobian of $g$}
By the definition of $g$ we have that $g(x,y) = \spmqty{x & y}^{-1} e$, where $e$ is constant with respect to $x,y$. Given a matrix $A$ we have~\cite[Ch. 5.1.10.2.5]{zwillinger_crc_2003}
\begin{equation}
\begin{aligned}
\pdv{(A^{-1} B)}{A_{ij}} = -A^{-1} E_{ij} A^{-1} B,
\end{aligned}
\end{equation}
where $E_{ij}$ is a matrix which is zero everywhere except at $(i,j)$, where it is $1$. 

Let $A^{-1} = \spmqty{ p \\ q}$, where $p,q\tin\xR^{1\times2}$. Then:
\begin{equation}
\begin{aligned}
\pdv{(A^{-1} B)}{A_{11}} &= -A^{-1} \spmqty{1 & 0 \\ 0 & 0} A^{-1} B = -A^{-1} \spmqty{ p B \\ 0}, \\
\pdv{(A^{-1} B)}{A_{21}} &= -A^{-1} \spmqty{0 & 0 \\ 1 & 0} A^{-1} B = -A^{-1} \spmqty{ 0 \\ pB},  \\
\pdv{(A^{-1} B)}{A_{12}} &= -A^{-1} \spmqty{0 & 1 \\ 0 & 0} A^{-1} B = -A^{-1} \spmqty{ q B \\ 0}, \\
\pdv{(A^{-1} B)}{A_{22}} &= -A^{-1} \spmqty{0 & 0 \\ 0 & 1} A^{-1} B = -A^{-1} \spmqty{ 0 \\ qB},  \\
\end{aligned}
\end{equation}
Hence:
\begin{equation}
\begin{aligned}
\label{eq:inv_deriv}
\pdv{(A^{-1} B)}{A_{:,1}} &= -(p B) A^{-1}, \quad \pdv{(A^{-1} B)}{A_{:,2}} &= -(q B) A^{-1}.
\end{aligned}
\end{equation}
We have that $\spmqty{x y}^{-1} = \frac{1}{s(x,y)} \spmqty{-(Jy)^T \\ (Jx)^T}$, therefore:
\begin{equation}
\begin{aligned}
\pdv{g(x,y)}{x} &=  \frac{e^T Jy}{s^2(x,y)}   \spmqty{-y & x}^T J^T &= -\bigl(\spmqty{1 & 0} g(x,y) \bigr) \spmqty{x & y}^{-1}, \\
\pdv{g(x,y)}{y} &= -\frac{e^T Jx}{s^2(x,y)}  \spmqty{-y & x}^T J^T &= -\bigl(\spmqty{0 & 1} g(x,y) \bigr) \spmqty{x & y}^{-1}.
\end{aligned}
\end{equation}

\section{Gradient of the smoothness objective}
Similarly to the continuity constraint, we have:
$$E^s_{ij}(U,V) = \bigl|\bigl(dr_i^{-1}(Je_{ij})\bigr)^N -  \bigl(dr_j^{-1}(Je_{ij})\bigr)^N\bigr|^2,$$
where, by Equation~\eqref{eq:pullback_from_uv}, we have $dr_i^{-1} = \frac{1}{s_i} \mvu{i}^T J_i^T$, and $s_i =  \langle U_i , -J V_i \rangle$.
Clearly, the objective $E^s_{ij}$ is local in the faces $i,j$, and thus we only compute derivatives with respect to $U_i, V_i, U_j, V_j$. 
Using the same auxiliary functions $h,f,s$ as in Equation~\eqref{eq:aux_fun}, and redefining $g$ as:
\begin{equation}
\begin{aligned}
g_e(x,y):& \,\xR^{2\times2} \to \xR^2, \quad &g_e(x,y) &= \frac{1}{s(x,y)} \spmqty{-y & x}^T e, \\
\end{aligned}
\end{equation}
we have 
\begin{equation}
E^s_{ij}(U_i,V_i,U_j,V_j) = \norm{ \, h \Bigl( f\bigl( g_{e_{ij}} (U_i, V_i)  \bigr ),   f\bigl( g_{e_{ij}} (U_j, V_j) \bigr ) \Bigr )  \, }^2.
\end{equation}
Taking again $w(x,y) = f(g(x,y))$, and $x_i = U_i, y_i = V_i, x_j = U_, y_j = V_j$ we have:
\begin{equation}
E(x_i,y_i,x_j,y_j) = \norm{ \, h \Bigl( w (x_i, y_i)  \bigr ),   w (x_j, y_j) \bigr ) \Bigr )  \, }^2,
\end{equation}
and
\begin{equation}
\begin{aligned}
\pdv{x_i} h_{ij}  &= w_x (x_i,y_i) = J_f (g(x_i,y_i)) g_x(x_i,y_i), \\
\pdv{y_i} h_{ij}  &= w_y(x_i,y_i) = J_f (g(x_i,y_i)) g_y(x_i,y_i), \\
\pdv{x_j} h_{ij}  &= -w_x(x_j,y_j) = -J_f (g(x_j,y_j)) g_x(x_j,y_j), \\
\pdv{y_j} h_{ij}  &= -w_y(x_j,y_j) = -J_f (g(x_j,y_j)) g_y(x_j,y_j), \\
\end{aligned}
\end{equation}
where $h_{ij} =  h \bigl( w (x_i, y_i) ,  w (x_j, y_j) \bigr), w_x = \pdv*{w}{x}, w_y = \pdv*{w}{y}, g_x = \pdv*{g}{x}, g_y = \pdv*{g}{y}$.

Following Equation~\eqref{eq:inv_deriv}, the definition of $g$ and since $J^T J = \xI_2$, we have:
\begin{equation}
\begin{aligned}
g_x &=  \frac{e^T y}{s^2(x,y)}   \spmqty{-y & x}^T J^T &= -\bigl(\spmqty{1 & 0} g(x,y) \bigr) \spmqty{x & y}^{-1}, \\
g_y &= -\frac{e^T x}{s^2(x,y)}  \spmqty{-y & x}^T J^T &= -\bigl(\spmqty{0 & 1} g(x,y) \bigr) \spmqty{x & y}^{-1}.
\end{aligned}
\end{equation}

\section{Planarization optimization problem}
\label{appendix:planarization}
We solve the following optimization problem:
\begin{mini}|l|
	{v \in \xR^{|\hV|\times3}, n \in \xR^{|\hF|\times 3}}{\lambda_d E_d + \lambda_s E_s + \lambda_c
		E_c + \lambda_l E_l}
	{}{}
	\addConstraint{\lambda_p \sum_{f \in \hF} \sum_{e \in f} \langle n_f, e \rangle^2}{=0, \quad &&
		\text{(planarity)},}{}
	\label{eq:planarOBJ}
\end{mini}
where
\begin{equation*}
\begin{aligned}
E_d & = \sum_{v_i\in \hV}{\|v_i - \hat{v}_i\|^2} & \text{(distance to surface)} \\
E_s & = \sum_{v_i\in f, \atop f \in \hF}{\|(c_f  - v_i) - (v_{i+\frac{|f|}{2}} - c_f)\|^2} & \text{(symmetry)}\\
E_c & = \sum_{v_i \in \hV}{\langle n_{\tilde{v}}, (v_i -\tilde{v}_i)\rangle^2} & \text{(tangential drift)}\\
E_l & = \sum_{e_{ij} \in \hE}{\phi(\|e_{ij}\|)} +  \sum_{v_i \in f, \atop f \in \hF}{\phi(\|v_i - \bar{v}_{i+\frac{|f|}{2}}\|)} & \text{(edge and diagonal lengths)}
\end{aligned}
\end{equation*}
with an auxiliary constraint $\forall f\tin \hF,\ \|n_f\|=1$, that is enforced using a homogeneous
parametrization.  Here, $c_f$ is the barycenter of $f$, ${\tilde{v}}$ denotes the position of $v$ in
the previous iteration and $n_{\tilde{v}}$ is computed by averaging normal vectors at faces adjacent
to $\tilde{v}$.  Further, $\hat{v}$ is a projection of ${\tilde{v}}$ onto the input triangle mesh
$\mM$,  and $e$ is an edge vector. Finally $\phi(\cdot)$ is a barrier function as defined in
Equation \eqref{eq:barrierTerm}. For quad meshes, we add the fairness term $\lambda_f E_f = \sum_{v\in\qV}{\| v - (w_1 + \dots + w_n)/n\|^2}$,
where $w_1, \dots, w_n$ are the nearest neighbors of $v$, and $\lambda_f = 5$.  

We use the Ceres Solver~\cite{ceres-solver} for the optimization, where we enforce the barrier on the lengths by multiplying $\lambda_l$ by $2$ after every $5$ 
solver iterations. We initialize the weights as follows: $\lambda_{p} = \lambda_{c} = 0.01$, $\lambda_l = 0.001$, $\lambda_{d} = 0.5,
\lambda_{s} = 1$. We stop the optimization when $\max_{f\in\hF}(\planerr(f))$ is smaller than a
user-prescribed threshold, or when the maximal number of iterations is reached. 
We let the solver perform a maximum number of $250$ internal iterations and then update $\tilde{v}$, $c_f$, and $n_{\tilde{v}}$, and the weight $\lambda_p$ .

\end{document}